\documentclass[journal]{IEEEtran}

%\documentclass{smjour}
%\usepackage{rstruct}
%\documentclass[12pt,journal,onecolumn]{IEEEtran}

%% INFOCOM 2010 addition: 
\makeatletter
%\def\ps@headings{%
%\def\@oddhead{\mbox{}\scriptsize\rightmark \hfil \thepage}%
%\def\@evenhead{\scriptsize\thepage \hfil \leftmark\mbox{}}%
%\def\@oddfoot{}%
%\def\@evenfoot{}}
%\makeatother
%\pagestyle{headings}
%
%\usepackage{appendix}

\usepackage{fixltx2e}
\usepackage{xspace}
\usepackage[cmex10]{amsmath}
\usepackage{amsfonts}
\usepackage{amssymb}
\usepackage[]{hyperref}
\usepackage[all]{hypcap}
\usepackage{amsthm}
\usepackage{url}
\usepackage{ifthen}

\usepackage{array}
\usepackage{color}
\usepackage[absolute,overlay]{textpos}
% Setup TikZ
\usepackage{tikz}
%\usetikzlibrary{arrows}
\tikzstyle{block}=[draw opacity=0.7,line width=1.4cm]
\usetikzlibrary{arrows,decorations,positioning,backgrounds,fit,shapes,matrix}

%\usetikzlibrary{backgrounds}

%\ifCLASSINFOpdf
%\usepackage[pdftex]{graphicx}
%\fi

\def\push{\texttt{PUSH}\xspace}
\def\pull{\texttt{PULL}\xspace}
\def\ex{\texttt{EXCHANGE}\xspace}
\def \exchange{\texttt{EXCHANGE}\xspace}
\def\proto{$\mathcal{P}$\xspace}
\def\algo{$\mathcal{A}$\xspace}
\def\P{\mathcal{P}\xspace}
\def\A{\mathcal{A}\xspace}

\newtheorem{theorem}{Theorem}
\newtheorem*{theorem_jackson}{Jackson's Theorem}
\newtheorem*{theorem_main}{Theorem 1}
\newtheorem{lemma}{Lemma}
\newtheorem{corollary}{Corollary}
\newtheorem{dfn}{Definition}
\newtheorem{clm}{Claim}

\newcommand{\rmnum}[1]{\romannumeral #1}
\newcommand{\Rmnum}[1]{\expandafter\@slowromancap\romannumeral #1@}

% correct bad hyphenation here
\hyphenation{op-tical net-works semi-conduc-tor}

\pdfpageattr{/Group <</S /Transparency /I true /CS /DeviceRGB>>}

\begin{document}

\title{Bounds for Algebraic Gossip on Graphs}

\author{
\IEEEauthorblockN{ Michael Borokhovich \;\;\; Chen Avin \;\;\;  Zvi Lotker}
%\IEEEauthorblockA{Department of
%Communication Systems Engineering\\
%Ben-Gurion University of the Negev, Israel\\
%\{borokhom, avin, zvilo\}@cse.bgu.ac.il}}

\thanks{Authors are with the Department of
Communication Systems Engineering, Ben-Gurion University of the Negev, Israel, e-mails:\{borokhom, avin, zvilo\}@cse.bgu.ac.il.}
}

%M. Shell is with the Department
%of Electrical and Computer Engineering, Georgia Institute of Technology, Atlanta,
%GA, 30332 USA e-mail: (see http://www.michaelshell.org/contact.html).}

\maketitle

%%%%%%%%%%%%%%%%%%%%%%%%%%%%%%%%%%%%%%%%%%%%%%%%%%%%%%%%%%%%%%%%

\begin{abstract}
We study the stopping times of gossip algorithms for network coding. We analyze algebraic gossip (i.e., random linear coding) and consider three gossip algorithms for information spreading: Pull, Push, and Exchange. 
The stopping time of algebraic gossip is known to be linear for the complete graph, but the question of determining a tight upper bound or lower bounds for general graphs is still open. 
We take a major step in solving this question, and prove that algebraic gossip on any graph of size $n$ is $O(\Delta n)$ where $\Delta$ is the maximum degree of the graph. This leads to a tight bound of $\Theta(n)$ for bounded degree graphs and an upper bound of $O(n^2)$ for general graphs. We show that the latter bound is tight by providing an example of a graph with a stopping time of $\Omega(n^2)$.
Our proofs use a novel method that relies on Jackson's queuing theorem to analyze the stopping time of network coding; this technique is likely to become useful for future research.
\end{abstract}

\begin{keywords}
Gossip, Algebraic Gossip, Network Coding, Gossip Algorithms,
Network Capacity.
\end{keywords}

%%%%%%%%%%%%%%%%%%%%%%%%%%%%%%%%%%%%%%%%%%%%%%%%%%%%%%%%%%%%%%%%

\section{Introduction}\label{sec:introduction}

Randomized gossip-based protocols are attractive due to their locality, simplicity, and structure-free nature, and have been offered in the literature for various tasks, such as ensuring database consistency and computing aggregate information \cite{KarpEtAl00a,Kempe2003Gossip,Boyd2006Randomized}.
Consider the case of a connected network with $n$ nodes, each holding a value it would like to share with the rest of the network. Motivated by wireless networks and limited resource sensor motes, in recent years researchers have studied the use of randomized gossip algorithms together with network coding for this multicast task \cite{Medard2002Beyond,LiYeuCai03}. The use of network coding protocols for multicast has received growing attention due to the ability of such protocols to significantly increase network capacity. For a basic network coding example, see \cite{DBLP:journals/ccr/FragouliBW06}.

In this work we consider \emph{algebraic gossip}, a gossip-based network coding protocol known as random linear coding \cite{Ho2003The-benefits}.
In the discussion on gossip-based protocols we distinguish between the \emph{gossip algorithm} and the \emph{gossip protocol}.
A gossip algorithm is a communication scheme in which at every timeslot, a random node chooses a random neighbor to communicate with. We consider three known gossip algorithms: \push: a message is sent \emph{to} the neighbor, \pull: a message is sent \emph{from} the chosen neighbor, and \ex: the two nodes exchange messages. The gossip protocol, on the other hand, determines the \emph{content} of messages sent. In algebraic gossip protocol, the content of the messages is a random linear combination of all messages stored by the sending node. Once a node has received enough independent messages (independent linear equations) it can solve the system of linear equations and discover all the initial values of all other nodes. 

%It has been proved \cite{LiYeuCai03} that network coding can improve throughput of the network by better sharing of the network resources. Let us look at the following example \cite{DBLP:journals/ccr/FragouliBW06}. Consider a \emph{Butterfly Network} (see Figure \ref{fig:fig/nc_example.pdf}) with sources $S_1$ and $S_2$, each is wishing to multicast to both $R_1$ and $R_2$. All links have capacity $1$. Without network coding, the maximum achievable source-destination rate (assuming both rates are equal) is $1.5$, due to a bottleneck at node $C$. Using a simple linear network coding, we can ``xor'' the information coming from $S_1$ and $S_2$ at the node $C$. By doing so, each receiver will obtain two linear equations: $R_1$ will get: $x_1$ and $x_1+x_2$, and $R_2$ will get: $x_2$ and $x_1+x_2$. Now, each receiver is able to solve this simple linear system and discover $x_1$ and $x_2$. It is clear that the resulting throughput is now $2$.

We study the performance of algebraic gossip on arbitrary network topologies, where information is disseminated from all nodes in the network to all nodes, i.e., all-to-all dissemination. Previously, algebraic gossip was considered with \push and \pull gossip algorithms; here we also study the use of \ex, which can lead to significant improvements for certain topologies (as we show).
Our main goal is to find tight bounds for the stopping time of the algebraic gossip protocol, both in expectation and with high probability (w.h.p.), i.e., with probability of at least $1-\frac{1}{n}$. 

The stopping time question, i.e., bounding the number of rounds until protocol completeness, has been addressed in the past. Deb \emph{et al.} \cite{Deb2006Algebraic} studied algebraic gossip using \pull and \push on the complete graph and showed a tight bound of $\Theta(n)$, a linear stopping time, both in expectation and with high probability.
Boyd \emph{et al.} \cite{Boyd2006Randomized, Boyd2005Gossip} studied the stopping time of a gossip protocol for the \emph{averaging problem} using the \ex algorithm. They gave a bound for symmetric networks that is based on the second largest eigenvalue of the transition matrix or, equally, the mixing time of a random walk on the network, and showed that the mixing time captures the behavior of the protocol. Mosk-Aoyama and Shah \cite{Mosk-Aoyama2006Information} used a similar approach to \cite{Boyd2006Randomized} and \cite{Boyd2005Gossip} to analyze algebraic gossip on arbitrary networks. They consider symmetric stochastic matrices that (may) lead to a non-uniform gossip and gave an upper bound for the \pull algorithm that is based on a measure of conductance of the network. As the authors mentioned, the offered bound is not tight, which indicates that the conductance-based measure does not capture the behavior of the protocol.

A recent independent work by Vasudevan and Kudekar \cite{vasudevan-2009} also offered the use of \ex together with algebraic gossip. Moreover, the authors give a uniform strong bound on algebraic gossip for arbitrary networks: $O(n\log n)$ in expectation and  $O(n\log^2n)$ with high probability.
%The worst case performance of algebraic gossip was not previously addressed in the literature.
The question about a worst case topology for algebraic gossip was not previously addressed in the literature.

%\begin{figure}[ht]
%\centering
%\includegraphics[width=3in,clip=true, viewport=0.1in 1.7in 11.5in 7.5in]{fig/nc_example.pdf}
%\caption{(a) -- Without network coding, bottleneck at $C$ brings the capacity to 1.5. (b) -- With network coding, $C$ transmits at each time unit information that is \emph{helpful} to both receivers, thus the capacity is 2. \label{fig:fig/nc_example.pdf}}
%\end{figure}

\subsection*{Overview of Our Results}

The main contribution of this paper is new bounds for the stopping time of algebraic gossip on arbitrary graphs.\footnote{The short version of this paper appeared in the Proceedings ISIT 2010 \cite{Borokhovich2010Tight}.} Our bounds are tight for many graph families; moreover, for almost any chosen maximum degree there exist graphs for which the bounds are tight.
First, in Theorem \ref{thm:lower_bound_any_graph} we disprove the results of \cite{vasudevan-2009} by providing a graph for which algebraic gossip takes $\Omega(n^2)$ rounds. 
Our main result then, Theorem \ref{thm:upper_bound_any_graph}, gives an upper bound of $O(\Delta n)$ for the stopping time of algebraic gossip on any graph, where $\Delta$ is the maximum degree in the graph. 

%\begin{theorem_main}
\begin{theorem}
\label{thm:upper_bound_any_graph}
For the asynchronous (synchronous) time model and for any graph $G$ of size $n$ with maximum degree $\Delta$, the stopping time of algebraic gossip is $O(\Delta n)$ rounds both in expectation and with high probability.
\end{theorem}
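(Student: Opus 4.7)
The plan is to reduce the algebraic gossip stopping time to a quantity that can be analyzed with Jackson's theorem on a suitable open queuing network. I will work first in the asynchronous (Poisson clock) model, where each node wakes up at rate $1$; the synchronous case follows by standard Poissonization since $\Delta n \geq n$.

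First I would reduce from the algebraic dynamics to a combinatorial spreading process. Fix a sink node $v$ and let $W_u$ denote the subspace spanned by the coded vectors currently held at $u$. For the algebraic protocol over a field of size $q$, a transmission from $u$ to $w$ is \emph{helpful} (i.e., strictly enlarges $W_w$) whenever $W_u \not\subseteq W_w$, and this happens with probability at least $1 - 1/q$. Taking $q = \mathrm{poly}(n)$ and union-bounding over the $O(\Delta n^2)$ transmissions that will be analyzed, I may assume that every such transmission is helpful. Thus it suffices to upper bound the number of rounds until every node has received at least $n$ helpful transmissions, i.e., a combinatorial rank-spreading process.

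Next I would construct a Jackson network that dominates the remaining time. The idea is that for each sink $v$, each of the other $n-1$ original messages $m_u$ can be viewed as a \emph{customer} that must be routed from $u$ to $v$ through the graph; a customer at node $w$ is serviced whenever an \ex/\pull transmission on an edge $(w,w')$ takes place, at which point it routes to $w'$ with probability $1/\deg(w)$ (using the chosen neighbor). In the asynchronous model, activations form independent Poisson processes of rate $1$, service times are exponential, and routing is memoryless, so the resulting open network fits the hypotheses of Jackson's theorem. Because each node has degree at most $\Delta$, the effective per-queue load can be bounded uniformly, and the product-form stationary distribution gives that each queue has expected length $O(1)$ at equilibrium.

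I would then use Jackson's theorem to bound the first-passage time. By Little's law applied to the stationary Jackson network, the mean sojourn time of a tagged customer is $O(\Delta)$, and by linearity over the $n-1$ customers the expected time for $v$ to collect $n$ helpful messages is $O(\Delta n)$. To lift this expectation to a high-probability statement, I would restart the analysis in $\Theta(\log n)$ independent phases of length $\Theta(\Delta n)$; in each phase the stopping time is within a constant factor of its mean by Markov's inequality, so the probability that \emph{no} phase succeeds is at most $n^{-c}$. A final union bound over all $n$ sinks produces the claimed $O(\Delta n)$ bound both in expectation and w.h.p.

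The hard part will be showing that the algebraic rank dynamics are genuinely dominated by the postulated Jackson network rather than merely analogous to it. Jackson's theorem requires routing decisions that are independent of queue states, whereas in algebraic gossip whether a transmission is helpful depends on the current subspaces $W_u, W_w$; I expect to handle this by a coupling argument in which I only charge the queuing network for transmissions that could possibly be helpful, bounding the real process by a process whose routing is state-independent. Verifying the stability condition of the constructed network uniformly over all graphs of maximum degree $\Delta$ is the second delicate point, and will likely drive the explicit constant in the $O(\Delta n)$ bound.
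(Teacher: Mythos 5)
Your high-level plan (dominate the rank dynamics by an open queuing network and invoke Jackson's theorem) is the same as the paper's, but two of your concrete steps fail. First, the queuing network you construct is not the right one: you let each original message be a customer that, upon service at $w$, moves to the uniformly chosen neighbor $w'$. That is a random walk, so a customer reaches the sink $v$ only after a hitting time's worth of service completions, and the traffic equations give each queue a total arrival rate proportional to the expected number of visits of the walk before absorption at $v$ --- on a path this is $\Theta(n)$ per queue, so the utilization condition $\rho_i<1$ of Jackson's theorem fails unless you throttle the injection rate, which destroys the bound. Likewise the claimed $O(\Delta)$ sojourn time from Little's law is off by the depth of the graph: a customer must cross up to $n$ queues, each served at rate $\Theta(1/(n\Delta))$ per timeslot, so its sojourn is $\Theta(\Delta n)$ rounds, and the completion time is in any case not the sum of per-customer sojourn times. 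The paper avoids all of this by fixing a BFS spanning tree rooted at the sink, ignoring all non-tree traffic, and letting the customers be abstract rank increments routed deterministically toward the root; the coupling you defer to (``only charge transmissions that could be helpful'') is then exactly Lemma \ref{lemma_helpful_rank}: whenever a queue is nonempty the node is helpful to its parent, so the service time is genuinely geometric with parameter at least $\tfrac{2}{n\Delta}(1-\tfrac1q)$, which is stochastically dominated by an exponential (Lemmas \ref{lemma:geometric_as_exponential} and \ref{lemma:exp_server_instead_of_geom}), after which Jackson's theorem is applied to the resulting feed-forward tree, itself dominated by a line of queues (Theorem \ref{thm:tree_of_queues}). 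This also removes your need for $q=\mathrm{poly}(n)$: the helpfulness probability is folded into the service rate, so the argument works for any $q\ge 2$.

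Second, your high-probability amplification loses a logarithmic factor: $\Theta(\log n)$ restarted phases, each of length $\Theta(\Delta n)$, give a total of $O(\Delta n\log n)$ rounds w.h.p., not the claimed $O(\Delta n)$. The paper instead observes that, in the stationary Jackson network, the time for the last customer to be injected and then to traverse all queues is a sum of $O(n)$ independent exponentials, and the Chernoff-type bound of Lemma \ref{lemma:sum_of_exp_bounded1} gives deviation probability $(2e^{-\alpha/2})^n$, which is exponentially small and therefore absorbs the union bound over all $n$ sinks with no extra factor. As written, your argument would establish at best $O(\Delta n\log n)$ w.h.p. for fields of polynomial size, and the queuing analysis underlying even that weaker claim is not sound.
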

%\end{theorem_main}

This result immediately leads to two interesting corollaries.
In Corollary \ref{cor:upper} we state a matching upper bound: for any graph of size $n$, since the max $\Delta= n$, algebraic gossip will stop w.h.p. in $O(n^2)$ rounds. In Corollary \ref{corollary_any_ex_nc_linear} we conclude a strong tight bound for \emph{any} constant degree network (i.e., $\Delta$ is constant) of $\Theta(n)$.
This improves upon known previous upper bounds that, for certain constant degree graphs, had an upper bound of $O(n^2)$.
Note that the bound of $O(\Delta n)$ is not tight for all graphs (e.g., the complete graph) and 
the question of determining the properties of a network that capture tightly the stopping time of algebraic gossip
is still open.
We also show in Theorem \ref{thm:lower_bound_any_graph} that the upper bound $O(\Delta n)$ is tight in the sense that for almost any $\Delta$ there exist graphs for which algebraic gossip takes $\Omega(\Delta n)$ rounds.

The second contribution of the paper is the technique we use to prove our results. 
We novelly 
bound the stopping time of algebraic gossip via reduction to a network of queues and by the \emph{stationary} state of the network 
that follows form \emph{Jackson's theorem} for an open network of queues.
The idea of using a queuing theory approach for network coding analysis was first introduced in \cite{Lun20083} but, as opposed to our approach, it did not include the aspect of a gossip communication model.
We believe that the type of reduction presented in this work could be used for future analysis of gossip protocols. 

%In the main part of the paper we give proofs only for the high probability, \ex, and the asynchronous cases, but the results, as stated, hold also for the \push, \pull, in expectation and for the synchronous cases. The full details can be found in the Appendix \ref{appendix}.

Third, we compare three gossip algorithms: \push, \pull, and \ex.
While traditionally algebraic gossip used \pull or \push as its gossip algorithms, it was unclear if using \exchange (that uses twice as many messages than \pull or \push) can lead to significant improvements in stopping time. We give a surprising affirmative answer to this question and prove that, for some topologies, using the \exchange algorithm can be unboundedly better than using \pull or \push. We show that while the time it takes the \ex algorithm to complete the algebraic gossip on the star graph is linear, i.e., $O(n)$ the time it takes the \pull and \push algorithms to finish the same task, is $\Omega(n\log n)$. On the contrary, there are many other graphs such as the complete graph and all constant maximum degree graphs (see Section \ref{sec:bound_for_arbitrary}), on which these three gossip algorithms have the same asymptotical behavior. 

Since the submission of this manuscript, there have been some recent advances in answering open questions raised in this work. In particular, the conference paper of Haeupler \cite{Haeupler2010Analyzing} and our recent conference paper \cite{Avin2011OrderOptimal}. We discuss these works in Conclusions.

The rest of the paper is organized as follows. In Section \ref{sec:preliminaries_and_models} we present the communication and time models, define gossip algorithms and gossip protocols, and formally state the \emph{gossip stopping problem}. In Section \ref{sec:ring_is_linear} we show that algebraic gossip on the ring graph is linear using Jackson's theorem. In Section \ref{sec:bound_for_arbitrary} we prove our main results: a tight upper bound for arbitrary networks and a tight linear bound for graphs with a constant maximum degree. Section \ref{sec:ex_is_faster} gives an answer to the question: ``Can \ex be better than \push or \pull?'' by providing a topology for which \ex is unboundedly faster. We conclude in Section \ref{sec:conclusions}.

\section{Preliminaries and Models}\label{sec:preliminaries_and_models}

\subsection{Network and Time Model}

We model the communication network by a connected undirected graph $G = G(V,E)$, where $V=\left\{v_1,v_2,...,v_n\right\}$ is the set of vertices and $E\subseteq{V\times V}$ is the set of edges. Let $N(v)\subseteq V$ be a set of neighbors of node $v$ and $d_{v}=\lvert N(v)\rvert$ its degree, let $\Delta = \max_v d_v$ be the maximum degree of $G$.

The time is assumed to be slotted where $n$ consecutive timeslots are regarded as one \emph{round}.
We consider the following time models:

\begin{itemize}

\item\textbf{Asynchronous time model.} At every timeslot, a node selected independently and uniformly at random takes an action (determined by a Gossip Algorithm) and a single pair of nodes communicates.\footnote{Alternately, this model can be seen as each node having a clock that ticks at the times of a rate 1 Poisson process and there is a total of $n$ clock ticks per round \cite{Boyd2006Randomized}.} In this model there is no guarantee that a node will be selected exactly once in a round; nodes can be selected several times or not at all.

\item\textbf{Synchronous time model.} 
At every round, all the nodes wake up synchronously and every node takes an action (determined by a Gossip Algorithm).
%At every round, a random permutation of the nodes is selected uniformly and timeslots in the round are taken by nodes according to their order in the permutation. At each timeslot, a single pair of nodes communicates. The model guarantees that during $n$ consecutive timeslots each node will be selected exactly once.
\end{itemize}

\subsection{Gossip Algorithms}\label{sec:algo}

A \emph{gossip algorithm} defines the way information is exchanged or spread in the network. When a node wakes up (according to a time model), it takes an information spreading action that is divided into two phases: (\rmnum{1}) choosing a communication partner and (\rmnum{2}) spreading the information. A \emph{communication partner} $u \in{N(v)}$ is chosen by node $v \in{V}$ with probability $p_{vu}$. Throughout this paper we will assume \textit{uniform gossip algorithms}, i.e., $p_{vu}=\frac{1}{d_v}$. 

We distinguish three gossip algorithms for information spreading between $v$ and $u$, $\push, \pull$, and $\ex$ as explained in the Introduction.
We assume that in the asynchronous time model, messages sent in timeslot $t$ are received in timeslot $t$ and can be forwarded or processed at timeslot $t+1$, and in the synchronous time model, messages sent in round $t$ are received in round $t$ and can be forwarded or processed at round $t+1$.

\subsection{Algebraic Gossip}
\label{sec:proto}

A \emph{gossip protocol} is a task that is being executed using gossip algorithms, for example, calculation of aggregate functions, resource discovery, and database consistency. %Usually, a gossip protocol can be executed by any of the three 
We now describe the algebraic gossip protocol for the multicast task: disseminating $n$ initial values of the nodes to all $n$ nodes.

Let $\mathbb{F}_q$ be a field of size $q$, each node $v_i\in V$ holds an initial value $x_i$ that is represented as a vector in $\mathbb{F}_q^r$. We can represent every message as an integer value bounded by $M$, and therefore, $r=\left\lceil \log_q(M)\right\rceil$. All transmitted messages have a fixed length and represent linear equations over $\mathbb{F}_q$. The variables of these equations are the initial values $x_i\in\mathbb{F}_q^r$ and a message contains the coefficients of the variables and the result of the equation; therefore the length of each message is: $r\log_2q+n\log_2q$ bits. A message is built as a random linear combination of all messages stored by the node and the coefficients are drawn uniformly at random from $\mathbb{F}_q$.
A received message will be appended to the node's stored messages only if it is independent of all linear equations (messages) that are already stored by the node and otherwise it is ignored.
%A node's database should be able to hold $n$ messages. 
Initially, node $v_i$ has only one linear equation that consists of only one variable corresponding to $x_i$ multiplied by a coefficient $1$ and equal to the value of $x_i$, i.e., the node knows only its initial value.
Once a node receives $n$ independent equations it is able to decode all the initial values and thus completes the task.

For a node $v$ at timeslot (round)\footnote{For asynchronous time model -- timeslot, for synchronous -- round.} $t$, let $S_{v}(t)$ be the subspace spanned by the linear equations (or vectors) it stores (i.e., the coordinates of each vector are the coefficients of the equation) at the beginning of timeslot (round) $t$. The dimension (or rank) of a node is the dimension of its subspace, i.e., $dim(S_v(t))$ and it is equal to the number of independent linear equations stored by the node.

We say that a node $v$ is a \textbf{\emph{helpful node}} to node $u$ at the timeslot (round) $t$ if and only if $S_v(t)\not\subset S_u(t)$, i.e., iff a random linear combination constructed by $v$ can be linearly independent with all equations (messages) stored by $u$.
We call a message a \textbf{\emph{helpful message}} if it increases the dimension of the node. 
The following lemma, which is a part of Lemma 2.1 in \cite{Deb2006Algebraic}, gives a lower bound for the probability of a message sent by a \emph{helpful node} to be a \emph{helpful message}.
\begin{lemma}[\cite{Deb2006Algebraic}]
\label{lemma_helpful_rank}
Suppose that node $v$ is \emph{helpful} to node $u$ at the beginning of the timeslot (round) $t$. If $v$ transmits a message to $u$ at the timeslot (round) $t$, then:
$$\Pr \left(dim(S_u(t+1))>dim(S_u(t))\right) \geq 1-\tfrac{1}{q}.$$
That is, the probability of the message to be helpful is at least $1-\frac{1}{q}$.
\end{lemma}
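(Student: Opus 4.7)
The plan is to show that the random linear combination transmitted by $v$ is uniformly distributed over $S_v(t)$, and then to bound the fraction of that subspace that already lies in $S_u(t)$.

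First I would set up notation. Let $k=\dim(S_v(t))$ and let $w_1,\dots,w_k\in\mathbb{F}_q^n$ be the linearly independent equations stored by $v$ (recall the protocol only appends a received message if it is independent of what is already stored, so $v$ holds exactly $k$ independent vectors). The transmitted message is $m=\sum_{i=1}^{k}\alpha_i w_i$ where each $\alpha_i$ is drawn independently and uniformly from $\mathbb{F}_q$. Since the $w_i$ are independent, the map $(\alpha_1,\dots,\alpha_k)\mapsto m$ is a bijection from $\mathbb{F}_q^k$ onto $S_v(t)$, so $m$ is \emph{uniformly distributed on $S_v(t)$}.

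Next I would reduce the event of interest to an intersection of subspaces. A message is helpful iff $m\notin S_u(t)$, which increments $\dim(S_u)$ by one. Because $m\in S_v(t)$ by construction, the event $m\in S_u(t)$ coincides with $m\in W$, where $W:=S_v(t)\cap S_u(t)$. The hypothesis that $v$ is helpful to $u$ says precisely that $S_v(t)\not\subset S_u(t)$, so $W$ is a \emph{proper} subspace of $S_v(t)$ and hence $\dim(W)\le k-1$.

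Putting the two pieces together, since $m$ is uniform on $S_v(t)$ and $W\subset S_v(t)$,
\[
\Pr(m\in S_u(t))=\Pr(m\in W)=\frac{|W|}{|S_v(t)|}\le\frac{q^{k-1}}{q^{k}}=\frac{1}{q},
\]
so $\Pr(\dim(S_u(t+1))>\dim(S_u(t)))=\Pr(m\notin S_u(t))\ge 1-\tfrac{1}{q}$, as claimed. The only subtle point — and the closest thing to an obstacle — is verifying that the random coefficient vector induces the uniform distribution on $S_v(t)$; once the independence of the stored $w_i$'s is observed, this is immediate and the counting argument finishes the proof.
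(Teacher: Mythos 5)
Your proof is correct. Note that the paper itself does not prove this lemma --- it is imported verbatim as part of Lemma~2.1 of the cited reference \cite{Deb2006Algebraic} --- and your argument (uniformity of the transmitted combination over $S_v(t)$ via the bijection from coefficient vectors, followed by the observation that the ``unhelpful'' outcomes form the proper subspace $S_v(t)\cap S_u(t)$ of cardinality at most $q^{k-1}$) is precisely the standard argument used there. The one hypothesis you rely on, that the stored equations of $v$ are linearly independent, is guaranteed by the protocol's rule of discarding dependent messages, so there is no gap.
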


\subsection{The Gossip Stopping Problem}\label{sec:stop}
Our goal is to compute bounds on time and number of messages needed to be sent in the network to complete various gossip protocols over various gossip algorithms. For this purpose we define the following:

\begin{dfn}[Excepted and high probability stopping times]
Given a graph $G$, gossip algorithm \algo, and a gossip protocol \proto, the stopping time $T(\mathcal{A}, \mathcal{P}, G)$ is a random variable defined as the number of timeslots by which all nodes complete the task. 
$E[T(\mathcal{A}, \mathcal{P}, G)]$ is the expected stopping time and 
the \emph{high probability stopping time} $\hat{T}$ is defined as follows:
$$\hat{T}(\mathcal{A}, \mathcal{P}, G) = \min_{t\in\mathbb{Z}} \left[t\mid\Pr \left (T(\mathcal{A}, \mathcal{P}, G) \leq t \right ) \geq 1- \tfrac{1}{n}\right].$$
\end{dfn}

%\begin{dfn}[high probability local stopping time]
%Similarly, we define $T_v(\mathcal{A}, \mathcal{P}, G)$ as a random variable defined as the minimal number of timeslots by which the node $v$ completes the task. The \emph{high probability local stopping time} is $\hat{T}_v$.
%\end{dfn}

We can now express our research question formally:
\begin{dfn}[Gossip stopping problem]
Given a graph $G$, a gossip algorithm $\A$, and a gossip protocol $\P$, the \emph{gossip stopping problem} is to determine $E[T(\mathcal{A}, \mathcal{P}, G)]$ and $\hat{T}(\mathcal{A}, \mathcal{P}, G)$, the expected and  high probability stopping times.
\end{dfn}

In this work we consider $\mathcal{A}\in \{\push, \pull,\ex\}$ and $\mathcal{P}=$ \emph{algebraic gossip}, so when these parameters and $G$ are understood from the context, we denote the expected and high probability stopping times as $E[T]$ and $\hat{T}$, respectively. Moreover, we usually measure the stopping time in \emph{rounds} (in order to compare between the two time models) where one round equals $n$ consecutive timeslots. Thus, we define
the expected number of rounds as $E[R] = E[T]/n$ and 
 $\hat{R}=\hat{T}/n$ as the number of rounds by which all nodes complete the task with high probability.

For clarity, we present our proofs for the \emph{asynchronous} time model and the $\ex$ algorithm, we extend the results to the \emph{synchronous} cases and \push and \pull in the appendix, where we also included some of the more technical proofs.

\section{Linear Bound on a ring via Queuing Theory}
%Gossip Analysis as a Network of Queues}
\label{sec:ring_is_linear}

Before proving the main results of Theorem \ref{thm:upper_bound_any_graph} in the next section we prove in this section
a bound on the specific case of a ring network. This is a simpler case to prove and understand, and will be used as a basis for the proof of the general result. A ring of size $n$ is a connected cycle where each node has one left and one right neighbor.
 
\begin{theorem} 
\label{thm:ring_is_linear}
For the asynchronous time model and the ring graph of size $n$, the stopping time (measured in rounds) of algebraic gossip is linear both in expectation and with high probability, i.e., 
$\text{E}\left[R\right]=\Theta(n)$ and $\hat{R}=\Theta(n)$.
\end{theorem}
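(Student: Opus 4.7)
I will establish matching lower and upper bounds of $\Omega(n)$ and $O(n)$ rounds respectively. The lower bound will be elementary and come from the Ring's diameter; the upper bound will be the first instance of the paper's queuing-theoretic technique, reducing the algebraic-gossip dynamics on the Ring to an open network of queues and invoking Jackson's theorem on its product-form stationary distribution.

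For the lower bound, I would observe that the Ring has diameter $\lfloor n/2\rfloor$, so any fixed information $x_i$ can reach its antipodal node only after a chain of at least $n/2$ successive exchanges along a geodesic. In the asynchronous model, a particular edge fires only when one of its two endpoints is the active node (probability $2/n$) and then picks this specific neighbor (probability $1/d_v=1/2$), so each edge is used at rate $1/n$ per timeslot. The minimum time to realize $n/2$ sequential edge firings on a fixed path dominates a sum of $n/2$ independent exponential waiting times of mean $n$, giving $E[T]=\Omega(n^{2})$ timeslots, i.e.\ $E[R]=\Omega(n)$. A Chernoff tail bound on this sum transfers the statement to $\hat{R}=\Omega(n)$.

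For the upper bound, the plan is as follows. Place one single-server queue at each node $u$; customers at $u$ represent independent directions $u$ still needs in order to reach $\dim(S_u)=n$. Under the asynchronous timing, each exchange along an edge $(u,v)$ is a Poisson event of rate $\Theta(1/n)$; by Lemma \ref{lemma_helpful_rank}, conditional on $v$ being helpful to $u$ such an event reduces $u$'s deficit with probability at least $1-1/q$. The delicate point is that ``helpfulness'' is state-dependent, so the true deficit process is not a Jackson network. I would get around this by constructing a dominating open Jackson network on the Ring whose external arrivals are generous enough, and whose exponential service rates small enough, that its queue-length vector stochastically upper bounds the deficit vector of the algebraic gossip process. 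Because the Ring has $\Delta=2$, all effective utilizations $\rho_i$ can be made uniformly bounded away from $1$ by a constant, so Jackson's theorem yields a product-form stationary distribution $\pi(n_1,\dots,n_n)=\prod_i(1-\rho_i)\rho_i^{n_i}$ in which each $E[n_i]=O(1)$. Little's law then converts the stationary occupancy into an $O(n)$ bound on the time a newly-injected customer spends in the system, and summing/concentrating over the initial $n(n-1)$ customers gives $E[R]=O(n)$ and, by a standard Chernoff argument on the accumulated service events, $\hat{R}=O(n)$.

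The main obstacle, and the step I would spend the most care on, is precisely the construction of the dominating Jackson network: I must couple the exchange-driven deficit process to a queuing network whose arrivals are Poisson and whose services are exponential and state-independent, while preserving stochastic domination of the stopping time. Once that coupling is in place, the rest (Jackson's product form, Little's law, and concentration) is essentially plug-and-play and will serve as a warm-up for the more involved construction that the proof of Theorem \ref{thm:upper_bound_any_graph} will require on an arbitrary graph.
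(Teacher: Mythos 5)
Your lower bound is fine (and genuinely different from the paper's: the paper simply counts that $n(n-1)$ helpful messages must be received and at most two messages are sent per timeslot, which gives $\Omega(n^2)$ timeslots with probability $1$, whereas your diameter argument needs a lower-tail concentration bound for a sum of $n/2$ geometric variables and a union bound over the two geodesics). The problem is the upper bound, where the step you explicitly defer --- constructing a dominating Jackson network --- is the entire content of the proof, and the sketch you give around it would not survive being made precise. Your customers are \emph{deficits} $n-\dim(S_u)$ sitting at each node, but a deficit that decreases at $u$ does not arrive anywhere else, so there is no routing matrix and no ``open network of queues'' in your picture; Jackson's theorem has nothing to apply to. The paper's resolution is structurally different: fix a single sink, cut the Ring into a Path, keep only the flow of \emph{helpful messages} toward the sink, and let the customer at queue $v_i$ be an independent equation that $v_i$ can still contribute to $v_{i+1}$. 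In that one-directional picture a nonempty queue certifies helpfulness, so by Lemma \ref{lemma_helpful_rank} service succeeds with probability at least $p=\frac{1}{n}(1-\frac{1}{q})$ per timeslot, the geometric service is dominated by an exponential one (Lemma \ref{lemma:geometric_as_exponential}), and --- this is the trick you are missing --- the $n$ initial customers are removed from the queues and re-injected at the far end as a Poisson stream of rate $\lambda=\mu/2$, with \emph{dummy} customers padding every queue to its stationary occupancy. Only after these two artificial slowdowns are the hypotheses of Jackson's theorem (Poisson external arrivals, $\rho_i=1/2<1$, equilibrium initial state) actually met. Your remark that $\rho_i$ is bounded away from $1$ ``because $\Delta=2$'' has no content before this injection is set up: in the unmodified system there is no arrival process and no utilization to speak of.

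Two further points would sink the write-up even granting the coupling. First, Little's law only yields the \emph{mean} sojourn time; to get a w.h.p.\ bound on the time the \emph{last} customer exits you need the distributional fact that the equilibrium sojourn time in each $M/M/1$ queue is $\text{Exp}(\mu-\lambda)$ (the paper's Lemma \ref{lemma:waiting_distr}) together with the independence across queues supplied by the product form, after which a Chernoff bound on a sum of $n$ independent exponentials (Lemma \ref{lemma:sum_of_exp_bounded1}) finishes the job. Second, tracking all $n(n-1)$ deficit units simultaneously is the wrong accounting: the paper sends $n$ customers to one sink ($v_n$), and then runs a second, symmetric phase in which $v_n$'s $n$ equations flow back to $v_1$, so the total is four sums of $n$ exponentials ($T^a+T^b+T^c+T^d$) and a union bound. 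I would rebuild your upper bound along those lines rather than trying to patch the deficit formulation.
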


\begin{figure*}
\centering
\includegraphics[width=4in,clip=true, viewport=0.1in 2.3in 11.5in 6.1in]{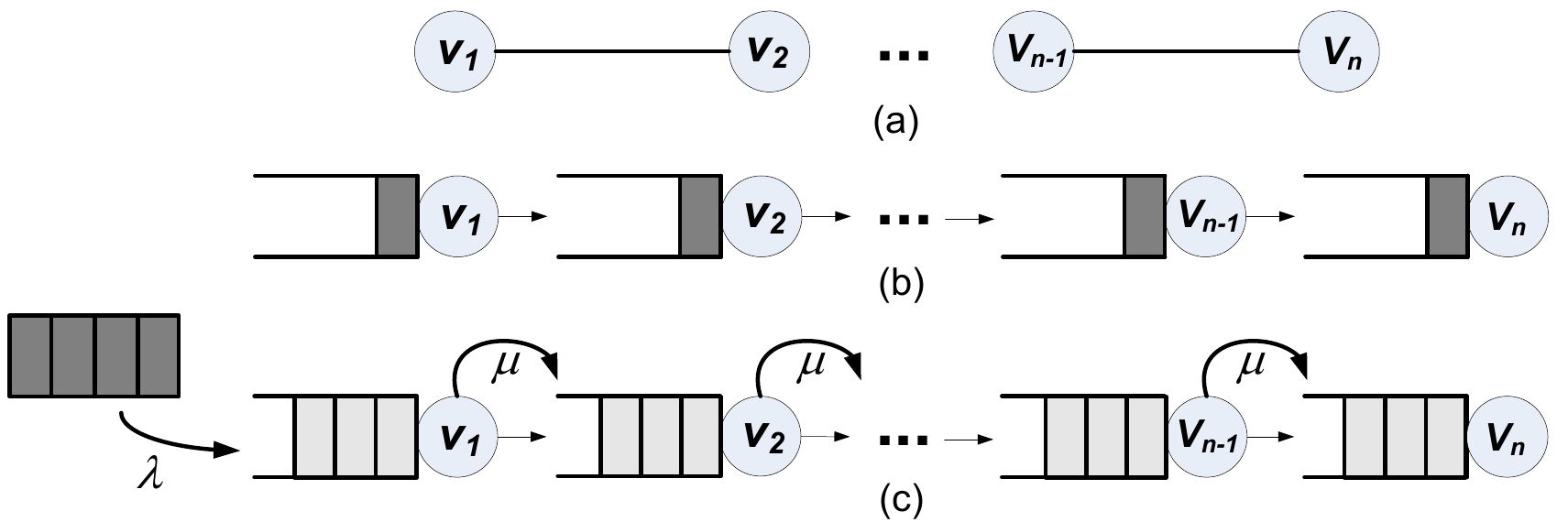}
\caption{Modeling algebraic gossip in a path as a queuing network. (a)Initial path graph. (b)One real customer at each node. (c)Queues are filled with dummy customers and real customers enter the system from outside.
\label{fig:fig/queues.pdf}}
\end{figure*}

\begin{proof}
The idea of the proof is to reduce the problem of network coding on the ring graph to a simple system of queues and use Jackson's theorem for open networks to bound the time it takes \emph{helpful messages} to cross the network.

To simplify our analysis, we cut the ring in an arbitrary place and get a path graph (without loss of generality, we assume that the leftmost node in the path is $v_1$ and the rightmost node is $v_n$), see Fig. \ref{fig:fig/queues.pdf} (a). It is clear that the stopping time of the algebraic gossip protocol will be larger in a path graph than in a ring graph. Another simplification that we will do, for the first part of the proof, is to consider only the messages that travel from left to right (towards $v_n$) (i.e., other messages will be ignored, thus increasing the stopping time).

We define a queuing system by assuming a queue with a single server at each node. Customers of our queuing network are the \emph{helpful messages}, i.e., messages that increase the rank of a node they arrive at. This means that every customer arriving at some node increases its rank by $1$, so the queue size at a node represents a measure of \emph{helpfulness} of the node to its right-hand neighbor (i.e., the queue size is the number of independent linear equations that the node can generate for its right-hand neighbor).
%
%The queue length of node $v_i$ at the beginning of timeslot $t$ is defined as $Q_{i}(t)$. The size of the queue $Q_{i}(t)$, represents a measure of \emph{helpfulness} of node $v_i$ to its right-hand neighbor $v_{i+1}$, ($i<n$), and is defined as the number of \emph{helpful messages} it can generate for $v_{i+1}$.
%Formally,  for $i<n$, let $Q_{i}(t)=dim(S_{v_i}(t))-dim(S_{v_{i+1}}(t))+1$. Even if the dimensions of $v_i$ and $v_{i+1}$ are equal at timeslot $t$, $v_i$ is still \emph{helpful} to $v_{i+1}$, since there is one message that $v_{i+1}$ knows that is unknown to $v_i$ (since messages only go from left to right) and thus there should be one message in $v_i$ that is unknown to $v_{i+1}$. This explains the $+1$ in the $Q_{i}(t)$ definition.
%
%
The service procedure at node $v_i$ is a transmission of a \emph{helpful message} (customer) from $v_i$ to $v_{i+1}$. 
%
%Clearly, if $Q_{i}(t)=0$, the rank of node $v_{i+1}$ cannot be increased during timeslot $t$, since node $v_{i}$ does not know more information than $v_{i+1}$.
%If $Q_{i}(t)>0$ it means that $v_i$ is \emph{helpful} to $v_{i+1}$ in timeslot $t$. 
%
So, from Lemma \ref{lemma_helpful_rank}, the probability that a customer will be serviced at node $v_i$ in a given timeslot is: $p\geq\frac{1}{n}(1-\frac{1}{q})$, where $\frac{2}{n}\cdot\frac{1}{2}=\frac{1}{n}$ is the probability that in the \ex algorithm a message will be sent from $v_i$ to $v_{i+1}$ at any given timeslot.

Thus, we can consider that a service time in our queuing system is geometrically distributed with parameter $p$. The service time is distributed over the set $\left\{0,1,2,...\right\}$, which means that a customer that enters an empty queue at the \emph{end} of the timeslot can be immediately serviced with probability $p$ (since it is the beginning of the next timeslot). A customer cannot pass more than one node (queue) in a single timeslot, so we define the transmission time as one timeslot. I.e., the time needed for a customer to pass through $k$ queues is the sum of the waiting time in each queue, service time in each queue, and additional $k$ timeslots for transmission from queue to queue.

The following lemma shows that the service rate can be bounded from below by an exponential random variable. The proof of this lemma can be found in Appendix \ref{app:D}.

\begin{lemma}\label{lemma:geometric_as_exponential}
Let $X$ be a geometric random variable with parameter $p$ and supported on the set $\left\{0,1,2,\ldots\right\}$, i.e., for $k\in\mathbb{Z^+}$: $\Pr \left(X=k\right)=(1-p)^{k}p$, and let $Y$ be an exponential random variable with parameter $p$. Then, for all $x\in \mathbb{R^+}$:
\begin{align}
\Pr \left(X\leq x \right)\geq \Pr \left(Y\leq x \right)=1-e^{-px},
\end{align}
i.e., a random variable $Y\sim\text{Exp}(p)$ stochastically dominates the random variable $X\sim\text{Geom}(p)$.
\end{lemma}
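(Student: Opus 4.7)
The plan is to reduce the claim to a clean analytic inequality by writing down the CDF of $X$ explicitly and then comparing it to $1-e^{-px}$ term by term.

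First I would compute the CDF of the geometric random variable. Using the geometric series, for any integer $k\ge 0$,
\[
\Pr(X\le k)=\sum_{j=0}^{k}(1-p)^j p = 1-(1-p)^{k+1}.
\]
Since $X$ is integer-valued, for a general $x\in\mathbb{R}^+$ this extends to $\Pr(X\le x)=1-(1-p)^{\lfloor x\rfloor+1}$. So the target inequality $\Pr(X\le x)\ge 1-e^{-px}$ is equivalent to
\[
(1-p)^{\lfloor x\rfloor+1}\le e^{-px}.
\]

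Next I would use the elementary inequality $1-p\le e^{-p}$, valid for all $p\in[0,1]$ (it follows from $\ln(1+y)\le y$ applied to $y=-p$, or from a one-line convexity argument). Raising both sides to the positive power $\lfloor x\rfloor+1$ preserves the inequality, giving
\[
(1-p)^{\lfloor x\rfloor+1}\le e^{-p(\lfloor x\rfloor+1)}.
\]
Finally, since $\lfloor x\rfloor+1\ge x$ and $p\ge 0$, we have $-p(\lfloor x\rfloor+1)\le -px$, and monotonicity of $e^{\cdot}$ yields $e^{-p(\lfloor x\rfloor+1)}\le e^{-px}$. Chaining the two bounds finishes the proof.

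I do not expect any real obstacle here; the only thing to be careful about is the direction of the inequality when raising $1-p\in[0,1)$ to a power (which is fine because both sides lie in $[0,1]$ and the exponent is positive) and the boundary case $x=0$, where $\lfloor x\rfloor+1=1$ so the bound reduces to $1-p\le 1=e^{0}$, which is trivially true. The rest is just bookkeeping.
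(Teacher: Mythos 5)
Your proof is correct and is essentially the same argument as the paper's: both compute $\Pr(X\le x)=1-(1-p)^{\lfloor x\rfloor+1}$, use $\lfloor x\rfloor+1\ge x$, and use $1-p\le e^{-p}$ (the paper writes this as $\ln(1-p)\le -p$), merely applying the two bounds in the opposite order. No gaps.
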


We can now assume that the service time is exponentially distributed with parameter $\mu=p$.
This assumption decreases the rate of transmission of \emph{helpful messages}, and therefore will only increase the stopping time. The last is true since the probability that a customer will be serviced by time $t_1$ in a geometrical server is higher than in an exponential server, and thus each customer in a network with geometric servers will arrive at $v_n$ by time $t_2$ with higher probability than in a network with exponential servers. The formal justification of this step is given later in Lemma \ref{lemma:exp_server_instead_of_geom}, which proves this assertion for trees and not only for the line.

To this end, we have converted our network to a standard network of queues where the network is open, external arrivals to nodes will form a Poisson process, service times are exponentially distributed, and the queues are first come first serve (FCFS).
For a queue $i$ let $\mu_i$ denote the service rate and $\lambda_i$ the total arrival rate. We present now Jackson's theorem for open networks; a proof of this theorem can be found in~\cite{Yao01}.

\begin{theorem_jackson}\label{thm:jackson}
In an open Jackson network of $n$ queues where the utilization $\rho_i=\frac{\lambda_i}{\mu_i}$ is less than 1 at every queue, the equilibrium state probability distribution exists, and for state $(k_1,k_2,\ldots,k_n)$ is given by the product of the individual queue equilibrium distributions:
$\pi(k_1,k_2,...,k_n)=\Pi_{i=1}^{n} \rho_i^{k_i}(1-\rho_i)$.
\end{theorem_jackson}

We would like to use Jackson's theorem to conclude that there is an equilibrium state for our network of queues and that in the equilibrium state the lengths of the queues are independent. For Jackson's theorem to hold we need to appropriately define the arrival rate to the queues, so we will slightly change our queuing network.

The initial state of our system is that at every queue we have one \emph{real} customer (see Fig. \ref{fig:fig/queues.pdf} (b)). Now we take all the $n$ \emph{real} customers out from the system and let them enter back via the leftmost queue with a predefined arrival rate. Clearly, this modification increases the stopping time. We define the \emph{real} customers' arrivals as a Poisson process with rate $\lambda= \frac{\mu}{2}$. So, $\rho_i=\frac{\lambda_i}{\mu_i}=\frac{1}{2}<1$ for all queues ($i\in \left[1..n\right]$).

Now, according to Jackson's theorem there exists an equilibrium state. So, our last step is to ensure that the lengths of all queues at time $t=0$ are according to the equilibrium state probability distribution. We add \emph{dummy}  customers to all the queues according to the stationary distribution. By adding additional \emph{dummy} customers (we call them \emph{dummy} since their arrivals are not counted as a rank increment) to the system, we make the \emph{real} customers wait longer in the queues, thus increasing the stopping time. Our queuing network with the above modifications is illustrated in Fig. \ref{fig:fig/queues.pdf} (c), where the \emph{real} customers are dark, and the \emph{dummy} customers are bright.

We will compute the stopping time in two phases.
By the end of the first phase, node $v_n$ will finish the algebraic gossip task. By the end of the second phase, all the nodes will finish the task.
For the first phase, we will find the time it takes the $n$'th (last) \emph{real} customer to arrive at the rightmost node, i.e., node $v_n$. By that time, the rank of node $v_n$ will become $n$ and it will finish the algebraic gossip protocol (i.e., it received $n$ \emph{helpful messages}). Let us denote this time (in \emph{timeslots}) as $T^{\overrightarrow{\text{arr}}}+T^{\overrightarrow{\text{cross}}}$, where $T^{\overrightarrow{\text{arr}}}$ is the time needed for the $n$'th customer to arrive at the first queue, and $T^{\overrightarrow{\text{cross}}}$ is the time needed for the $n$'th customer to pass through all the $n$ queues in the system.

For the second phase, let us assume that after $T^{\overrightarrow{\text{arr}}}+T^{\overrightarrow{\text{cross}}}$ timeslots (when $v_n$ finishes the algebraic gossip task) all nodes except node $v_n$ forget all the information they have. So, the rank of all nodes except $v_n$ is $0$. Let us now analyze the information flow from the rightmost node in the path ($v_n$) to the leftmost node ($v_1$). In the same way, we will represent all \emph{helpful messages} that node $v_n$ will send as customers in our queuing system. In order to use Jackson's Theorem, we will again remove all the \emph{real} customers from the system and will inject them to the queue of node $v_n$ with a Poisson rate $\lambda= \mu/2$. We also fill all the queues in the system with \emph{dummy} customers in order to achieve queue lengths that correspond to the equilibrium state distribution.
Clearly, arrival of a \emph{real} customer at some node $v_i$ ($i\neq n$) will increase the rank of that node. So, after the last \emph{real} customer arrives at node $v_1$, the ranks of \emph{all} nodes will be $n$, and the algebraic gossip task will be finished.

Using the same equilibrium state analysis as before, we define the time it takes the $n$'th (last) \emph{real} customer to arrive at the rightmost node $v_n$ as $T^{\overleftarrow{\text{arr}}}$, and the time to cross all the $n$ queues -- arriving at node $v_1$ -- as $T^{\overleftarrow{\text{cross}}}$.

So, $T=T^{\overrightarrow{\text{arr}}}+T^{\overrightarrow{\text{cross}}}+T^{\overleftarrow{\text{arr}}}+T^{\overleftarrow{\text{cross}}}$ is an upper bound for the number of timeslots needed to complete the task. Now we will find the upper bound for $T^x, x\in \left\{\overrightarrow{\text{arr}},\overrightarrow{\text{cross}},\overleftarrow{\text{arr}},\overleftarrow{\text{cross}}\right\}$ and then we will use union bound to obtain an upper bound on $T$.

From Jackson's Theorem, it follows that the number of customers in each queue is independent, which implies that the random variables that represent the waiting times in each queue are independent. 
To continue with the proof we need the following lemmas; the first is a classical result from queuing theory, the proof of the second lemma can be found in Appendix \ref{app:E}.

\begin{lemma}[\cite{1378238}, section 4.3]
\label{lemma:waiting_distr}
Time needed to cross one $M/M/1$ queue (a queuing system in which interarrival and service times are distributed exponentially with parameters $\lambda$ and $\mu$, respectively) in the equilibrium state has an exponential distribution with parameter $\mu -\lambda$.
\end{lemma}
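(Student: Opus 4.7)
The plan is to prove the lemma by computing the Laplace transform (or moment generating function) of the sojourn time and recognizing the result as that of an exponential distribution with parameter $\mu-\lambda$. The two classical ingredients I would invoke are PASTA (Poisson Arrivals See Time Averages) and the memoryless property of the exponential distribution.

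First I would recall the stationary queue length distribution of an $M/M/1$ queue with utilization $\rho=\lambda/\mu<1$, which is geometric: $\Pr(N=k)=(1-\rho)\rho^k$ for $k\geq 0$. This follows from standard birth-death-chain balance equations for the continuous-time Markov chain describing the queue size. By PASTA, an arriving Poisson customer sees the system in its stationary distribution, so the number of customers $N$ it finds ahead of it (including the one in service, if any) is geometrically distributed with the same parameters.

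Next I would condition on $N=k$. Because service times are i.i.d.\ $\mathrm{Exp}(\mu)$ and, by the memoryless property, the residual service time of the customer currently being served is also $\mathrm{Exp}(\mu)$, the sojourn time $W$ of the tagged arriving customer conditional on $N=k$ is the sum of $k+1$ independent $\mathrm{Exp}(\mu)$ random variables (the $k$ customers ahead plus its own service). Using independence and the formula for the Laplace transform of an exponential, this gives
\begin{equation*}
E\!\left[e^{-sW}\mid N=k\right]=\left(\frac{\mu}{\mu+s}\right)^{k+1}.
\end{equation*}

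Finally I would uncondition by summing against the geometric distribution of $N$ and sum the resulting geometric series:
\begin{equation*}
E\!\left[e^{-sW}\right]=\sum_{k=0}^{\infty}(1-\rho)\rho^{k}\left(\frac{\mu}{\mu+s}\right)^{k+1}=\frac{(1-\rho)\mu}{\mu+s-\rho\mu}=\frac{\mu-\lambda}{(\mu-\lambda)+s},
\end{equation*}
which is the Laplace transform of $\mathrm{Exp}(\mu-\lambda)$, and uniqueness of Laplace transforms completes the proof. The only genuinely subtle step is PASTA, which is why I flag it explicitly; the geometric-series manipulation and the memorylessness argument are routine. If desired, one can bypass PASTA by directly analyzing the reversed process and appealing to Burke's theorem, but the PASTA route seems the cleanest and matches the level of exposition of the surrounding sections.
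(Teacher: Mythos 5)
Your proof is correct: the PASTA argument, the geometric stationary queue-length distribution, the memorylessness of the residual service time, and the Laplace-transform computation are all the standard textbook derivation of the $M/M/1$ sojourn-time distribution, and the algebra checks out (the series sums to $\tfrac{\mu-\lambda}{(\mu-\lambda)+s}$ as you state). Note, however, that the paper offers no proof of this lemma at all --- it is stated as a classical result and attributed to a queuing-theory reference --- so there is nothing in the paper to compare against; your write-up simply supplies the omitted classical argument. The one implicit hypothesis worth making explicit is the FCFS discipline (which the paper does assume for its queues), since the decomposition of the sojourn time into the $k$ residual/full services ahead plus the tagged customer's own service relies on it.
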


\begin{lemma}\label{lemma:sum_of_exp_bounded1}
Let $Y$ be the sum of $n$ independent and identically distributed exponential random variables (each with parameter $\mu>0$) and $\text{E}\left[Y\right]=\tfrac{n}{\mu}$.
Then, for $\alpha>1$:
\begin{align}
\Pr \left(Y < \alpha\text{E}\left[Y\right]\right) > 1-(2e^{-\alpha/2})^n.
\end{align}
\end{lemma}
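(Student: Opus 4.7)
The plan is the standard Chernoff/Markov argument applied to the moment generating function of an Erlang (Gamma) random variable, since $Y=\sum_{i=1}^{n} X_i$ with $X_i \sim \mathrm{Exp}(\mu)$ i.i.d. is precisely Erlang$(n,\mu)$. I would not attempt an inversion of the Gamma CDF, because the bound we want is exponential and matches exactly what Markov on an exponential transform yields.

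First, I would write down the MGF of an exponential random variable: for $t<\mu$,
\[
\mathrm{E}\!\left[e^{tX_i}\right]=\frac{\mu}{\mu-t},
\]
and hence by independence $\mathrm{E}\!\left[e^{tY}\right]=\left(\frac{\mu}{\mu-t}\right)^n$. Since $\mathrm{E}[Y]=n/\mu$, the statement we want to bound is $\Pr(Y\geq \alpha n/\mu)$ for $\alpha>1$.

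Second, I would apply Markov's inequality to the nonnegative random variable $e^{tY}$ for any $t\in(0,\mu)$:
\[
\Pr\!\left(Y \geq \tfrac{\alpha n}{\mu}\right)
= \Pr\!\left(e^{tY} \geq e^{t\alpha n/\mu}\right)
\leq e^{-t\alpha n/\mu}\left(\tfrac{\mu}{\mu-t}\right)^{\!n}.
\]
This inequality holds for every admissible $t$, so I am free to pick a convenient value rather than optimize. Choosing $t=\mu/2$ collapses the right-hand side neatly to
\[
\Pr\!\left(Y \geq \tfrac{\alpha n}{\mu}\right) \leq e^{-\alpha n/2}\cdot 2^{n} = \bigl(2\,e^{-\alpha/2}\bigr)^{n},
\]
which upon taking complements gives exactly the claimed inequality $\Pr(Y<\alpha\mathrm{E}[Y])>1-(2e^{-\alpha/2})^n$. (The optimizer is $t=\mu(1-1/\alpha)$, yielding the sharper Cramér rate $(\alpha e^{1-\alpha})^n$, but the authors' bound is strictly weaker and clearly what the statement asks for.)

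There is essentially no main obstacle here: the MGF of the exponential is classical, independence handles the product structure, and the choice $t=\mu/2$ is forced by the specific form of the desired bound. The only micro-issue is verifying $t=\mu/2<\mu$ so that the MGF is finite, which is immediate since $\mu>0$.
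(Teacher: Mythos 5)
Your proposal is correct and follows essentially the same route as the paper's own proof in Appendix E: compute the moment generating function of the Erlang sum, apply Markov's inequality to $e^{sY}$, and set $s=\mu/2$ to obtain the bound $(2e^{-\alpha/2})^n$. The only cosmetic difference is notation (the paper writes the rate as $\delta$), and your remark about the sharper optimized exponent is a correct but inessential aside.
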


Recall that: 
$\mu=p \geq\frac{1}{n}(1-\frac{1}{q})$ so $\mu \ge \frac{q-1}{qn}\geq\frac{1}{2n}$ for $q\geq 2$.
The random variable $T^{\overrightarrow{\text{arr}}}$ is the sum of $n$ independent random variables distributed exponentially with parameter $\mu/2$. From Lemma \ref{lemma:waiting_distr} we obtain that $T^{\overrightarrow{\text{cross}}}$ is the sum of $n$ independent random variables distributed exponentially with parameter $\mu-\lambda=\tfrac{\mu}{2}$. It is clear that $T^{\overleftarrow{\text{arr}}}$ is distributed exactly as $T^{\overrightarrow{\text{arr}}}$ and $T^{\overleftarrow{\text{cross}}}$ is distributed exactly as $T^{\overrightarrow{\text{cross}}}$. Therefore (for $\mu=\frac{1}{2n}$): $\text{E}\left[T^x\right]=\sum_{i=1}^n\frac{2}{\mu}=4n^2$. Using Lemma \ref{lemma:sum_of_exp_bounded1} (with $\alpha=2$) we obtain for $x\in{ \left\{\overrightarrow{\text{arr}},\overrightarrow{\text{cross}},\overleftarrow{\text{arr}},\overleftarrow{\text{cross}}\right\}}$:
\begin{align}
\Pr \left(T^x\leq 8n^2\right)\geq 1-\left(\tfrac{2}{e}\right)^n.
\end{align}

Using a union bound we get that:
\begin{align}
\Pr \left(T\leq 32n^2\right) &\geq \Pr \left(\cap_{x} T^x\leq 8n^2\right)
\\&=1-\Pr \left(\cup_{x} T^x > 8n^2\right)\\
&\geq 1-4\left(\tfrac{2}{e}\right)^n.
\end{align}
%where $x\in{ \left\{\overrightarrow{\text{arr}},\overrightarrow{\text{cross}},\overleftarrow{\text{arr}},\overleftarrow{\text{cross}}\right\}}$.
It is clear that $\Pr \left(T\leq 32n^2\right)$ increases when $\mu$ increases (faster server yields smaller waiting time); hence, the above inequality holds for any $\mu\geq \tfrac{1}{2n}$.

So, for the asynchronous time model and \ex we obtain an upper bound for the high probability stopping time: $\hat{T}=O(n^2)$  in timeslots, and thus $\hat{R}=O(n)$, in rounds. 
Let us now find an upper bound for the expected number of rounds needed to complete the task -- $\text{E}\left[R\right]$:
\begin{align}
\text{E}\left[R\right]&=\tfrac{1}{n}\text{E}\left[T\right]
\\&=\tfrac{1}{n}\text{E}\left[T^{\overrightarrow{\text{arr}}}+T^{\overrightarrow{\text{cross}}}+T^{\overleftarrow{\text{arr}}}+T^{\overleftarrow{\text{cross}}}\right]
\\&=\tfrac{4}{n}\text{E}\left[T^x\right]=\tfrac{4}{n}4n^2=16n=O(n).
\end{align}
%$$\text{E}\left[R\right]=\sum_{i=1}^{\infty}\Pr(R=i)i=\sum_{i=1}^{32n}\Pr(R=i)i+\sum_{i=32n+1}^{\infty}\Pr(R=i)i\le 32n+\sum_{i=32n+1}^{\infty}\Pr(R=i)i,$$
%
%but since: $\Pr \left(R \le 32n\right)\ge 1- 4\left(\tfrac{2}{e}\right)^n$, and thus $\Pr \left(R \ge 32n+1\right)\le 4\left(\tfrac{2}{e}\right)^n$ we obtain:
%$$\Pr(R=i)\le\Pr(R\ge i)\le 4\left(\tfrac{2}{e}\right)^{\frac{i-1}{32}},$$
%hence,
%$$\text{E}\left[R\right]\le 32n+\sum_{i=32n+1}^{\infty}4\left(\tfrac{2}{e}\right)^{\frac{i-1}{32}}i\le 32n+\sum_{i=1}^{\infty}4\left(\tfrac{2}{e}\right)^{\frac{i-1}{32}}i\le 32n+44,$$
%
%and thus: $\text{E}\left[R\right]=O(n)$.

The lower bound is clear since in order to finish the algebraic gossip task each node has to receive at least $n$ messages, so at least $n^2$ messages need to be sent and received. Since in each timeslot at most 2 messages (using \ex) are sent, we get: $\hat{T}=\Omega(n^2)$, thus $\hat{R}=\Omega(n)$, and $\text{E}\left[R\right]=\Omega(n)$. 

The result of Theorem \ref{thm:ring_is_linear} is then follows: $E[R]=\Theta(n)$, and $\hat{R}=\Theta(n)$.
\end{proof}

\section{Algebraic Gossip on Arbitrary Graphs} \label{sec:bound_for_arbitrary}

Now we are ready to prove our main results. First, we present the upper bound for any graph as a function of its maximum degree $\Delta$, and then we give corollaries that are applications of this result for more specific cases. 

\begin{theorem_main}[restated]
For the asynchronous time model and for any graph $G$ of size $n$ with maximum degree $\Delta$, the stopping time of algebraic gossip is $O(\Delta n)$ rounds both in expectation and with high probability.
\end{theorem_main}

\begin{proof} 
Consider an arbitrary graph $G$ of size $n$ with a maximum degree $\Delta$ and a vertex $v$.
%First, let us perform a (Breadth First Search) BFS on $G$ starting at $v$. The search results with a directed shortest path spanning tree $G_v$, rooted at $v$ where each node has a single edge toward it parent in the tree (except the root $v$). 
%In the analysis we will only consider messages that are sent on the edges of the spanning tree in the direction toward $v$, i.e., we ignore all messages received from non tree edges or in the opposite direction (see Fig. \ref{fig:reduction_to_queues} (b)).
We pick any spanning tree rooted at $v$ and will only 
consider messages that are sent from the tree edges towards v, i.e., we 
ignore all messages received from non-tree edges or in the opposite 
direction (see Fig. \ref{fig:reduction_to_queues} (b)).

\begin{figure*}[ht]
\centering
\scalebox{1}{\begin{tikzpicture}
[inner sep=0.6mm, place/.style={circle,draw=black,fill=blue!20,thick,minimum size=4.5mm},>=stealth,place1/.style={circle,draw=black,fill=green!20,thick,minimum size=4.5mm}]

\def\myX{0.3}

\node at (0+\myX,0) [place] (a) {\tiny{$$}};
\node at (1.5+\myX,0) [place] (b) {\tiny{$$}};
\node at (0.7+\myX,-1) [place] (c)  {\tiny{$$}};
\node at (2.2+\myX,-1) [place1] (d) {\tiny{$v$}};
\node at (3+\myX,0) [place] (e)  {\tiny{$$}};
\node at (3.7+\myX,-1) [place] (f)  {\tiny{$$}};
\node at (2.2+\myX,-2) [place] (g)  {\tiny{$$}};
\node at (4.5+\myX,0) [place] (h)  {\tiny{$$}};

\path (a) edge (b);
\path (a) edge (c);
\path (b) edge (c);
\path (d) edge (f);
\path (h) edge (f);
\path (e) edge (c);
\path (d) edge (g);
\path (c) edge (g);
\path (b) edge (g);
\path (b) edge (e);
\path (h) edge (e);
\path (e) edge (f);

\node at (\myX+2.2, -2.8) [auto]{(a)};

\def\myX{8.3}

\node at (0+\myX,0) [place] (a) {\tiny{$$}};
\node at (1.5+\myX,0) [place] (b) {\tiny{$$}};
\node at (0.7+\myX,-1) [place] (c)  {\tiny{$$}};
\node at (2.2+\myX,-1) [place1] (d) {\tiny{$v$}};
\node at (3+\myX,0) [place] (e)  {\tiny{$$}};
\node at (3.7+\myX,-1) [place] (f)  {\tiny{$$}};
\node at (2.2+\myX,-2) [place] (g)  {\tiny{$$}};
\node at (4.5+\myX,0) [place] (h)  {\tiny{$$}};

\path[->,black,thick] (a) edge (b);
\path[opacity=0.09] (a) edge (c);
\path[opacity=0.09] (b) edge (c);
\path[<-,black,thick] (d) edge (f);
\path[->,black,thick] (h) edge (f);
\path[->,black,thick] (e) edge (f);
\path[opacity=0.09] (e) edge (c);
\path[<-,black,thick] (d) edge (g);
\path[->,black,thick] (c) edge (g);
\path[->,black,thick] (b) edge (g);
\path[opacity=0.09] (b) edge (e);
\path[opacity=0.09] (h) edge (e);

\node at (\myX+2.2, -2.8) [auto]{(b)};

\def\myX{0}
\def\myY{-4.3}

\foreach \x /\y/\thead/\ttail/\tname in {0/0/a/a2/v_1,2/0/b/b2/v_2,0/-1/c/c2/v_3,6/-1/v/v2/v,0/-2/e/e2/v_5,4/-2.5/f/f2/v_6,4/-0.5/g/g2/v_7,0/-3/h/h2/v_8}{
\node at (\x-0.95+\myX,\y+\myY) [auto] (\ttail)  {\tiny{}};
\shade[left color=yellow!20,right color=blue!10,draw=black] (\x-1+\myX,\y-0.18+\myY) rectangle (\x-0.24+\myX,\y+0.18+\myY);
\draw[white] (\x-1+\myX,\y+0.18+\myY) -- (\x-1+\myX,\y-0.18+\myY);

\shade[top color=green!80,draw=black!80] (\x-0.46+\myX,\y-0.18+\myY) rectangle (\x-0.24+\myX,\y+0.18+\myY);

%\ifthenelse{\equal{\thead}{a}}
%{
%\shade[top color=green!80,draw=black!80] (\x-0.46+\myX,\y-0.18+\myY) rectangle (\x-0.24+\myX,\y+0.18+\myY);
%\shade[top color=green!80,draw=black!80] (\x-0.46+\myX-0.22,\y-0.18+\myY) rectangle (\x-0.24+\myX-0.22,\y+0.18+\myY);
%}

%\ifthenelse{\equal{\thead}{b}}
%{
%\shade[top color=green!80,draw=black!80] (\x-0.46+\myX,\y-0.18+\myY) rectangle (\x-0.24+\myX,\y+0.18+\myY);
%\shade[top color=green!80,draw=black!80] (\x-0.46+\myX-0.22,\y-0.18+\myY) rectangle (\x-0.24+\myX-0.22,\y+0.18+\myY);
%\shade[top color=green!80,draw=black!80] (\x-0.46+\myX-0.22-0.22,\y-0.18+\myY) rectangle (\x-0.24+\myX-0.22-0.22,\y+0.18+\myY);
%}

%\ifthenelse{\equal{\thead}{c}}
%{
%}
%
%\ifthenelse{\equal{\thead}{v}}
%{
%\shade[top color=green!80,draw=black!80] (\x-0.46+\myX,\y-0.18+\myY) rectangle (\x-0.24+\myX,\y+0.18+\myY);
%}
%
%\ifthenelse{\equal{\thead}{e}}
%{
%}
%
%\ifthenelse{\equal{\thead}{f}}
%{
%}
%
%\ifthenelse{\equal{\thead}{g}}
%{
%\shade[top color=green!80,draw=black!80] (\x-0.46+\myX,\y-0.18+\myY) rectangle (\x-0.24+\myX,\y+0.18+\myY);
%}
%
%\ifthenelse{\equal{\thead}{h}}
%{
%\shade[top color=green!80,draw=black!80] (\x-0.46+\myX,\y-0.18+\myY) rectangle (\x-0.24+\myX,\y+0.18+\myY);
%}
%
\ifthenelse{\equal{\thead}{v}}{\node at (\x+\myX,\y+\myY) [place1] (\thead)  {\tiny{$\tname$}};}
{\node at (\x+\myX,\y+\myY) [place] (\thead)  {};}
}

\path[->,black,thick] (a) edge (b2);
\path[->,black,thick] (c) edge (g2);
\path[->,black,thick] (b) edge (g2);
\path[->,black,thick] (e) edge node [above,sloped,black] {{$\mu=p=\tfrac{1}{n\Delta}$}} (f2);
\path[->,black,thick] (h) edge (f2);
\path[->,black,thick] (g) edge (v2);
\path[->,black,thick] (f) edge (v2);

\node at (\myX+2.5, \myY-3.7) [auto]{(c)};

\def\myX{8}

\foreach \x /\y/\thead/\ttail/\tname in {0/0/a/a2/v_1,2/0/b/b2/v_2,6/0/v/v2/v,4/0/g/g2/v_7}{
\node at (\x-0.95+\myX,\y+\myY) [auto] (\ttail)  {\tiny{}};
\shade[left color=yellow!20,right color=blue!10,draw=black] (\x-1.25+\myX,\y-0.18+\myY) rectangle (\x-0.24+\myX,\y+0.18+\myY);
\draw[white] (\x-1.25+\myX,\y+0.18+\myY) -- (\x-1.25+\myX,\y-0.18+\myY);
\ifthenelse{\equal{\thead}{a}}
{
\shade[top color=green!80,draw=black!80] (\x-0.46+\myX,\y-0.18+\myY) rectangle (\x-0.24+\myX,\y+0.18+\myY);
\shade[top color=green!80,draw=black!80] (\x-0.46+\myX-0.22,\y-0.18+\myY) rectangle (\x-0.24+\myX-0.22,\y+0.18+\myY);
\shade[top color=green!80,draw=black!80] (\x-0.46+\myX-0.22-0.22,\y-0.18+\myY) rectangle (\x-0.24+\myX-0.22-0.22,\y+0.18+\myY);
\shade[top color=green!80,draw=black!80] (\x-0.46+\myX-0.22-0.22-0.22,\y-0.18+\myY) rectangle (\x-0.24+\myX-0.22-0.22-0.22,\y+0.18+\myY);
}

\ifthenelse{\equal{\thead}{b}}
{
\shade[top color=green!80,draw=black!80] (\x-0.46+\myX,\y-0.18+\myY) rectangle (\x-0.24+\myX,\y+0.18+\myY);
%\shade[top color=green!80,draw=black!80] (\x-0.46+\myX-0.22,\y-0.18+\myY) rectangle (\x-0.24+\myX-0.22,\y+0.18+\myY);
%\shade[top color=green!80,draw=black!80] (\x-0.46+\myX-0.22-0.22,\y-0.18+\myY) rectangle (\x-0.24+\myX-0.22-0.22,\y+0.18+\myY);
}

\ifthenelse{\equal{\thead}{v}}
{
\shade[top color=green!80,draw=black!80] (\x-0.46+\myX,\y-0.18+\myY) rectangle (\x-0.24+\myX,\y+0.18+\myY);

}

\ifthenelse{\equal{\thead}{g}}
{
\shade[top color=green!80,draw=black!80] (\x-0.46+\myX,\y-0.18+\myY) rectangle (\x-0.24+\myX,\y+0.18+\myY);
\shade[top color=green!80,draw=black!80] (\x-0.46+\myX-0.22,\y-0.18+\myY) rectangle (\x-0.24+\myX-0.22,\y+0.18+\myY);
}

\ifthenelse{\equal{\thead}{v}}{\node at (\x+\myX,\y+\myY) [place1] (\thead)  {\tiny{$\tname$}};}
{\node at (\x+\myX,\y+\myY) [place] (\thead)  {};}
}

\path[->,black,thick] (a) edge node [above,sloped,black, near start] {{$\mu$}}(b2);
\path[->,black,thick] (b) edge node [above,sloped,black, near start] {{$\mu$}}(g2);
\path[->,black,thick] (g) edge node [above,sloped,black, near start] {{$\mu$}}(v2);

\node at (\myX+2.5, \myY-1) [auto]{(d)};

\def\myY{-7}
\def\myXX{0.5}
\def\myYY{1.3}

\foreach \x /\y/\thead/\ttail/\tname in {0/0/a/a2/v_1,2/0/b/b2/v_2,6/0/v/v2/v,4/0/g/g2/v_7}{
\node at (\x-0.95+\myX,\y+\myY) [auto] (\ttail)  {\tiny{}};
\shade[left color=yellow!20,right color=blue!10,draw=black] (\x-1+\myX,\y-0.18+\myY) rectangle (\x-0.24+\myX,\y+0.18+\myY);
\draw[white] (\x-1+\myX,\y+0.18+\myY) -- (\x-1+\myX,\y-0.18+\myY);
\ifthenelse{\equal{\thead}{a}}
{
\shade[top color=green!80,draw=black!80] (\x-0.46+\myX+\myXX,\y-0.18+\myY+\myYY) rectangle (\x-0.24+\myX+\myXX,\y+0.18+\myY+\myYY);
\shade[top color=green!80,draw=black!80] (\x-0.46+\myX-0.22+\myXX,\y-0.18+\myY+\myYY) rectangle (\x-0.24+\myX-0.22+\myXX,\y+0.18+\myY+\myYY);
\shade[top color=green!80,draw=black!80] (\x-0.46+\myX-0.22-0.22+\myXX,\y-0.18+\myY+\myYY) rectangle (\x-0.24+\myX-0.22-0.22+\myXX,\y+0.18+\myY+\myYY);
\shade[top color=green!80,draw=black!80] (\x-0.46+\myX-0.22-0.22*6+\myXX,\y-0.18+\myY+\myYY) rectangle (\x-0.24+\myX-0.22-0.22*6+\myXX,\y+0.18+\myY+\myYY);
\shade[top color=green!80,draw=black!80] (\x-0.46+\myX-0.22-0.22*2+\myXX,\y-0.18+\myY+\myYY) rectangle (\x-0.24+\myX-0.22-0.22*2+\myXX,\y+0.18+\myY+\myYY);
\shade[top color=green!80,draw=black!80] (\x-0.46+\myX-0.22-0.22*3+\myXX,\y-0.18+\myY+\myYY) rectangle (\x-0.24+\myX-0.22-0.22*3+\myXX,\y+0.18+\myY+\myYY);
\shade[top color=green!80,draw=black!80] (\x-0.46+\myX-0.22-0.22*4+\myXX,\y-0.18+\myY+\myYY) rectangle (\x-0.24+\myX-0.22-0.22*4+\myXX,\y+0.18+\myY+\myYY);
\shade[top color=green!80,draw=black!80] (\x-0.46+\myX-0.22-0.22*5+\myXX,\y-0.18+\myY+\myYY) rectangle (\x-0.24+\myX-0.22-0.22*5+\myXX,\y+0.18+\myY+\myYY);
}

\ifthenelse{\equal{\thead}{v}}{\node at (\x+\myX,\y+\myY) [place1] (\thead)  {\tiny{$\tname$}};}
{\node at (\x+\myX,\y+\myY) [place] (\thead)  {};}
}

\path[->,black,thick] (a) edge node [above,sloped,black, near start] {{$\mu$}}(b2);
\path[->,black,thick] (b) edge node [above,sloped,black, near start] {{$\mu$}}(g2);
\path[->,black,thick] (g) edge node [above,sloped,black, near start] {{$\mu$}}(v2);

\draw[->,thick] (-0.46+\myX+\myXX+0.22, \myY+\myYY) .. controls(-0.46+\myX+\myXX+1+0.54, \myY+\myYY-1.3) and (-0.46+\myX+\myXX+0.22-2.5, \myY+\myYY-0.3) .. node [above,sloped,black] {{\small{$\lambda=\mu/2$}}}(-1+\myX, \myY);

\node at (\myX+2.5, \myY-1) [auto]{(e)};

\end{tikzpicture}}
\caption{\small{Reduction of Algebraic Gossip to a system of queues. (a) -- Initial graph $G$. (b) -- Spanning tree rooted at $v$, $G_v$. (c) -- System of queues $Q_{n}^{tree}$. (d) -- System of queues $Q_{l_{\max}}^{line}$. Stopping time of $Q_{l_{\max}}^{line}$ is larger than of $Q_{n}^{tree}$. (e)--Taking all customers out of the system and use Jackson theorem for open networks.}}
\label{fig:reduction_to_queues}
\end{figure*}
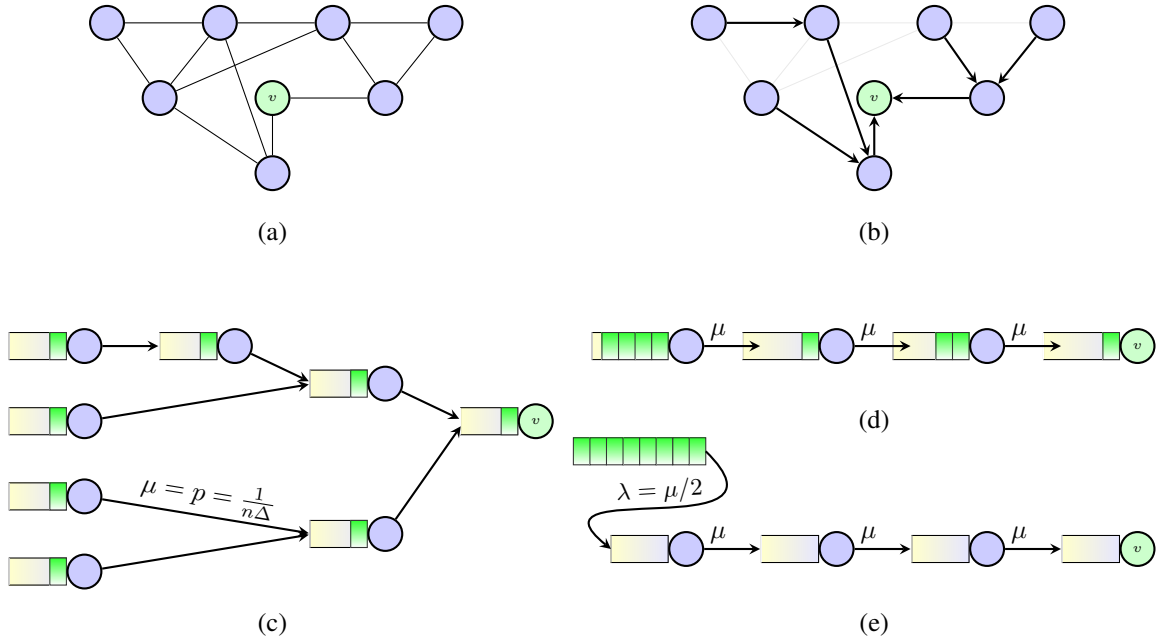

Now, let us concentrate on the information flow towards node $v$ from all other nodes. As in the proof of Theorem \ref{thm:ring_is_linear}, we will define a queuing system with a queue at each node (see Fig. \ref{fig:reduction_to_queues} (c)). The following lemma shows that we can model the service time at each queue as an exponential random variable with parameter $\mu=p$. 

Let $G_v$ be a tree of size $n$ rooted by  node $v$. Let $\mathcal{N}(G_v, \mathcal{S})$ be a network of $n$ queues where for each node $u$ in $G_v$ there is a queue and the queue output is connected to the input of the queue corresponding to the parent of $u$ in $G_v$. In addition, each queue is of infinite size and initially has one customer in the queue (see Fig. \ref{fig:reduction_to_queues}) (c)). The servers of all the queues work with a service time distributed as $\mathcal{S}$.
Let $T(G_v, \mathcal{S})$ be the random time by which all the $n$ customers in $\mathcal{N}(G_v, \mathcal{S})$ arrive to the queue of $v$ (we assume $v$ does not serve the customers).

\begin{lemma}\label{lemma:exp_server_instead_of_geom}
For any tree $G_v$ and $0 < p \le 1$:
\begin{align*}
&\Pr\left(T(G_v, \text{Geom}(p))\leq t\right)\geq\Pr\left(T(G_v, \text{Exp}(p))\leq t\right),\\
&\text{ for all } t\ge0.
\end{align*}
\end{lemma}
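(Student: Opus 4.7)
The plan is to prove the lemma by a sample-path coupling of the two systems $\mathcal{N}(G_v,\text{Geom}(p))$ and $\mathcal{N}(G_v,\text{Exp}(p))$ on a common probability space, followed by the monotonicity of the completion time as a function of the service-time vector. Lemma \ref{lemma:geometric_as_exponential} already delivers the pointwise stochastic dominance at the level of a single service time; the task is to lift this to the many-queue, many-customer level.

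For the coupling, I draw one independent uniform $U_{i,k}\sim\mathrm{Uniform}[0,1]$ for every queue $i$ (every non-root node of $G_v$) and every position index $k\ge 1$, and set $S^G_{i,k}=F_G^{-1}(U_{i,k})$ and $S^E_{i,k}=F_E^{-1}(U_{i,k})$, where $F_G$ and $F_E$ are the CDFs of $\text{Geom}(p)$ and $\text{Exp}(p)$. Lemma \ref{lemma:geometric_as_exponential} states $F_G(x)\ge F_E(x)$ for every $x\ge 0$, so $S^G_{i,k}\le S^E_{i,k}$ almost surely. These sequences are used as the service times at the corresponding queues in the two coupled systems, while the initial contents (one customer per node) and the deterministic routing along $G_v$ toward $v$ are kept identical.

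Next I will show that $T(G_v,\text{Geom}(p))\le T(G_v,\text{Exp}(p))$ pointwise on this coupled space, which is the classical sample-path monotonicity of FCFS queueing networks with deterministic routing: if each queue's position-indexed service-time sequence is coordinatewise smaller in one system than in another, then every departure time, and in particular the time at which the last customer reaches the root, is no larger. I would prove this by induction on the chronological ordering of completion events, using the Lindley recursion $D_{i,k}=\max(D_{i,k-1},A_{i,k})+S_{i,k}$ together with the monotonicity of $\max$ and of addition. The arrivals $A_{i,k}$ are themselves earlier-time departures from children of $i$ in $G_v$, so the induction propagates naturally up the tree from the leaves to the root.

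The step I expect to require the most care, and the main obstacle in turning the above into a rigorous argument, is that under the coupling customers may arrive at an given internal queue in different orders in the two systems, so the ``$k$-th customer served at $i$'' is not necessarily the same physical customer in $\mathcal{N}(G_v,\text{Geom}(p))$ and $\mathcal{N}(G_v,\text{Exp}(p))$. I would circumvent this by attaching the service-time sequence $\{S_{i,k}\}_k$ to positions rather than to customer identities: since each queue is FCFS and the $S_{i,k}$ are i.i.d., a rearrangement of the customers at queue $i$ leaves the multiset of departure times from $i$ unchanged, so the completion time at the root is a deterministic, monotone non-decreasing function of the position-indexed service-time vector $(S_{i,k})_{i,k}$. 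Combining this monotonicity with the pointwise coupling $S^G_{i,k}\le S^E_{i,k}$ gives the sample-path inequality $T(G_v,\text{Geom}(p))\le T(G_v,\text{Exp}(p))$, which in particular implies the stochastic domination claimed in the lemma.
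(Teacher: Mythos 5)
Your proposal is correct, but it proves the lemma by a different mechanism than the paper does. The paper's proof (Appendix C) fixes a reverse topological order $v^1,\dots,v^n=v$ of the tree and argues by induction on the node index that, for each individual customer $c$ and each node $v^i$ on its path, the \emph{marginal} probability that $c$ has reached $v^i$ by time $t$ is at least as large in $\mathcal{N}(G_v,\text{Geom}(p))$ as in $\mathcal{N}(G_v,\text{Exp}(p))$; the induction step simply invokes Lemma \ref{lemma:geometric_as_exponential} and asserts that "faster arrivals plus faster service" propagate the dominance. You instead build a single probability space via the quantile coupling $S^G_{i,k}=F_G^{-1}(U_{i,k})\le F_E^{-1}(U_{i,k})=S^E_{i,k}$ and reduce the lemma to a deterministic monotonicity statement: in a feedforward FCFS network the departure-time multiset at each node is a coordinatewise nondecreasing function of the position-indexed service times, proved via the Lindley recursion $D_{i,k}=\max(D_{i,k-1},A_{i,(k)})+S_{i,k}$ and the fact that merging coordinatewise-dominated sorted sequences preserves domination. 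Your route buys two things the paper's argument glosses over. First, the event $\{T(G_v,\cdot)\le t\}$ is the \emph{intersection} over all customers of the per-customer arrival events, and marginal dominance of each $\Pr(\mathcal{G}^n_c(t))\ge\Pr(\mathcal{E}^n_c(t))$ does not by itself yield dominance of the intersection probability; your pathwise inequality $T^G\le T^E$ a.s.\ under the coupling handles the joint event automatically. Second, your explicit treatment of the reordering issue (attaching service times to queue positions, so that only the arrival-time multiset matters downstream) makes precise the step that the paper leaves at the level of "the claim follows." The cost is that you must state and prove the network monotonicity lemma, which is essentially the content of the paper's own Lemma \ref{lemma:later_arrivals} used later for Theorem \ref{thm:tree_of_queues}; your argument could in fact be shortened by citing that lemma. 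Both approaches rest on the same single-server comparison, Lemma \ref{lemma:geometric_as_exponential}.
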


The result of this lemma is that any probabilistic upper bound on the stopping time of $v$ in a tree network with exponential servers holds for the same tree network with geometric servers (both with the same parameter $p$ and initially one customer at each queue).

Once all \emph{real} customers arrive at $v$, it will reach rank $n$ and will finish the algebraic gossip task. Now we have to calculate the service time parameter $p$. The degree of each node in $G$ is at most $\Delta$. 
Each node in $G_v$, except $v$, has a parent.
Since we virtually remove (i.e., ignore) all edges that do not belong to $G_v$, at each node there is exactly one edge that goes towards the root $v$. Therefore, the probability that a customer will be serviced (transmitted towards $v$) at the end of a given timeslot is at least:
$p\geq \left(\tfrac{2}{n}\cdot\tfrac{1}{\Delta}\right)(1-\tfrac{1}{q})$,
where $\frac{2}{n}\cdot \tfrac{1}{\Delta}=\tfrac{2}{n\Delta}$ is the probability that in the \ex algorithm a message will be sent on the edge that goes towards $v$ during one timeslot, and $(1-\frac{1}{q})$ is the minimal probability that the message is \emph{helpful} (Lemma \ref{lemma_helpful_rank}).
Clearly, $p\geq\frac{1}{n\Delta}$ for $q\geq 2$, so we set our exponential servers to work with rate $\mu=\frac{1}{n\Delta}$.

\begin{theorem}
\label{thm:tree_of_queues}
Let $Q_n^{tree}$ be a network of $n$ nodes arranged in a tree topology, rooted at the node $v$.
%The depth of the tree is $l_{\max}$. 
Each node has an infinite queue, and a single exponential server with parameter $\mu$. Initially, there is a single customer in every queue. The time by which all $n$ customers leave the network via the root node $v$ is $t({Q}_n^{tree})=O(n/\mu)$ with high probability. Formally, for any $\alpha>1$:
\begin{align}
\Pr \left(t({Q}_n^{tree})< \alpha 4n/\mu\right) > 1-2(2e^{-\alpha/2})^n.
\end{align}
\end{theorem}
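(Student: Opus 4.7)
The plan is to mimic the queuing-theoretic argument used in the proof of Theorem \ref{thm:ring_is_linear}, but adapted from a path to a tree. The reduction will proceed in two steps: first I would reduce the tree network $Q_n^{tree}$ to a line network, and then I would apply Jackson's theorem to the resulting line, as illustrated in Fig.~\ref{fig:reduction_to_queues}(d) and (e).

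Step 1 (tree to line). Let $l_{\max}\le n$ be the depth of the tree rooted at $v$. I would consider a line network $Q_{l_{\max}}^{line}$ of $l_{\max}$ exponential queues with rate $\mu$ in series, and argue that the stopping time of $Q_n^{tree}$ is stochastically dominated by that of $Q_{l_{\max}}^{line}$ (the arrangement in Fig.~\ref{fig:reduction_to_queues}(d)). The intuition is that in the tree, customers from different subtrees are served in parallel by distinct queues, whereas in the line they are funnelled through a single chain of servers; and since each tree customer traverses at most $l_{\max}$ queues on its way to $v$, loading all $n$ customers onto a line of $l_{\max}$ queues can only be slower. To make this rigorous I would construct an explicit coupling that drives both systems with the same exponential clocks at corresponding queues and show by induction over service events that each customer reaches the root in the tree no later than its counterpart exits the line.

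Step 2 (Jackson's theorem on the line). Following the exact pattern of Theorem \ref{thm:ring_is_linear}, I would remove all $n$ real customers from $Q_{l_{\max}}^{line}$ and reinject them into the head queue as a Poisson process with rate $\lambda=\mu/2$, then prefill every queue with \emph{dummy} customers drawn from the stationary distribution guaranteed by Jackson's Theorem (valid since $\rho_i=\lambda/\mu=1/2<1$); both modifications only increase the stopping time (Fig.~\ref{fig:reduction_to_queues}(e)). By Jackson's Theorem the queue lengths are independent in the stationary regime, so by Lemma \ref{lemma:waiting_distr} the sojourn time at each queue is an independent $\mathrm{Exp}(\mu-\lambda)=\mathrm{Exp}(\mu/2)$ random variable.

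Step 3 (tail bound). I would split the total time as $T^a+T^b$, where $T^a$ is the time until the $n$-th real customer is injected and $T^b$ is the time this customer needs to cross all $l_{\max}\le n$ queues. Then $T^a$ is a sum of $n$ i.i.d.\ $\mathrm{Exp}(\mu/2)$ variables with mean $2n/\mu$, and $T^b$ is a sum of at most $n$ i.i.d.\ $\mathrm{Exp}(\mu/2)$ variables with mean at most $2n/\mu$. Applying Lemma \ref{lemma:sum_of_exp_bounded1} to each of $T^a$ and $T^b$ with parameter $\alpha$ and taking a union bound over the two events yields $t(Q_n^{tree})\le T^a+T^b<\alpha\cdot 4n/\mu$ with probability at least $1-2(2e^{-\alpha/2})^n$, which is exactly the claim. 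The main obstacle is Step~1: the informal ``parallel versus sequential'' comparison has to be turned into a rigorous stochastic-dominance statement by an explicit coupling between the two queuing networks, since the rest of the argument is a direct transcription of the ring proof.
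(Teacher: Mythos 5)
Your proposal follows essentially the same route as the paper: reduce the tree to a line of $l_{\max}$ queues with all $n$ customers at the farthest queue, then inject them as a Poisson stream at rate $\lambda=\mu/2$, prefill with dummy customers, and invoke Jackson's theorem together with Lemma \ref{lemma:waiting_distr} and Lemma \ref{lemma:sum_of_exp_bounded1} to get the stated tail bound. The stochastic-dominance step you defer to a coupling is exactly what the paper's appendix carries out, via the ``later arrivals yield later departures'' lemma and a chain of intermediate systems ($\hat{Q}_n^{tree}$, $Q_{l_{\max}}^{line}$, $\grave{Q}_{l_{\max}}^{line}$, $\hat{Q}_{l_{\max}}^{line}$), so you have correctly identified both the strategy and the one genuinely technical ingredient.
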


The main idea of the Theorem \ref{thm:tree_of_queues} proof is to show that the stopping time of the network $Q_n^{tree}$ (i.e., the time by which all the customers leave the network) is stochastically \footnote{For completeness, stochastic dominance is formally defined in Appendix \ref{app:K-1}.} smaller or equal to the stopping time of the systems of $l_{\max}$ queues arranged in a line topology -- $Q_{l_{\max}}^{line}$ ($l_{\max}$ is the depth of the tree $Q_n^{tree}$). Then, we make the system $Q_{l_{\max}}^{line}$ stochastically slower by moving all the customers out and make them enter the system via the farthest queue with the rate $\lambda=\mu/2$. Finally, we use Jackson's Theorem for open networks
(similar to the proof of Theorem \ref{thm:ring_is_linear}) 
to find the stopping time of the system. See Fig. \ref{fig:reduction_to_queues} for the illustration. The full proof of Theorem \ref{thm:tree_of_queues} can be found in Appendix \ref{app:K}.

Using Theorem \ref{thm:tree_of_queues} for the tree $G_v$ and with $\mu =\tfrac{1}{\Delta n}$, we obtain the stopping time of the node $v$: $T_v < \alpha 4n^2\Delta$ with probability of at least  $1-2(2e^{-\alpha/2})^n$. 

The same analysis holds for any node $u\in V$, i.e., we consider a spanning tree rooted at $u$, define a queuing system on the tree $G_u$, and find the stopping time of $u$, $T_u$. So, we can use a union bound to obtain the stopping time of all the nodes in $G$:
\begin{align}\label{eq:t_upper}
\Pr \left(\bigcap_{u\in V} (T_u <\alpha 4n^2\Delta)\right) \ge 1-2n(2e^{-\alpha/2})^n.
\end{align}
%\michael{The rest of the proof will be different now (and simpler)!!}
%
%\michael{Review the proof of Theorem 3, and also the line of queues with Jackson (Lemma 15)!! There are now n customers, so, we can use the result of Theorem 1. And get also better probability that will suffice for the expected value.}
%
%Using union bound on $T^a$ and $T^b$, we obtain a high probability result for stopping time (in timeslots) of any node $v\in G$:
%\begin{align*}
%\Pr\left(T_v \ge 4\alpha n^2\Delta\right) &\leq \Pr\left(T^a\ge 2\alpha n^2\Delta\right)+\Pr\left(T^b\ge 2\alpha n^2\Delta\right)\\
%&\leq k\left(2e^{-\alpha/2}\right)^{n}+n\left(2e^{-\alpha/2}\right)^{n}\\
%&\leq 2n\left(2e^{-\alpha/2}\right)^{n}.
%\end{align*}
%Note that for every node $u \in G$ we can build $G_u$ and repeat the same argument of information flow in a tree toward the root $u$, moreover these flows are all simultaneous. Hence, once again using union bound we extend the result to the stopping time $T$ of all the $n$ nodes:
%\begin{align}\label{eq:t_upper}
%\Pr\left(T \ge 4\alpha n^2\Delta\right)\leq\sum_{v\in V}\Pr\left(T_v \ge 4\alpha n^2\Delta\right)\leq 2n^2\left(2e^{-\alpha/2}\right)^{n}.
%\end{align}
%
%By letting $\alpha = 2$ we obtain:
%
%$$\Pr\left(T \ge 8n^2\Delta\right)\leq 2n^2\left(\tfrac{2}{e}\right)^{n}.$$
By letting $\alpha = 2$ we obtain:
\begin{align}
\Pr \left(\bigcap_{u\in V} (T_u <8n^2\Delta)\right) \ge 1-2n(\tfrac{2}{e})^n.
\end{align}
So, we determined that the stopping time of the algebraic gossip in $G$ is $O(\Delta n^2)$ timeslots with high probability and thus: 
$\hat{R}=O(\Delta n).$

The high probability bound of Eq. \eqref{eq:t_upper} is true for any $\alpha > 1$ and therefore strong enough to bound the expectation (see proof in the Appendix \ref{app:F}) and finish the proof of Theorem \ref{thm:upper_bound_any_graph}:
\begin{align}\label{eq:expected}
E[T] = O(\Delta n^2)\text{ and } E[R]=O(\Delta n).
\end{align}
\end{proof}

From Theorem \ref{thm:upper_bound_any_graph}, and since the maximum degree is at most $n$ we can derive a general upper bound of algebraic gossip on any graph.
\begin{corollary}\label{cor:upper}
For the asynchronous time model and any graph $G$ of size $n$, the gossip stopping time of the algebraic gossip task is $O(n^2)$ rounds, both in expectation and with high probability.
\end{corollary}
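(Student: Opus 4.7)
The plan is to derive this corollary as an immediate specialization of Theorem~\ref{thm:upper_bound_any_graph}, since all the heavy lifting (the reduction to a rooted BFS tree, the geometric-to-exponential service time comparison, the application of Jackson's theorem, and the union bound over all nodes) has already been carried out. First I would observe that for any simple connected graph $G$ on $n$ vertices, the maximum degree satisfies $\Delta \leq n-1$, and in particular $\Delta = O(n)$. Then I would invoke Theorem~\ref{thm:upper_bound_any_graph} directly: it gives a bound of $O(\Delta n)$ rounds for the algebraic gossip stopping time both in expectation and with high probability, and substituting $\Delta = O(n)$ yields $O(n^2)$ rounds in both senses.

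There is essentially no obstacle here — the corollary is a one-line substitution. The only minor point worth stating explicitly in the write-up is that the high-probability bound derived in the proof of Theorem~\ref{thm:upper_bound_any_graph} is a uniform statement that holds for every graph of size $n$ once we plug in the worst-case $\Delta$, so the dependence on $n$ claimed in the corollary is valid without any further probabilistic argument or re-derivation. Likewise, the expectation bound from equation~\eqref{eq:expected} transfers directly.

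If I wanted to be slightly more careful, I would remark that the $O(n^2)$ bound in rounds is equivalent to an $O(n^3)$ bound in timeslots (since one round equals $n$ timeslots), matching the intuition that in the worst case each of the $n$ nodes needs $n$ helpful messages and each timeslot produces at most a constant number of transmissions. I would not redo any of the queuing analysis; the entire content of the corollary is the observation $\Delta \leq n$ applied to the already-established bound of Theorem~\ref{thm:upper_bound_any_graph}.
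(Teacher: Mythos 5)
Your proposal is correct and is exactly the paper's argument: the corollary is obtained by substituting the trivial bound $\Delta \leq n-1 = O(n)$ into the $O(\Delta n)$ bound of Theorem~\ref{thm:upper_bound_any_graph}, with both the high-probability and expectation statements carrying over directly. No further work is needed.
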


We can use Theorem \ref{thm:upper_bound_any_graph} to obtain a tight linear bound of algebraic gossip on graphs with a constant maximum degree. We note that previous bounds for this case %, like \cite{Mosk-Aoyama2006Information} 
are not tight, for example, for the ring graph the bound of \cite{Mosk-Aoyama2006Information} is $O(n^2)$.

\begin{corollary}
\label{corollary_any_ex_nc_linear}
For the asynchronous time model and any graph $G$ of size $n$ with a constant maximum degree $\Delta $, the gossip stopping time of the algebraic gossip task is $O(n)$ rounds both in expectation and with high probability.
%$\hat{R}=\Theta(n).$
\end{corollary}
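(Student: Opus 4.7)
The plan is to combine the upper bound of Theorem \ref{thm:upper_bound_any_graph} with a simple counting lower bound. Since $\Delta$ is a constant by assumption, Theorem \ref{thm:upper_bound_any_graph} immediately gives $E[R] = O(\Delta n) = O(n)$ and $\hat{R} = O(\Delta n) = O(n)$ for any gossip algorithm in $\{\push,\pull,\ex\}$ on $G$, so only the matching linear lower bound remains.

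For the lower bound, I would argue exactly as in the final paragraph of the proof of Theorem \ref{thm:ring_is_linear}: in order for algebraic gossip to terminate, every one of the $n$ nodes must have received at least $n-1$ linearly independent messages from other nodes, so at least $n(n-1) = \Omega(n^2)$ successful message receptions must occur. In each timeslot at most two messages are transmitted (in the \ex case; at most one for \push and \pull), so the total number of timeslots is $\Omega(n^2)$, i.e., $\hat{R} = \Omega(n)$ and $E[R] = \Omega(n)$. This bound is independent of the graph topology and in particular holds for bounded-degree graphs.

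Putting the two bounds together yields $E[R] = \Theta(n)$ and $\hat{R} = \Theta(n)$, which is the desired tight linear bound. There is no significant obstacle here: the corollary is a direct specialization of Theorem \ref{thm:upper_bound_any_graph} to the regime $\Delta = O(1)$, matched by a trivial information-theoretic lower bound that does not rely on the structure of $G$. The only point that deserves a brief comment is that the lower-bound argument uses nothing beyond the fact that each timeslot carries $O(1)$ messages, which is why it applies uniformly to all three gossip algorithms considered.
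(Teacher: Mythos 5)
Your proposal is correct and matches the paper's (implicit) argument: the paper states Corollary \ref{corollary_any_ex_nc_linear} as a direct specialization of Theorem \ref{thm:upper_bound_any_graph} to constant $\Delta$, and the matching $\Omega(n)$ lower bound is exactly the counting argument given at the end of the proof of Theorem \ref{thm:ring_is_linear} (each node must receive $\Omega(n)$ independent messages, so $\Omega(n^2)$ timeslots are needed since at most two messages are sent per timeslot). No gaps; this is the same route.
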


We now show that the upper bound $O(\Delta n)$, presented in Theorem \ref{thm:upper_bound_any_graph}, is tight in the sense that 
%for each $2 \le \Delta \le n/2 + 1$ 
for almost any $\Delta$
there exists a graph for which algebraic gossip takes $\Omega(\Delta n)$ rounds.

\begin{theorem}
\label{thm:lower_bound_any_graph}
For any constant $\epsilon>0$ and $2 \le \Delta \le (1-\epsilon)n$, and for the asynchronous time model there exists a graph $G$ of size $n$ with maximum degree $\Delta$ for which algebraic gossip takes $\Omega(\Delta n)$ rounds both in expectation and with high probability. In particular, there is a graph for which the stopping time is $\Omega(n^2)$ rounds both in expectation and with high probability.
\end{theorem}

\begin{proof}%[\textbf{Proof of Theorem \ref{thm:lower_bound_any_graph}}]
%We present here the proof for the case of an even $n$ and $\Delta=n/2+1$ which can be easily extended to smaller $\Delta$.
In order to prove this result we will need the following lemma; the proof can be found in Appendix \ref{app:G}:

\begin{lemma}
\label{lemma:geom_lower_bound}
Let $X$ be a sum of $m$ independent and identically distributed geometric random variables with parameter $p$, i.e., $X=\sum_{i=1}^{m} X_i$. Then, for any positive integer $k<m/p$
\begin{align}
\Pr \left(X > k\right) \geq 1-\left(\frac{m}{e^{\frac{m-kp}{m}}kp}\right)^{-m}.
\end{align}
\end{lemma}
%Consider a graph consisting of $w=\left\lceil \frac{n}{\Delta-1}\right\rceil$ cliques: $C_1,...,C_w$. All the cliques, possibly except the last one ($C_w$), are of the size $\Delta-1$ and are connected in a line, i.e., there is a single edge connecting two consecutive cliques: $C_1 \leftrightarrow C_2 \leftrightarrow \cdots \leftrightarrow C_w$. Consider the information flow between the nodes $v\in C_1$ and $u\in C_2$, such that the edge $(v,u)$ interconnects $C_1$ and $C_2$.

Let us construct a graph $G$ with $\left|V(G)\right|=n$ nodes and maximum degree $\Delta(G)=\Delta$. 
Consider two arbitrary graphs $G'$ and $G''$ with certain maximum degrees $\Delta(G')$ and $\Delta(G'')$, respectively, and with total number of nodes $n$ ($\left|V(G')\right|+\left|V(G'')\right|=n$).
We now distinguish two cases: $\Delta\le n/2$ and $\Delta>n/2$.
For the first case ($\Delta\le n/2$), let $u\in V(G')$ and $v\in V(G'')$, such that $d_u=\Delta(G')=\Delta-1$ and $d_v=\Delta(G'')=\Delta-1$. 
We construct $G$ by interconnecting $G'$ and $G''$ with a new edge $(u,v)$, i.e., $V(G)=V(G')\cup V(G'')$ and $E(G)=E(G')\cup E(G'')\cup (u,v)$. See Fig. \ref{fig:fig/barbells.pdf} (a) for illustration.

For the second case ($\Delta> n/2$), the only difference in construction of $G$ is the degree of $v\in V(G'')$, which is now $d_v=\Delta(G'')=n-\Delta-1$.

In order to finish algebraic gossip on $G$, at least $\max \left\{|V(G')|,|V(G'')|\right\}\ge\frac{n}{2}$ messages should be sent over the edge $(u,v)$.
Using the fastest gossip variation -- \ex, the probability $p$ that a \emph{helpful message} will be sent in one timeslot over the edge $(u,v)$ is bounded by the probability that any message will be sent over $(u,v)$, so: $p\le \tfrac{1}{n}\left(\tfrac{1}{\Delta(G')+1}+\tfrac{1}{\Delta(G'')+1}\right)$.

For the first case ($\Delta\le n/2$) we obtain: 
\begin{align}
p\le \tfrac{1}{n}\left(\tfrac{1}{\Delta}+\tfrac{1}{\Delta}\right)=\tfrac{2}{n\Delta}.
\end{align}
For the second case we get:
\begin{align}
p&\le \tfrac{1}{n}\left(\tfrac{1}{\Delta}+\tfrac{1}{n-\Delta}\right)=\tfrac{1}{\Delta(n-\Delta)}\\
&\le\tfrac{1}{\Delta(n-(1-\epsilon)n)}=\tfrac{1}{n\Delta\epsilon}.
\end{align}
%
%Now we distinguish two cases. The case where $|C_2|=\Delta-1$ and the case where $|C_2|<\Delta-1=n-(\Delta-1)$, i.e., if $C_2$ is the last clique ($w$=2). In the first case we obtain: 
%\begin{align}
%p\leq \frac{|C_1|+|C_2|+2}{(|C_1|+1)(|C_2+1|)}\cdot\frac{1}{n}=\frac{2\Delta}{n\cdot\Delta\cdot\Delta}=\frac{2}{n\Delta},
%\end{align}
%and in the second case:
%\begin{align}
%p&\leq \frac{|C_1|+|C_2|+2}{(|C_1|+1)(|C_2+1|)}\cdot\frac{1}{n}=\frac{n+2}{n\cdot\Delta(n-(\Delta-1))}\\
%&\le\frac{n+2}{n\cdot\Delta(n(1-\alpha))}\le\frac{2n}{n\cdot\Delta(n(1-\alpha))}=\frac{2}{n\Delta(1-\alpha)}.
%\end{align}
We can see that the first case can be viewed as the second with $\epsilon=0.5$; thus, we can further analyze only the second case.
The number of timeslots, $T$, needed to send $n/2$ \emph{helpful messages} over the edge $(u,v)$, can be viewed as a sum of $n/2$ geometric random variables with parameter $p$. Clearly, $\text{E}\left[T\right]=\tfrac{n}{2}\cdot\tfrac{1}{p}=\tfrac{n^2\Delta\epsilon}{2}=\Omega(\Delta n^2)$ timeslots in both cases.
Using Lemma \ref{lemma:geom_lower_bound} with $k=\left\lfloor \text{E}\left[T\right]/2\right\rfloor=\left\lfloor n^2\Delta\epsilon/4\right\rfloor$, $p=\frac{1}{n\Delta\epsilon}\cdot\frac{n^2\Delta\epsilon/4}{\left\lfloor n^2\Delta\epsilon/4\right\rfloor}\ge\frac{1}{n\Delta\epsilon}$ (we took $p$ even larger than its maximum value; this will make calculations nicer and will not affect the bound), and $m=n/2$ we get:
\begin{align}
\Pr \left(T > \left\lfloor n^2\Delta\epsilon/4\right\rfloor\right) &\geq 1-\left(\frac{m}{e^{\frac{m-kp}{m}}kp}\right)^{-m} 
\\&= 1-\left(\sqrt{e}/2\right)^{n/2}.
\end{align}

It is clear that $\Pr \left(T\geq k\right)$ increases when $p$ decreases (the smaller probability of success -- the larger the probability to finish later). Hence, the above inequality holds for any $p\leq \frac{1}{n\Delta\epsilon}$.

Thus, the number of timeslots needed is at least $\left\lfloor n^2\Delta\epsilon/4\right\rfloor$ w.h.p. and $n^2\Delta\epsilon/2$ in expectation. So, the total stopping time of the algebraic gossip protocol on the graph $G$ (measured in rounds) is: $\hat{R}=\Omega(\Delta n)$, and $\text{E}\left[R\right]=\Omega(\Delta n)$, where $2\le\Delta\le(1-\epsilon)n$ for any constant $\epsilon>0$. The lower bound of $\Omega(n^2)$ rounds is achieved, for example, in a barbell graph -- two cliques interconnected with a single edge (see Fig. \ref{fig:fig/barbells.pdf} (b)).
\end{proof}

%Let us now find the lower bound for the expected number of rounds needed to complete the task -- $\text{E}\left[R\right]$:
%
%Using $\Pr \left(R > 0.05n^2\right)=\Pr \left(R > \left\lfloor 0.05n^2\right\rfloor\right) \geq 1-0.9^n$ we can write:
%\begin{align*}
%\text{E}\left[R\right]&=\Pr \left(R \le 0.05n^2\right)\text{E}\left[R\mid R\le0.05n^2\right]+\Pr \left(R > 0.05n^2\right)\text{E}\left[R\mid R> 0.05n^2\right]\\
%&\ge \Pr \left(R > 0.05n^2\right)\text{E}\left[R\mid R> 0.05n^2\right]\\
%&\ge (1-0.9^n)0.05n^2.
%\end{align*}
%
%
%and thus: $\text{E}\left[R\right]=\Omega(n^2)$.

\begin{figure}
\centering
%,clip=true, viewport=0.2in 10.25in 4.7in 11.3i
\includegraphics[width=3.4in]{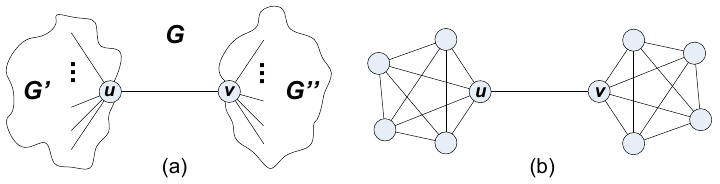}
\caption{(a) Graph $G$, constructed from $G'$ and $G''$, for the proof of Theorem \ref{thm:lower_bound_any_graph}. (b) An example of a $G$ graph with $\Delta(G)=n/2$: barbell graph (two cliques of size $n/2$ connected with a single edge).\label{fig:fig/barbells.pdf}}
\end{figure}

\section{\ex Can Be Unboundedly Faster Than \push or \pull}
\label{sec:ex_is_faster}
As we presented earlier, there are three gossip variations: \push, \pull, and \ex. In \push or \pull there is only one message sent between the communication partners, in \ex two messages are sent. Thus, the total message complexity for the same number of communication rounds is doubled. We would like to know: Is the stopping time decrease when using \ex worth the doubling message complexity?
In this section we give the answer by presenting a graph for which the \ex gossip algorithm is \emph{unboundedly} faster than the \push or \pull algorithms.

\begin{theorem}\label{theorem_ex_is_better_than_push_pull}
For the star graph $S_n$ (which is a tree of $n$ nodes with one node having degree $n-1$ and the other $n-1$ nodes having degree 1), algebraic gossip using \ex is unboundedly better than using \push or \pull algorithms. Formally, for $\A \in \{\push, \pull\}$:
\begin{align}
&\lim_{n\rightarrow\infty}\frac{\hat{R}(\A)}{\hat{R}(\ex)}\rightarrow\infty \text{ , and }
\\&\lim_{n\rightarrow\infty}\frac{E\left[R(\A)\right]}{E\left[R(\ex)\right]}\rightarrow\infty.
\end{align}

\end{theorem}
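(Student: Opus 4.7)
My plan is to separately bound $\hat{R}(\ex)$ from above and $\hat{R}(\push), \hat{R}(\pull)$ from below on the star $S_n$, working throughout in the synchronous time model, and then observe that both ratios diverge.

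\textbf{Upper bound for} $\ex$. The only neighbor of every leaf is the center $c$, so in the synchronous model each leaf $v_i$ exchanges with $c$ at least once per round. I would split the execution into two phases. In \emph{Phase~1}, as long as $\dim(S_c) < n$ there is a leaf $v_i$ with $x_i \notin S_c$ at the start of the round; at the time of its own exchange, either $c$ has already gained rank earlier in the round or $v_i$ is still helpful, and Lemma~\ref{lemma_helpful_rank} yields rank-up probability $\ge 1-1/q$. Dominating by an i.i.d.\ Bernoulli$(1-1/q)$ sequence and applying a Chernoff bound gives $\dim(S_c) = n$ within $O(n)$ rounds w.h.p. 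In \emph{Phase~2}, once $S_c = \mathbb{F}_q^n$, $c$ is helpful to every unfinished leaf; another $O(n)$ rounds together with a union bound over the $n-1$ leaves finish all of them w.h.p. Hence $\hat{R}(\ex) = O(n)$ and $E[R(\ex)] = O(n)$.

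\textbf{Lower bound for} $\push$. A leaf $v_i$ receives a message only when $c$ is the active node \emph{and} picks $v_i$ as its partner; this happens in any given round with probability exactly $1/(n-1)$, independently across rounds. Since reaching rank $n$ requires $\ge n-1$ helpful receptions, the rounds needed by $v_i$ alone stochastically dominate a negative binomial random variable with mean $(n-1)^2$. A Chernoff bound on $\mathrm{Bin}(R, 1/(n-1))$ with $R = \tfrac{1}{2}(n-1)^2$ shows $v_i$ has received fewer than $n-1$ pushes w.h.p., yielding $\hat{R}(\push) = \Omega(n^2)$ and $E[R(\push)] = \Omega(n^2)$.

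\textbf{Lower bound for} $\pull$. The key observation is that the initial value $x_k$ can leave $S_{v_k}$ only when $c$ pulls from $v_k$, because $v_k$ sends messages only when pulled from and its only neighbor is $c$. Consequently, if $c$ has not yet pulled from some leaf $v_k$, then $x_k \notin S_c$, and a simple linear-algebra check (using $S_{v_i} \subseteq \mathrm{span}(x_i) + S_c$ for every $i$) shows that every other leaf $v_i$ with $i \ne k$ has $x_k \notin S_{v_i}$ and thus rank $<n$. So the protocol cannot terminate until $c$ has pulled from every one of the $n-1$ leaves. But $c$ performs exactly one pull per round, to a uniformly random leaf, so this is coupon-collecting with $n-1$ coupons and requires $\Theta(n \log n)$ rounds in expectation and w.h.p., giving $\hat{R}(\pull) = \Omega(n \log n)$ and $E[R(\pull)] = \Omega(n \log n)$.

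Combining the three bounds gives $\hat{R}(\push)/\hat{R}(\ex) = \Omega(n)$ and $\hat{R}(\pull)/\hat{R}(\ex) = \Omega(\log n)$, both tending to infinity; the analogous statements hold in expectation. The main technical hurdle I anticipate is the Phase~1 analysis of $\ex$: the rank-increase events at $c$ are not literally independent because the pool of helpful leaves shrinks as $S_c$ grows, so one needs a small coupling argument (expose, in each round, the Bernoulli trial associated with a single pre-designated helpful leaf and discard the rest) to obtain a clean i.i.d.\ sequence. All the other steps are standard coupon-collector and Chernoff calculations.
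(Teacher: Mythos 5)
Your proposal is correct and follows essentially the same route as the paper: a two-phase $O(n)$ analysis of \ex (center acquires rank $n$, then distributes), a counting argument showing the center can serve only one leaf per round for \push, and a coupon-collector lower bound for \pull. The only cosmetic differences are that the paper's \push bound is the deterministic $(n-1)^2$ rather than your negative-binomial w.h.p.\ version, and that the paper bounds Phase~1 by a per-leaf union bound rather than your coupling to an i.i.d.\ Bernoulli sequence; both variants work.
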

%\begin{proof}
The proof of this theorem is a direct consequence of the following lemmas. 
%The proofs of these lemmas are omitted.
\begin{lemma}\label{lemma:push_in_star_lower_bound}
For the star graph $S_n$, 
%for the synchronous time model, 
algebraic gossip using \push takes $\Omega(n^2)$ rounds 
%with probability $1$.
with high probability and in expectation.
\end{lemma}

\begin{proof}
We are interested in a lower bound, so we will consider the minimum number of rounds to complete the task. The center node can finish the algorithm after one round since all other nodes will send (\push) to it their messages and in the best case all these messages will be \emph{helpful}, so we ignore this phase.
Now, the center node should send (\push) to every other node $n-1$ independent linear equations. In the synchronous time model, the center node wakes up exactly once in a round. Thus, the number of rounds needed to \push $n-1$ messages to all the $n-1$ other nodes is at least $(n-1)\cdot(n-1)$ with probability $1$. In the asynchronous model, the center node will wake up in a given timeslot with probability $1/n$; thus, it will need $\Omega(n\cdot(n-1)\cdot(n-1))$ timeslots (to \push $n-1$ messages to all the $n-1$ other nodes) in expectation and with the high probability (sum of $n$ independent geometric random variables). Thus, for both time models, the number of rounds needed is $\Omega(n^2)$.
\end{proof}

\begin{lemma}\label{lemma:pull_in_star_lower_bound}
For the star graph $S_n$, 
%for the synchronous time model, 
algebraic gossip using \pull takes $\Omega(n\log n)$ rounds with 
%high probability. 
high probability and in expectation.
\end{lemma}

\begin{proof}
First, we give the following claim for the \emph{coupon collector problem} \cite{1076315}. The proof of the claim can be found in Appendix \ref{app:H}.
\begin{clm}\label{clm-coupons}
Let $X$ be the r.v. for the number of coupons needed to obtain $n$ distinct coupons (i.e., to obtain at least one coupon of each type), then:
$$E[X] = \Theta(n\log n)
\: \: \text{ and w.h.p. } \: \:
X = \Theta(n\log n).$$
\end{clm}
The center node will finish the algorithm once it receives (\pull) a helpful message from every other node. Thus, the center node has to reach (\pull) every other node at least once. In the synchronous time model, the center node will transmit (wake up) exactly once in a round. Reaching every other node at least one time is exactly the coupon collector problem, so (using Claim \ref{clm-coupons}): $\hat{R}=\Omega(n\log n)$, and $\text{E}\left[R\right]=\Omega(n\log n)$ rounds. In the asynchronous model, the center node will wake up in a given timeslot with probability $1/n$; thus, it needs $\Omega(n\cdot n\log n)$ timeslots in order to wake up $\Omega(n\log n)$ times in expectation and with high probability (lower bound on sum of i.i.d. geometric r.v.'s).
\end{proof}

\begin{lemma}
\label{lemma_star_nc_ex}
For the star graph $S_n$, 
%for the synchronous time model, 
algebraic gossip using \ex takes $O(n)$ rounds with high probability and in expectation.
\end{lemma}
\begin{proof}
To prove Lemma \ref{lemma_star_nc_ex} we will use the following claim. The proof can be found in Appendix \ref{app:I}.

\begin{clm}
\label{lemma_sum_of_geom_vars}
Let $X_i$ be independent geometric random variables with parameter $p$, and let $X=\sum_{i=1}^n X_i$.
For $p\geq \frac{1}{2}$, and $\alpha>1$:
\begin{align}
\Pr \left(X\geq 2n\alpha\right)\leq \left(2^{1.5-\alpha}\right)^n.
\end{align}
\end{clm}

First, we consider the synchronous time model. Let us split the task into two phases. The first phase is the time (in rounds) $R_1$ until the center node $v_1$ learns all the initial messages, i.e., $dim(S_{v_1}(t))=n$. The second phase is the time (in rounds) $R_2$ it takes $v_1$ to distribute the information to all the nodes.

Initially, every node $u \in V \setminus \{v_1\}$ is \emph{helpful} to $v_1$. By Lemma \ref{lemma_helpful_rank}, a message sent from $u$ to $v_1$ will be \emph{helpful} with probability of at least $1-\tfrac{1}{q}$; thus, after $n$ rounds, a node $u$ will send a \emph{helpful message} to $v_1$ with probability of at least $1-\left(\frac{1}{2}\right)^n$ (for $q>2$). Using union bound we can find the probability that all the nodes $u \in V \setminus \{v_1\}$ will send a helpful message to $v_1$ after $n$ rounds:
\begin{align}
\Pr(R_1>n)\le \sum_{u \in V \setminus \{v_1\}}\left(\frac{1}{2}\right)^n\le n\left(\frac{1}{2}\right)^n.
\end{align}

Now, from the beginning of phase two, $dim(S_{v_1})=n$ and hence the node $v_1$ will be \emph{helpful} to every other node until the rank of that node becomes $n$. From Lemma \ref{lemma_helpful_rank}, a message transmitted to some node from a node \emph{helpful} to it, will increase its dimension with probability $p \geq 1-\frac{1}{q}$.

Let us define $X_i^u$ as the number of rounds needed for $v_1$ to increase the rank of some node $u\in{V}\backslash{\left\{v_1\right\}}$. It is clear that $X_i^u$ has a geometric distribution with parameter $p$. We are interested to find $X^u=\sum_{i=1}^n X_i^u$, which represents the number of rounds by which the rank of node $u$ will become $n$. Using Claim \ref{lemma_sum_of_geom_vars} (and the fact that for $q>2$, $p=1-\frac{1}{q}>\frac{1}{2}$), we obtain for any $\alpha>1$ that:
\begin{align}
\Pr(X^u<2\alpha n)\ge 1-(2^{1.5-\alpha})^n.
\end{align}
Using union bound, we obtain the probability that ranks of all nodes will become $n$ after $2\alpha n$ rounds:
\begin{align}
\Pr \left(\cup_{u\in{V}\backslash{\left\{v_1\right\}}} X^u \ge 2\alpha n\right)&\le \sum_{u\in{V}\backslash{\left\{v_1\right\}}}\Pr(X^u\ge2\alpha n)
\\ &\le n\left(2^{1.5-\alpha}\right)^n,
\end{align}
and thus:
\begin{align}
\Pr \left(\cap_{u\in{V}\backslash{\left\{v_1\right\}}} X^u<2\alpha n\right)\geq 1-n\left(2^{1.5-\alpha}\right)^n.
\end{align}

So, 
\begin{align}\label{s_whp}
\Pr(R_2<2\alpha n)\ge 1-n\left(2^{1.5-\alpha}\right)^n,
\end{align}
and for $\alpha=2$: 
\begin{align}
\Pr(R_2<4n)\ge 1-n\left(\frac{1}{\sqrt{2}}\right)^n.
\end{align}

Combining the two phases together, i.e., $R\le R_1+R_2$, we have:
\begin{align}
\Pr(R>5n)&\le \Pr(R_1\ge n) + \Pr(R_2\ge 4n)
\\&\le n\left(\frac{1}{2}\right)^n + n\left(\frac{1}{\sqrt{2}}\right)^n
\\&\le 2n\left(\frac{1}{\sqrt{2}}\right)^n,
\end{align}
and thus: $\hat{R}=O(n)$.

Let us now find an upper bound for the expected number of rounds needed to complete the task, $E[R]$.
Since $R\le R_1+R_2$, we get: $\text{E}\left[R\right]\le \text{E}\left[R_1\right]+\text{E}\left[R_2\right]$. 
During the first phase, each node $u \in V \setminus \{v_1\}$ will send a \emph{helpful message} to $v_1$ with probability of at least $\tfrac{1}{2}$. Thus, $\text{E}\left[R_1\right]\le 2n$. The high probability bound of \eqref{s_whp} allow us to show that for sufficient large $n$:
\begin{align}\label{s_expected}
E[R_2] \le 4n + 1
\end{align}
The proof of Eq. \eqref{s_expected} can be found in Appendix \ref{app:J}.
Hence, we obtain: $\text{E}\left[R\right]=O(n)$. In order to justify the result for the asynchronous time model (in which a node wakes up at a given timeslot with probability $1/n$), we notice that a node $v_1$ will wake up $O(n)$ times after at most $O(n^2)$ timeslots (or $O(n)$ rounds) with expectation and with high probability (sum of $n$ i.i.d. geometric r.v.'s). Thus, the lemma holds for both time models.
\end{proof}

Since for $\A\in\{\push,\pull\}$: $\hat{R}(\A)=\Omega(n\log n)$ and $\text{E}\left[R(\A)\right]=\Omega(n\log n)$ (Lemmas \ref{lemma:push_in_star_lower_bound} and \ref{lemma:pull_in_star_lower_bound}), and from Lemma \ref{lemma_star_nc_ex}: $\hat{R}(\ex)=O(n)$ and $\text{E}\left[R(\ex)\right]=O(n)$, we are ready to conclude that:
\begin{align*}
&\lim_{n\rightarrow\infty}\frac{\hat{R}(\A)}{\hat{R}(\ex)}\rightarrow\infty \text{ , and }
\\&\lim_{n\rightarrow\infty}\frac{E\left[R(\A)\right]}{E\left[R(\ex)\right]}\rightarrow\infty.
\end{align*}
%\end{proof}
%[End of proof Theorem \ref{theorem_ex_is_better_than_push_pull}.]

\section{Conclusions}\label{sec:conclusions}

In this work we prove bounds on the stopping time of the algebraic gossip protocol. We prove that the upper bound for any graph is $O(n^2)$ and we show that this bound is tight in a sense that there exists a graph for which the stopping time of algebraic gossip is $\Omega(n^2)$. Our general upper bound $O(\Delta n)$ is provided as a function of the maximum degree  $\Delta$ of a graph and thus we can obtain a tight linear bound of $\Theta(n)$ for any graph with a constant maximum degree. Moreover, our results hold for $q \ge 2$ (coefficients field size), while previous results were for the case $q \ge n$.

It is still an open question to fully understand the properties of a network that 
captures the stopping time of algebraic gossip. To illustrate this, note the 
interesting observation that on the extended-barbell graph (Fig. \ref{fig:fig/ext_barbell.pdf}) the stopping time of algebraic gossip is linear. So, by adding a single node to the barbell graph (Fig. \ref{fig:fig/barbells.pdf} (b)) the  stopping time has been changed by an order of magnitude?!

\begin{figure}
\centering
\includegraphics[width=2in]{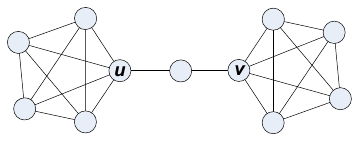}
\caption{Extended barbell graph: additional node between the cliques.\label{fig:fig/ext_barbell.pdf}}
\end{figure} 

In the conference version of this paper \cite{Borokhovich2010Tight}, we originally asked the above question, and a recent work by Haeupler \cite{Haeupler2010Analyzing} makes significant progress in answering it. 
%analyzing the stopping time of algebraic gossip. 
While all previous works on algebraic gossip used the notion of \emph{helpful message/node} to look at the rank evaluation of the matrices each node maintains (this approach was initially proposed by \cite{Deb2006Algebraic}), Haeupler used a completely different approach. Instead of looking on the growth of the node's subspace (spanned by the linear equations it has), he proposed to look at the orthogonal complement of the subspace and then analyze the process of its disappearing. This elegant and powerful approach led to a very impressive result: a tight bound of $\Theta(n/\gamma)$ was proposed for all-to-all communication, where $\gamma$ is a min-cut measure of the related graph. This matches our tight bounds for many topologies (e.g., constant maximum degree graphs, barbell graph, etc.) and extends them for other graphs. Haeupler also proposed results for many-to-all ($k$ to $n$) communication, but these bounds are not always tight. 

In our recent conference paper \cite{Avin2011OrderOptimal}, we successfully addressed the topics of many-to-all communication and the non-uniform gossip approach. These topics were originally raised in the conference version \cite{Borokhovich2010Tight} of the current paper. First, we provide in \cite{Avin2011OrderOptimal} an upper bound for the many-to-all scenario and show that the bound is tight for various topologies (in particular, for graphs with a constant maximum degree); second, we study a non-uniform gossip and propose a modified algebraic gossip algorithm that is order optimal for many families of graphs. This recent work \cite{Avin2011OrderOptimal} is based on the queuing analysis technique that is novelly proposed in the current manuscript.

\renewcommand{\abstractname}{Acknowledgments}

\begin{abstract}
This research was supported in part by GIF - Grant No. 2183-1807.6/2007.
We would like to thank the anonymous reviewers who helped us to significantly improve the paper.
\end{abstract}

\bibliographystyle{acm}
%\bibliography{ittran}

%\appendix

%\section*{Appendix}
%\label{appendix}
%\addcontentsline{toc}{section}{Appendix}

\appendix

%\appendixtitleon
%\appendixtitletocon
%\begin{appendices}
%\section{First appendix}
%\subsection{First}
%\subsection{Second}
%\section{Second appendix}
%\subsection{First}
%\subsection{Second}
%\end{appendices}

\subsection{Algebraic gossip with synchronous time model, with \push and \pull}\label{app:A}
%\addcontentsline{toc}{subsection}{A. Synchronous time model and \push, \pull}
In this section of Appendix we give theorems and corollaries that extend the results presented in the paper to both time models (synchronous and asynchronous), and to the three gossip algorithms (\push, \pull, and \ex).

The first theorem shows that the general upper bound for algebraic gossip also holds for the synchronous time model.
\begin{theorem}\label{thm:general_bound_synch}
For the \textbf{synchronous} time model and for any graph $G_n$ with maximum degree $\Delta$, the stopping time of algebraic gossip is $O(\Delta n)$ rounds with high probability.
\end{theorem}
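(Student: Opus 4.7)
The plan is to mimic the proof of Theorem 1 (the asynchronous case) step by step, replacing the per-timeslot service analysis with a per-round analysis tailored to the synchronous schedule. Fix an arbitrary node $v$, run a BFS from $v$ to obtain the spanning tree $G_v$, and consider only the flow of helpful messages along tree edges toward the root. This gives the same tree queueing network $\mathcal{N}(G_v,\mathcal{S})$ as in Fig.~\ref{fig:reduction_to_queues}, with one customer initially at every node; once all customers reach $v$'s queue, $v$ has rank $n$.

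Next I would compute the service rate in the synchronous model. In every round, each node acts exactly once, so for a tree edge $(u,\mathrm{parent}(u))$, node $u$ chooses $\mathrm{parent}(u)$ with probability $1/d_u \ge 1/\Delta$, and independently across rounds. Hence the probability that a helpful message is sent on this edge in the correct direction in a given round is at least $\mu \ge \tfrac{1}{\Delta}(1-\tfrac{1}{q}) \ge \tfrac{1}{2\Delta}$ (Lemma~\ref{lemma_helpful_rank}), and the service time at each queue, measured in rounds, is stochastically dominated by $\mathrm{Geom}(\mu)$. Using Lemma~\ref{lemma:geometric_as_exponential} and Lemma~\ref{lemma:exp_server_instead_of_geom}, we can replace each geometric server by an exponential server with rate $\mu \ge 1/(2\Delta)$ without decreasing the stopping time. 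Then Theorem~\ref{thm:tree_of_queues} applied with this $\mu$ gives that all $n$ customers arrive at $v$ in at most $O(n/\mu)=O(\Delta n)$ rounds with probability at least $1-2(2e^{-\alpha/2})^n$. A union bound over the $n$ choices of root $v$ yields the claimed $O(\Delta n)$ high-probability bound, exactly as in Theorem 1.

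The main technical obstacle is that within a single round of the synchronous schedule, the ``service attempts'' at different queues are not mutually independent, since they are all determined by the same uniform permutation, whereas Jackson's theorem (invoked inside the proof of Theorem~\ref{thm:tree_of_queues}) relies on an independent open queueing network. To handle this, I would argue by a monotone coupling that the synchronous process is at least as fast as a reference process in which each node independently decides, with probability $1/\Delta$, whether to transmit along each tree edge in a given round; equivalently, once we fix the queueing reduction, the synchronous schedule stochastically dominates (in the helpful sense of ``faster than'') the corresponding asynchronous schedule because every node is guaranteed to act once per round rather than only with probability $1-(1-1/n)^n$. Under this coupling the per-round service at each queue is independent across queues and dominates $\mathrm{Geom}(1/(2\Delta))$, so Lemma~\ref{lemma:exp_server_instead_of_geom} and Theorem~\ref{thm:tree_of_queues} apply verbatim.

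Finally, as in the asynchronous case (see Eq.~\eqref{eq:expected} and the argument in Appendix~\ref{app:F}), the high-probability bound, being valid for every $\alpha>1$ with exponentially decaying failure probability, is strong enough to imply $E[R]=O(\Delta n)$ as well, although the statement of Theorem~\ref{thm:general_bound_synch} is phrased only in high probability. The whole proof is therefore a direct adaptation of Theorem~1, with the per-timeslot rate $\mu\ge 1/(n\Delta)$ replaced by the per-round rate $\mu\ge 1/(2\Delta)$ and the independence issue resolved by a standard coupling to an independent per-round service process.
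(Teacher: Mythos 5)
Your proposal is correct and is essentially the paper's own proof (Theorem \ref{thm:general_bound_synch} in the appendix): redo the Theorem \ref{thm:upper_bound_any_graph} reduction with time measured in rounds, lower-bound the per-round service probability on each tree edge by $\tfrac{1}{\Delta}(1-\tfrac{1}{q})\ge\tfrac{1}{2\Delta}$, and reuse Lemmas \ref{lemma:geometric_as_exponential} and \ref{lemma:exp_server_instead_of_geom}, Theorem \ref{thm:tree_of_queues}, and the union bound over roots verbatim. Your digression about coupling to the asynchronous schedule is unnecessary: since you already bound the service event by the serving node's \emph{own} partner choice, these events are independent across nodes and across rounds, which is all the queueing machinery needs (the paper does not even raise the issue and simply states the per-round rate before invoking the asynchronous argument).
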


\begin{proof}
The proof for the synchronous time model is almost the same as in the asynchronous case. The analysis will be done in \emph{rounds} instead of \emph{timeslots}.
The probability that a customer will be serviced (transmitted towards $v$) at the end of a given \emph{round} is at least:
$p\geq \left(1-(\tfrac{\Delta-1}{\Delta})^2\right)(1-\tfrac{1}{q})$,
where $\left(1-(\tfrac{\Delta-1}{\Delta})^2\right)=\tfrac{2}{\Delta}-\tfrac{1}{\Delta^2}$ is the probability that in the \ex algorithm at least one message will be sent on a specific edge (in a specific direction) during one \emph{round}, and $(1-\frac{1}{q})$ is the minimal probability that the message is \emph{helpful} (Lemma \ref{lemma_helpful_rank}).
$$p\geq (\tfrac{2}{\Delta}-\tfrac{1}{\Delta^2})(1-\tfrac{1}{q})\geq  \tfrac{1}{\Delta}(1-\tfrac{1}{q})\geq\tfrac{1}{2\Delta} \text{ for }q\geq 2.$$

If node $i$ has received a message during a specific round from node $j$ it will ignore the additional message that can arrive from the same node $j$ at the same round (this can happen if $i$ chooses $j$ and $j$ chooses $i$ in the \ex gossip scheme in one round). Clearly, this assumption can only increase the stopping time since we ignore (possibly helpful) information.

$T^x, \text{ }x\in{ \left\{\overrightarrow{\text{arr}},\overrightarrow{\text{cross}}\right\}}$ are measured now in \emph{rounds} and not in \emph{timeslots}.
Since $\mu=p\geq\frac{1}{2\Delta}$, using Lemma \ref{lemma:sum_of_exp_bounded1} (with $\alpha=2$ and $\text{E}\left[T^x\right]=\tfrac{2n}{\mu}=4n\Delta$), we obtain: 
$$\Pr \left(T^x < 8n\Delta\right) > 1-\left(\tfrac{2}{e}\right)^n\text{ , for }x\in{ \left\{\overrightarrow{\text{arr}},\overrightarrow{\text{cross}}\right\}}.$$

The rest of the proof is the same as in the asynchronous case and thus the result follows.
\end{proof}

The following theorem proves that the worst-case lower bound for algebraic gossip also holds for the synchronous time model.
\begin{theorem}\label{thm:lower_bound_any_graph_sync}
For any constant $\epsilon>0$ and $2 \le \Delta \le (1-\epsilon) n$, and for the synchronous time model, there exists a graph $G$ of size $n$ with maximum degree $\Delta$ for which algebraic gossip takes $\Omega(\Delta n)$ rounds both in expectation and with high probability. In particular, there is a graph for which the stopping time is $\Omega(n^2)$ rounds both in expectation and with high probability.
\end{theorem}

\begin{proof}
The proof is almost the same as in Theorem \ref{thm:lower_bound_any_graph}. The analysis will be done in \emph{rounds} instead of \emph{timeslots}.

Using the fastest gossip variation -- \ex, the probability $p$ that a \emph{helpful message} will be sent in one timeslot over the edge $(u,v)$ can be bounded (using a union bound) as: $p\le \tfrac{1}{\Delta(G')+1}+\tfrac{1}{\Delta(G'')+1}$.

For the first case ($\Delta\le n/2$) we obtain: $p\le\tfrac{2\Delta}{\Delta\cdot\Delta}=\tfrac{2}{\Delta}$.
For the second case ($\Delta > n/2$) we get:
\begin{align}
p&\le \tfrac{1}{\Delta}+\tfrac{1}{n-\Delta}=\tfrac{n}{\Delta(n-\Delta)}\\
&\le\tfrac{n}{\Delta(n-(1-\epsilon)\Delta)}=\tfrac{1}{\Delta\epsilon}.
\end{align}

The first case can be viewed as the second with $\epsilon=0.5$; thus, we can further analyze only the second case.

%
%Using the fastest gossip variation - \ex the probability $p$ that a \emph{helpful message} will be sent in one round from $v$ to $u$ is: $(1-(\tfrac{n/2-1}{n/2})^2)\geq p\geq(1-(\tfrac{n/2-1}{n/2})^2)(1-\frac{1}{q})$. Hence: $p\leq \tfrac{2}{n/2}-\tfrac{1}{(n/2)^2}\leq\tfrac{4}{n}$. Clearly, $\text{E}\left[R\right]=\tfrac{n}{2}\cdot\tfrac{1}{p}\ge \tfrac{n^2}{8}$ rounds.
%
%Using Lemma \ref{lemma:geom_lower_bound} (with $k=\left\lfloor n^2/16\right\rfloor$, $p=\frac{4}{n}\frac{n^2/16}{\left\lfloor n^2/16\right\rfloor}$, and $m=n/2$), and $R$ measured in \textbf{rounds}, we get:
%$$\Pr \left(R > \left\lfloor n^2/16\right\rfloor\right) \geq 1-\left(\frac{m}{e^{\frac{m-kp}{m}}kp}\right)^{-m} = 1-\left(\sqrt{e}/2\right)^{n/2}.$$
%
The number of rounds, $R$, needed to to send $n/2$ \emph{helpful} messages over the edge $(v,u)$, can be viewed as a sum of $n/2$ geometric random variables with parameter $p$. Clearly, $\text{E}\left[R\right]=\tfrac{n}{2}\cdot\tfrac{1}{p}=\tfrac{n\Delta(1-\alpha)}{4}=O(\Delta n)$ rounds.
The number of rounds, $R$, needed to to send $n/2$ \emph{helpful messages} over the edge $(u,v)$, can be viewed as a sum of $n/2$ geometric random variables with parameter $p$. Clearly, $\text{E}\left[R\right]=\tfrac{n}{2}\cdot\tfrac{1}{p}=\tfrac{n\Delta\epsilon}{2}=O(\Delta n)$ rounds in both cases.
Using Lemma \ref{lemma:geom_lower_bound} with $k=\left\lfloor \text{E}\left[R\right]/2\right\rfloor=\left\lfloor n\Delta\epsilon/4\right\rfloor$, $p=\frac{1}{\Delta\epsilon}\cdot\frac{n\Delta\epsilon/4}{\left\lfloor n\Delta\epsilon/4\right\rfloor}\ge\frac{1}{\Delta\epsilon}$ (we took $p$ even larger than its maximum value; this will make calculations nicer and will not affect the bound), and $m=n/2$ we get:
\begin{align}
\Pr \left(R > \left\lfloor n\Delta\epsilon/4\right\rfloor\right) &\geq 1-\left(\frac{m}{e^{\frac{m-kp}{m}}kp}\right)^{-m} 
\\&= 1-\left(\sqrt{e}/2\right)^{n/2}.
\end{align}The rest of the proof is the same as in Theorem \ref{thm:lower_bound_any_graph}.
\end{proof}

The following corollary shows that our bounds (upper and lower) for algebraic gossip on general graphs hold also for the \push and \pull gossip algorithms.
\begin{corollary}\label{cor:push_pull}
The results of Theorems \ref{thm:upper_bound_any_graph}, \ref{thm:lower_bound_any_graph}, \ref{thm:general_bound_synch}, and \ref{thm:lower_bound_any_graph_sync} hold also for \push and \pull gossip algorithms.
\end{corollary}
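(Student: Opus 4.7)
The plan is to revisit each of the four referenced theorems and observe that the only place where the choice of gossip algorithm enters the analysis quantitatively is the per-timeslot (or per-round, in the synchronous case) probability that a helpful message travels along a specific edge in a specific direction. In every case, replacing \ex by \push or \pull changes this probability by only a factor of at most $2$, so the asymptotic bounds carry over without modification.

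For the upper bounds of Theorems \ref{thm:upper_bound_any_graph} and \ref{thm:general_bound_synch}, the proof fixes a BFS spanning tree $G_v$ rooted at $v$ and lower-bounds the rate $\mu = p$ at which a helpful message travels toward $v$ on any specific tree edge. With \ex, the factor $\tfrac{2}{n\Delta}$ arises because either endpoint can initiate the communication; with \push only the child can initiate the up-edge transmission (by pushing to its parent), and with \pull only the parent can initiate (by pulling from its child). In both cases the rate is merely halved, giving $\mu \geq \tfrac{1}{n\Delta}(1 - \tfrac{1}{q}) \geq \tfrac{1}{2n\Delta}$ for $q \geq 2$. Substituting this smaller $\mu$ into Theorem \ref{thm:tree_of_queues} changes only the constants and preserves the $O(\Delta n)$ high-probability and expected bounds. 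For the synchronous analogue, the per-round edge probability drops from $\approx \tfrac{2}{\Delta}$ to $\geq \tfrac{1}{\Delta}$ in exactly the same way, so the bound of Theorem \ref{thm:general_bound_synch} is unaffected.

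For the lower bounds of Theorems \ref{thm:lower_bound_any_graph} and \ref{thm:lower_bound_any_graph_sync}, the Barbell argument upper-bounds $p$, the probability of a helpful message crossing the bridge edge from $v$'s clique to $u$'s clique. With \ex both endpoints can initiate and $p \leq \tfrac{4}{n^2}$; with \push only $v$ initiating toward $u$ contributes, giving $p \leq \tfrac{2}{n^2}$, and similarly for \pull (only $u$ initiating from $v$ contributes). A smaller $p$ strengthens Lemma \ref{lemma:geom_lower_bound}, so the conclusion $\hat{R} = \Omega(n^2)$ and $E[R] = \Omega(n^2)$ continues to hold. Passing to the synchronous case is analogous: the per-round crossing probability drops from $\approx \tfrac{4}{n}$ to $\tfrac{2}{n}$, which again only strengthens Theorem \ref{thm:lower_bound_any_graph_sync}.

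The only subtlety I expect to encounter is in the upper-bound argument for \pull: since the initiator of a \pull communication is the receiver, I need to verify that information still flows \emph{up} the BFS tree at the required rate. This reduces to the observation that when a parent is selected (probability $\tfrac{1}{n}$) and pulls from its specific child (probability $\tfrac{1}{d_{\mathrm{parent}}} \geq \tfrac{1}{\Delta}$), the child's helpful message travels up the tree toward $v$. Once this is noted, the reduction to a tree of exponential-server queues proceeds verbatim, and all four extensions go through by the constant-factor argument above.
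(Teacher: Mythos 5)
Your proposal is correct and follows essentially the same route as the paper's own proof: the gossip algorithm enters only through the per-edge helpful-message probability, which drops by at most a factor of $2$ when replacing \ex by \push or \pull, leaving the asymptotic upper bounds intact and only strengthening the lower bounds. You additionally verify the direction-of-flow subtlety for \pull on the BFS tree, which the paper's one-line argument leaves implicit; this is a welcome refinement but not a different approach.
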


\begin{proof}
By moving from the \ex to the \push or \pull gossip algorithms, we change only the probability of sending a \emph{helpful message} on a specific (directed) edge, i.e., the service time at each node will change. Easy to see that this probability will be decreased by a factor of 2 (i.e., the service time will become twice as long). Clearly, such a reduction will not affect the asymptotic bounds that were achieved using Lemmas \ref{lemma:sum_of_exp_bounded1}, and \ref{lemma:geom_lower_bound}. 
\end{proof}

\subsection{Proof of Lemma \ref{lemma:exp_server_instead_of_geom}}\label{app:C}

\textbf{Lemma \ref{lemma:exp_server_instead_of_geom}} (restated):
\textit{For any tree $G_v$ and $0 < p \le 1$:}
\begin{align*}
&\Pr\left(T(G_v, \text{Geom}(p))\leq t\right)\geq\Pr\left(T(G_v, \text{Exp}(p))\leq t\right),\\
&\text{ for all } t\ge0.
\end{align*}
\begin{proof}
We will prove this claim by showing that for each customer $c$ and for each queue $u$ on the unique path that $c$ traverses to the root, the probability that $c$ reaches $u$ before time $t$ is larger in $\mathcal{N}(G_v, \text{Geom}(p))$ than in $\mathcal{N}(G_v, \text{Exp}(p))$.

Consider a reverse topological order of the nodes in $G_v$, $v^1, v^2, \dots ,v^n=v$, i.e., for every node $v^i$, $1\le i < n$, the parent of $v^i$ is a node $v^j$ and $j > i$. For a node $v^i$ let $C^i$ be the set of customers that it needs to serve on their way to the root.
For a node $v^i$ and a customer $c \in C^i$ let $\mathcal{G}^i_c(t)$ denote the event that $c$ reached $v^i$ before time $t$ in 
$\mathcal{N}(G_v, \text{Geom}(p))$ and let $\mathcal{E}^i_c(t)$ be defined similarly for $\mathcal{N}(G_v, \text{Exp}(p))$.
We claim that for each $1 \le i \le n$, and each $c \in C^i$, $\Pr(\mathcal{G}^i_c(t)) \ge \Pr(\mathcal{E}^i_c(t))$, and the proof will be by induction on $i$.

\noindent \textbf{Induction basis}: $\Pr(\mathcal{G}^1_{v^1}(t)) \ge \Pr(\mathcal{E}^1_{v^1}(t))$. By definition $v_1$ is a leaf with one customer, itself, and no children, so $\Pr(\mathcal{G}^1_{v^1}(t)) = \Pr(\mathcal{E}^1_{v^1}(t)) =1$ for $t \ge 0$.

\noindent \textbf{Induction step}: Assume the claim is true for $1 \le i < n-1$ and we will prove it is true for $i+1$.
If $v^{i+1}$ is a leaf, then we are done since this is an identical case to the base case. Assume $v^{i+1}$ is not a leaf.
The case $c=v^{i+1}$ is trivial so consider $c \in C^{i+1}$ that is not $v^{i+1}$. Then $c$ must reach $v^{i+1}$ via one of its children, let it be $v^k$ where $k<i+1$. Then by the induction assumption $\Pr(\mathcal{G}^k_c(t')) \ge \Pr(\mathcal{E}^k_c(t'))$ for any $t'$,
and from Lemma \ref{lemma:geometric_as_exponential} for any $t$ the probability that a customer will be served by time $t$ is larger in $\mathcal{N}(G_v, \text{Geom}(p))$ than in $\mathcal{N}(G_v, \text{Exp}(p))$, so we have a faster arrival rate and a faster service rate and the claim follows.
\end{proof}

%\begin{corollary}\label{cor:push_pull}
%The algebraic gossip with \push or \pull is not faster than with \ex.
%\end{corollary}
%
%\begin{proof}
%It is obvious that using the \ex gossip variation, a node can \emph{simulate} \push or \pull by just ignoring part of received messages. Such a behavior can only increase the stopping time since in the network coding protocol, additional information cannot decrease the rank of a node or convert it to \emph{not helpful} for some other node. Thus, using \ex instead of \push or \pull can make algebraic gossip only faster.
%\end{proof}

\subsection{Proof of Lemma \ref{lemma:geometric_as_exponential}}\label{app:D}
\textbf{Lemma \ref{lemma:geometric_as_exponential}} (restated):
\textit{Let $X$ be a geometric random variable with parameter $p$ and supported on the set $\left\{0,1,2,\ldots\right\}$, i.e., for $k\in\mathbb{Z^+}$: $\Pr \left(X=k\right)=(1-p)^{k}p$, and let $Y$ be an exponential random variable with parameter $p$. Then, for all $x\in \mathbb{R^+}$:
\begin{align}
\Pr \left(X\leq x \right)\geq \Pr \left(Y\leq x \right)=1-e^{-px},
\end{align}
i.e., a random variable $Y\sim\text{Exp}(p)$ stochastically dominates the random variable $X\sim\text{Geom}(p)$.}

\begin{proof}
For a geometric random variable $X$ with a success probability $p$ and supported on the set $\left\{0,1,2,3,...\right\}$:
$$\Pr\left(X>x\right)=(1-p)^{x+1} \text{, for }x\in \mathbb{Z^+}$$
and 
$$\Pr\left(X>x\right)=(1-p)^{\left\lfloor x\right\rfloor +1}\text{ , for }x\in \mathbb{R^+}.$$
So, for $x\in \mathbb{R^+}$, 
$$\Pr\left(X\leq x\right)=1-(1-p)^{\left\lfloor x\right\rfloor +1} \geq 1-(1-p)^{x}=1-e^{\ln(1-p)x}$$
and since $\ln(1-p)\leq -p$ we have:
$$\Pr\left(X\leq x\right)\ge 1-e^{-px}.$$
Hence, if $Y\sim\text{Exp}(p)$ we obtain:
$$\Pr\left(X\leq x\right)\ge1-e^{-px}=\Pr\left(Y\leq x\right),$$
i.e., random variable $Y\sim\text{Exp}(p)$ stochastically dominates the random variable $X\sim\text{Geom}(p)$.

\end{proof}

\subsection{Proof of Lemma \ref{lemma:sum_of_exp_bounded1}}\label{app:E}
\textbf{Lemma \ref{lemma:sum_of_exp_bounded1}} (restated):
\textit{Let $Y$ be the sum of $n$ independent and identically distributed exponential random variables (each with parameter $\mu>0$), and $\text{E}\left[Y\right]=\tfrac{n}{\mu}$.
Then, for $\alpha>1$:
\begin{align}
\Pr \left(Y < \alpha\text{E}\left[Y\right]\right) > 1-(2e^{-\alpha/2})^n.
\end{align}}

\begin{proof}
Let $Y=\sum_{i=1}^n X_i$, where $X_i$ are i.i.d. exponential random variables (each with parameter $\mu>0$).
The generating function of $X$ is given by:
$$G_{X}(s)=\text{E}\left[e^{sX}\right]=\int_{0}^{\infty}e^{sx}f_{X}(x)dx.$$
For any $s<\mu$:
$G_{X}(s)=\frac{\mu}{\mu-s}$.
Thus, the generating function of $Y$ (sum of independent $X_i$'s) for $s<\mu$:
$G_{Y}(s)=\left(G_{X}(s)\right)^n=\left(\frac{\mu}{\mu-s}\right)^n$.
Now, we will apply a Chernoff bound on $Y$. For $\mu>s\geq 0$:
\begin{align*}
\Pr \left(Y\geq \alpha\text{E}\left[Y\right]\right)&=\Pr \left(Y\geq \alpha\frac{n}{\mu}\right)\\
&=\Pr \left(e^{sY}\geq e^{s\cdot \alpha\frac{n}{\mu}}\right)\leq \frac{\text{E}\left[e^{sY}\right]}{e^{s\cdot \alpha\frac{n}{\mu}}}=\frac{G_{Y}(s)}{e^{s\cdot \alpha\frac{n}{\mu}}}.
\end{align*}
By letting $s=\mu/2$ we get:
$$\Pr \left(Y\geq \alpha\text{E}\left[Y\right]\right)\leq \left(\frac{\mu}{(\mu-\frac{\mu}{2})e^{\alpha\frac{\mu}{2\mu}}}\right)^n=\left(2e^{-\alpha/2}\right)^n$$
and thus:
$$\Pr \left(Y < \alpha\text{E}\left[Y\right]\right) > 1-\left(2e^{-\alpha/2}\right)^n.$$
\end{proof}

\subsection{Proof of the expectation result in Theorem \ref{thm:upper_bound_any_graph}}\label{app:F}
Theorem \ref{thm:upper_bound_any_graph} states that the expected stopping time of algebraic gossip on any graph is: $E[T]=O(\Delta n^2)$ timeslots.
\begin{proof}
First, we rewrite the high probability result of Eq. \ref{eq:t_upper} with $\alpha>1$:
$$\Pr\left(T \ge 4\alpha n^2\Delta\right)\le 2n(2e^{-\alpha/2})^n.$$
For a positive integer random variable $T$ holds: $\text{E}\left[T\right]=\sum_{i=1}^{\infty}\Pr(T\ge i)$. So, we have:
\begin{align}
\text{E}\left[T\right]&=\sum_{i=1}^{\infty}\Pr(T\ge i)
\\&=\sum_{i=1}^{8n^2\Delta-1}\Pr(T\ge i)+\sum_{i=8n^2\Delta}^{\infty}\Pr(T\ge i)\\
&\le 8n^2\Delta+\sum_{i=8n^2\Delta}^{\infty}\Pr(T\ge i)\\
&\le 8n^2\Delta+4n^2\Delta\sum_{\alpha=2}^{\infty}\Pr(T\ge 4\alpha n^2\Delta).
\end{align}
The last inequality is true since $\forall i\le j, \Pr(T\ge i)\ge \Pr(T\ge j)$ and thus we can replace all $\Pr(T\ge i)$ for $i\in[4\alpha n^2\Delta,...,4(\alpha+1)n^2\Delta-1]$ with $4n^2\Delta\times\Pr(T\ge 4\alpha n^2\Delta)$. Hence,
\begin{align}
\text{E}\left[T\right]&\le 8n^2\Delta+4n^2\Delta\sum_{\alpha=2}^{\infty}\Pr(T\ge 4\alpha n^2\Delta)\\
&\le 8n^2\Delta+4n^2\Delta\sum_{\alpha=2}^{\infty}2n(2e^{-\tfrac{\alpha}{2}})^n\\
&=8n^2\Delta+8n^3\Delta 2^n\sum_{\alpha=2}^{\infty}(e^{-n/2})^{\alpha}\\
&=8n^2\Delta+8n^3\Delta 2^n\frac{e^{-n}}{1-e^{-n/2}}\\
&=8n^2\Delta+ \frac{8n^3\Delta}{1-e^{-n/2}}\left(\frac{2}{e}\right)^n,\\
&\text{for }n>6:\\
&\le 8n^2\Delta+8n^2\Delta.
\end{align}
Thus: $E[T]=O(\Delta n^2)$, and  $E[R]=O(\Delta n)$.
\end{proof}

\subsection{Proof of Lemma \ref{lemma:geom_lower_bound}}\label{app:G}
\textbf{Lemma \ref{lemma:geom_lower_bound}} (restated):
\textit{Let $X$ be a sum of $m$ independent and identically distributed geometric random variables with parameter $p$, i.e., $X=\sum_{i=1}^{m} X_i$. Then, for any positive integer $k<m/p$
\begin{align}
\Pr \left(X > k\right) \geq 1-\left(\frac{m}{e^{\frac{m-kp}{m}}kp}\right)^{-m}.
\end{align}}

\begin{proof}
First, we will define $Y$ as the sum of $k$ independent Bernoulli random variables, i.e., $Y=\sum_{i=1}^{k}Y_i$, where $Y_i \sim Bernoulli(p)$.
Let us notice that: 
$$\Pr \left(X \leq k\right)= \Pr \left(Y \geq m\right)$$
The last is true since the event of observing at least $m$ successes in a sequence of $k$ Bernoulli trials implies that the sum of $m$ independent geometric random variables is no more than $k$. From the other side, if the sum of $m$ independent geometric random variables is no more than $k$ it implies that $m$ successes occurred not later than the $k$-th trial and thus $Y\geq m$.

Now we will use a Chernoff bound for the sum of independent Bernoulli random variables presented in \cite{1076315}:
For any $\delta > 0$ and $\mu = \text{E}\left[Y\right]$:
$$\Pr \left(Y \geq (1+\delta)\mu\right) < \left(\frac{e^{\delta}}{(1+\delta)^{1+\delta}}\right)^\mu.$$
Since $\mu=\text{E}\left[Y\right]=kp$, and by letting $\delta=\frac{m-kp}{kp}$, we obtain:
$$\Pr \left(Y \geq (1+\delta)\mu\right) = \Pr \left(Y \geq m \right) < \left(\frac{m}{e^{\frac{m-kp}{m}}kp}\right)^{-m}.$$
So:
$$\Pr \left(X \leq k\right) < \left(\frac{m}{e^{\frac{m-kp}{m}}kp}\right)^{-m},$$
and thus the result follows.
\end{proof}

\subsection{Proof of Claim \ref{clm-coupons}}\label{app:H}
\textbf{Claim \ref{clm-coupons}} (restated):
\textit{Let $X$ be the r.v. for the number of coupons needed to obtain $n$ distinct coupons (i.e., to obtain at least one coupon of each type), then:
$$E[X] = \Theta(n\log n)
\: \: \text{ and w.h.p. } \: \:
X = \Theta(n\log n).$$}
\begin{proof}
The first result (the expected value) and the upper bound w.h.p. are well known; see for example \cite{motwani95randomize,1076315}. We have not found a direct reference for the lower bound, namely that w.h.p. $X = \Omega(n \log n)$, so we give an outline here.
Let $\mathcal{E}_x$  denote the event that all $n$ different coupons have been collected after $X$ steps. Let $X=\sum_{i=1}^n X_i$ where $X_i$ is an r.v. that denotes the number of coupons of type $i$ collected. Clearly $X_i$'s are dependent.
To overcome this difficulty we will use Poisson approximation of the binomial random variable $X_i$ \cite{1076315}.
Consider $n$ Poisson independent random variables $Y_i$ ($i\in \left[1...n\right]$) with mean $\lambda=\frac{X}{n}$. Each variable represents the number of coupons of type $i$. Thus, the expected total number of coupons collected is $X$.
Let $\mathcal{E}_y$ denote the Poisson version of the event $\mathcal{E}_x$, i.e., that after collecting the different types of coupons independently with Poisson distribution with $\lambda$, we have at least one type of each coupon.
Since $Y_i$'s are i.i.d., we have $\Pr\left(\mathcal{E}_y\right)=\left(\Pr \left(Y_i\geq 1\right)\right)^n$.
It is clear that both $\Pr\left(\mathcal{E}_x\right)$ and $\Pr\left(\mathcal{E}_y\right)$ are monotonically increasing with $X$; therefore we can use the Poisson approximation that states that $\Pr\left(\mathcal{E}_x\right)\leq 2\Pr\left(\mathcal{E}_y\right)$ (\cite{1076315}, Corollary 5.11). Now, assume $X=n\ln n - n\ln\ln n$ and we have:
\begin{align*}
%\lim_{n \rightarrow\infty} 
\Pr\left(\mathcal{E}_y\right)&=\left(1-e^{-(\ln n - \ln\ln n)}\right)^n =\left(1-\frac{\ln n}{n}\right)^n.
% =\left(\frac{1}{e}\right)^{\ln n}=\frac{1}{n}.
\end{align*}
Now we want to show that $\left(1-\frac{\ln n}{n}\right)^n\le \frac{1}{n}$. Let: $$z=n\left(1-\frac{\ln n}{n}\right)^n.$$
Then we obtain:
\begin{align*}
\ln z = \ln n +n\ln\left(1-\frac{\ln n}{n}\right).
\end{align*}
Using Taylor expansion we get:
\begin{align*}
\ln\left(1-\frac{\ln n}{n}\right)\le -\frac{\ln n}{n}.
\end{align*}
So:
\begin{align*}
\ln z \le \ln n-n\frac{\ln n}{n}=0.
\end{align*}
Since $\ln z \le 0$ we get that $z\le 1$ which yields: $\left(1-\frac{\ln n}{n}\right)^n\le \frac{1}{n}$.
So, $\Pr\left(\mathcal{E}_x\right)\leq 2\Pr\left(\mathcal{E}_y\right)\leq \frac{2}{n}$ and thus:
$$\Pr\left(X \geq n\ln n - n\ln\ln n\right)=1-\frac{2}{n}.$$
\end{proof}

\subsection{Proof of Claim \ref{lemma_sum_of_geom_vars}}\label{app:I}
\textbf{Claim \ref{lemma_sum_of_geom_vars}} (restated):
\textit{Let $X_i$ be independent geometric random variables with parameter $p$, and let $X=\sum_{i=1}^n X_i$.
For $p\geq \frac{1}{2}$, and $\alpha>1$:
\begin{align}
\Pr \left(X\geq 2n\alpha\right)\leq \left(2^{1.5-\alpha}\right)^n.
\end{align}}
\begin{proof}
In order to obtain this upper bound on the sum of $n$ independent geometric random variables we will use a Chernoff bound.
The generating function of a geometric random variable $X_i$ is given by:
$$G_{X_i}(t)=\text{E}\left[e^{tX_i}\right]=\frac{pe^t}{1-(1-p)e^t}\text{ }, \text{ where }t<-\ln(1-p).$$
The generating function of the sum of independent random variables is a multiplication of their generating functions. Thus:
$$G_{X}(t)=\text{E}\left[e^{t\sum_{i=1}^n X_i}\right]=\text{E}\left[e^{ntX_i}\right]=\left(\frac{pe^t}{1-(1-p)e^t}\right)^n.$$
Now, we will apply Markov's inequality to obtain an upper bound on $X$. For $t\geq 0$:
$$\Pr \left(X\geq 2n\alpha\right)=\Pr \left(e^{tX}\geq e^{t 2n\alpha}\right)\leq \frac{\text{E}\left[e^{tX}\right]}{e^{t 2n\alpha}}=\frac{G_{X}(t)}{e^{t 2n\alpha}}.$$
By letting $t=-0.5\ln(1-p)$ we get:
$$\Pr (X\geq 2n\alpha)\leq \left(\frac{(1-p)^{\alpha-0.5}p}{1-(1-p)^{0.5}}\right)^n.$$
It is clear that $\Pr \left(X\geq 2n\alpha\right)$ decreases when $p$ increases. Thus, to obtain an upper bound, we will substitute $p$ with its minimal value, i.e., $1/2$, and we get the result:
\begin{align*}
\Pr \left(X\geq 2n\alpha\right)&\leq \left(\frac{(1-0.5)^{\alpha-0.5}0.5}{1-(1-0.5)^{0.5}}\right)^n
\\&\leq \left(2\cdot0.5^{\alpha-0.5}\right)^n
\\&=\left(2^{1.5-\alpha}\right)^n.
\end{align*}
\end{proof}

\subsection{Proof of the expectation result in Theorem \ref{theorem_ex_is_better_than_push_pull}}\label{app:J}
The Eq. \eqref{s_expected} in Theorem \ref{theorem_ex_is_better_than_push_pull} states that: $E[R_2] \le 4n + 1$.

\begin{proof}
First, we rewrite the high probability result of Eq. \eqref{s_whp}  for $R_2$ with $\alpha>1$:
$$\Pr\left(R_2 \ge 2n\alpha \right)\le n(2^{1.5-\alpha})^n.$$
For a positive integer random variable $R_2$ holds: $\text{E}\left[R_2\right]=\sum_{i=1}^{\infty}\Pr(R_2\ge i)$. So, we have:
\begin{align*}
\text{E}\left[R_2\right]&=\sum_{i=1}^{\infty}\Pr(R_2\ge i)
\\&=\sum_{i=1}^{4n-1}\Pr(R_2\ge i)+\sum_{i=4n}^{\infty}\Pr(R_2\ge i)\\
&\le 4n+\sum_{i=4n}^{\infty}\Pr(R_2\ge i)\\
&\le 4n+2n\sum_{\alpha=2}^{\infty}\Pr(R_2\ge 2n\alpha).
\end{align*}
The last inequality is true since $\forall i\le j, \Pr(R_2\ge i)\ge \Pr(R_2\ge j)$ and thus we can replace all $\Pr(R_2\ge i)$ for $i\in[2n\alpha ,...,2n(\alpha+1)-1]$ with $2n\times\Pr(R_2\ge 2n\alpha)$. Hence,
\begin{align*}
\text{E}\left[R_2\right] &\le 4n+2n\sum_{\alpha=2}^{\infty}\Pr(R_2\ge 2n\alpha)\\
&\le 4n+2n\sum_{\alpha=2}^{\infty}n(2^{1.5-\alpha})^n\\
&=4n +2n^2\cdot 2^{1.5n}\sum_{\alpha=2}^{\infty}(2^{-n})^{\alpha}\\
&=4n +2n^2\cdot 2^{1.5n}\frac{2^{-2n}}{1-2^{-n}}\\
&=4n +\frac{2n^2\cdot 2^{0.5n}}{2^{n}-1},\\
&\text{and for } n>19:\\
&\le 4n + 1.
\end{align*}
\end{proof}

\subsection{Proof of Theorem \ref{thm:tree_of_queues}}\label{app:K}
%\addcontentsline{toc}{subsection}{C. Proof of Theorem \ref{thm:tree_of_queues}}

For the proof of this theorem we need the following definitions, claims, and lemmas.

\subsubsection{Stochastic Dominance}\label{app:K-1}

\begin{dfn}[Stochastic dominance, stochastic ordering \cite{Hofstad08randomgraphs,GrimmettStirzaker:01}]
\label{dfn:stoch_dom}
We say that a random variable $X$ is stochastically less than or equal to a random variable $Y$ if and only if $\Pr(X\le t)\ge \Pr(Y\le t)$ for any $t\ge 0$, and such a relation is denoted as: $X\preceq Y$.
\end{dfn}

\begin{dfn}[Stochastic equivalence]
\label{dfn:stoch_equiv}
We say that a random variable $X$ is stochastically equivalent to a random variable $Y$ if and only if $\Pr(X\le t)= \Pr(Y\le t)$ for any $t\ge 0$, and such a relation is denoted as: $X\approx Y$.
\end{dfn}

The proof of the following two claims is omitted.

\begin{clm}
\label{clm:stoch_dom_max}
If for $i\in\{1,2\}$, $X_i\preceq Y_i$, $X_i$ are independent and $Y_i$ are independent, then: $\max_i{X_i}\preceq \max_i{Y_i}$.
\end{clm}

%\begin{proof}
%\begin{align*}
%\Pr(\max_i{X_i}\le t)=\bigcap_i\Pr(X_i\le t)= \prod_i\Pr(X_i\le t)\\\ge \prod_i\Pr(Y_i\le t)=\Pr(\max_i{Y_i}\le t).
%\end{align*}
%Hence:
%\begin{align*}
%\max_i{X_i}\preceq \max_i{Y_i}.
%\end{align*}
%\end{proof}

\begin{clm}
\label{clm:stoch_dom_sum}
If for $i\in\{1,2\}$, $X_i\preceq Y_i$, $X_i$ are independent and $Y_i$ are independent, then: $\sum_i{X_i}\preceq \sum_i{Y_i}$.
\end{clm}

%\begin{proof}
%\begin{align*}
%\Pr(X_1+X_2\le t)=\int_{-\infty}^t f_{X_1+X_2}(s)ds,\\\text{where } f_{X_1+X_2}(s)=f_{X_1}(s)\ast f_{X_2}(s).
%\end{align*}
%Thus:
%\begin{align*}
%\Pr(X_1+X_2\le t)&=\int_{-\infty}^t \int_{-\infty}^{\infty}f_{X_1}(\tau)f_{X_2}(s-\tau)d\tau ds
%\\&=\int_{-\infty}^{\infty}f_{X_1}(\tau)\Pr(X_2\le t-\tau) d\tau
%\\&\ge\int_{-\infty}^{\infty}f_{X_1}(\tau)\Pr(Y_2\le t-\tau) d\tau
%\\&=\int_{-\infty}^t \int_{-\infty}^{\infty}f_{X_1}(\tau)f_{Y_2}(s-\tau)d\tau ds
%\\&=\int_{-\infty}^t \int_{-\infty}^{\infty}f_{Y_2}(\tau)f_{X_1}(s-\tau)d\tau ds
%\\&=\int_{-\infty}^{\infty}f_{Y_2}(\tau)\Pr(X_1\le t-\tau) d\tau
%\\&\ge\int_{-\infty}^{\infty}f_{Y_2}(\tau)\Pr(Y_1\le t-\tau) d\tau
%\\&=\int_{-\infty}^t \int_{-\infty}^{\infty}f_{Y_2}(\tau)f_{Y_1}(s-\tau)d\tau ds
%\\&=\Pr(Y_1+Y_2\le t).
%\end{align*}
%Hence:
%\begin{align*}
%\sum_{i=1}^2{X_i}\preceq \sum_{i=1}^2{Y_i}.
%\end{align*}
%\end{proof}

\subsubsection{Later arrivals yield later departures}\label{app:K-2}
%\addcontentsline{toc}{subsection}{Later arrivals yield later departures}

Consider an infinite FCFS queue with a single exponential server. We define $a_i$ as the time of arrival number $i$ to the queue, and $d_i$ as time of departure number $i$ from the queue. Let $X_i$ be the exponential random variable representing the service time of the arrival $i$. For all $i$, $X_i$'s are $i.i.d$.

Let $a_i$ be a sequence of $m$ arrival times to the queue, and $d_i$ be a sequence of $m$ departure times from the queue.

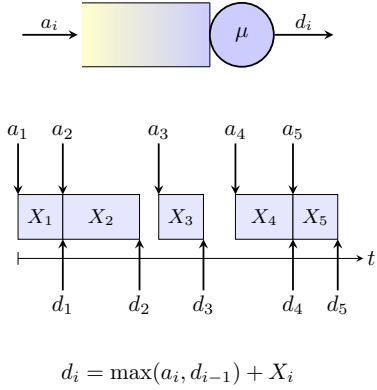
\begin{figure}[ht]
\centering
\scalebox{0.85}{\begin{tikzpicture}
[inner sep=0.6mm, place/.style={circle,draw=black,fill=blue!20,thick,minimum size=1cm},>=stealth]

\shade[left color=yellow!20,right color=blue!20] (0,0) rectangle (2,1);
\draw[black] (0,0) -- (2,0);
\draw[black] (0,1) -- (2,1);
\draw[black] (2,0) -- (2,1);
\node at (2.5,0.5) [place] (server)  {$\mu$};
\node at (0,0.5) [auto] (tail)  {\tiny{}};
\node at (4,0.5) [auto] (end)  {\tiny{}};
\node at (-1,0.5) [auto] (start)  {\tiny{}};

\path[->,thick] (start) edge node [above]{\small{$a_i$}} (tail);
\path[->,thick] (server) edge node [above]{\small{$d_i$}} (end);

\draw[black,fill=blue!10] (-1,-2) rectangle node[black] {\small{$X_1$}}(-0.3,-2.7) ;
\draw[black,fill=blue!10] (-0.3,-2) rectangle node[black] {\small{$X_2$}}(0.9,-2.7);
\draw[black,fill=blue!10] (1.2,-2) rectangle node[black] {\small{$X_3$}}(1.9,-2.7);
\draw[black,fill=blue!10] (2.4,-2) rectangle node[black] {\small{$X_4$}}(3.3,-2.7);
\draw[black,fill=blue!10] (3.3,-2) rectangle node[black] {\small{$X_5$}}(4,-2.7);

\draw[black,->] (-1,-3) -- (4.4,-3)node[black,right] {$t$};
\draw[black] (-1,-3.1) -- (-1,-2.9);

\draw[black,->,thick] (-0.3,-3.5) node[black,below] {$d_1$}-- (-0.3,-2.7);
\draw[black,->,thick] (0.9,-3.5) node[black,below] {$d_2$}-- (0.9,-2.7);
\draw[black,->,thick] (1.9,-3.5) node[black,below] {$d_3$}-- (1.9,-2.7);
\draw[black,->,thick] (3.3,-3.5) node[black,below] {$d_4$}-- (3.3,-2.7);
\draw[black,->,thick] (4,-3.5) node[black,below] {$d_5$}-- (4,-2.7);

\draw[black,->,thick] (-1,-1.2) node[black,above] {$a_1$}-- (-1,-2);
\draw[black,->,thick] (-0.3,-1.2) node[black,above] {$a_2$}-- (-0.3,-2);
\draw[black,->,thick] (1.2,-1.2) node[black,above] {$a_3$}-- (1.2,-2);
\draw[black,->,thick] (2.4,-1.2) node[black,above] {$a_4$}-- (2.4,-2);
\draw[black,->,thick] (3.3,-1.2) node[black,above] {$a_5$}-- (3.3,-2);

\node at (1.5,-4.8) [auto] (eq)  {$d_i=\max(a_i,d_{i-1})+X_i$};

\end{tikzpicture}}
\caption{Arrival and departure times.}
\label{fig:arrival_and_departure}
\end{figure}

\begin{lemma}\label{lemma:later_arrivals}
If the sequence $a_i$ is replaced with another sequence of $m$ arrivals -- $\hat{a_i}$, such that: $\hat{a}_i\succeq a_i$ $\forall i\in[1,...,m]$, then the resulting sequence of $m$ departures will be such that: $\hat{d}_i\succeq  d_i$ $\forall i\in[1,...,m]$. I.e., if every new arrival occurred, stochastically, at the same time or later than the old arrival, then every new departure from the queue will occur, stochastically, at the same time or later than the old departure.
\end{lemma}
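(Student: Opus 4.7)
The plan is to prove Lemma~\ref{lemma:later_arrivals} by induction on $i$ using the single governing recursion illustrated in the figure preceding the statement, namely
\begin{equation*}
d_i \;=\; \max(a_i,\,d_{i-1}) + X_i,
\end{equation*}
and the parallel recursion $\hat{d}_i = \max(\hat{a}_i,\hat{d}_{i-1}) + X_i$ for the second system (using the same family of service times $X_i$, which is without loss of generality since both systems have i.i.d.\ $\mathrm{Exp}(\mu)$ servers). The main tools are exactly the two preservation results established just above: Claim~\ref{clm:stoch_dom_max} (stochastic dominance is preserved under coordinate-wise maxima of independent variables) and Claim~\ref{clm:stoch_dom_sum} (it is preserved under sums of independent variables).

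For the base case $i=1$, we set $d_0 = \hat{d}_0 = -\infty$ (or simply note that $d_1 = a_1 + X_1$ and $\hat{d}_1 = \hat{a}_1 + X_1$). Since $X_1$ is independent of the arrival times and $\hat{a}_1 \succeq a_1$ by hypothesis, Claim~\ref{clm:stoch_dom_sum} gives $\hat{d}_1 \succeq d_1$. For the inductive step, assume $\hat{d}_{i-1} \succeq d_{i-1}$. Combining this with the hypothesis $\hat{a}_i \succeq a_i$, Claim~\ref{clm:stoch_dom_max} yields
\begin{equation*}
\max(\hat{a}_i,\hat{d}_{i-1}) \;\succeq\; \max(a_i,d_{i-1}),
\end{equation*}
and then adding the independent service time $X_i$ to both sides and invoking Claim~\ref{clm:stoch_dom_sum} gives $\hat{d}_i \succeq d_i$, completing the induction.

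The main obstacle is the independence hypothesis required by Claims~\ref{clm:stoch_dom_max} and~\ref{clm:stoch_dom_sum}: the recursion couples $d_{i-1}$ to $a_{i-1}$ and to $X_1,\ldots,X_{i-1}$, so one must check that $a_i$ is independent of $d_{i-1}$ (and similarly on the hatted side) before taking the max, and that $X_i$ is independent of $\max(a_i,d_{i-1})$ before summing. Both hold in the setting of interest, because $X_i$ is fresh and independent of all earlier arrivals and services, and the arrival process to this queue is generated independently of the server's internal history. If one prefers to sidestep these bookkeeping checks entirely, an equivalent and arguably cleaner route is a direct coupling argument: by Strassen's theorem, couple the sequences so that $\hat{a}_i \geq a_i$ pointwise almost surely, and run both queues with the identical realization of $X_1,X_2,\ldots$; the recursion then propagates $\hat{d}_i \geq d_i$ pointwise almost surely by a trivial induction, which in particular implies the stochastic dominance claimed. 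Either route gives the lemma, and the coupling viewpoint also makes it obvious why no special structure of the exponential service distribution is needed beyond having a common distribution for $X_i$ in the two systems.
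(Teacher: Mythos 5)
Your proof is correct and follows essentially the same route as the paper's: induction on the departure index via the recursion $d_i=\max(a_i,d_{i-1})+X_i$, with the base case $d_1=a_1+X_1$ and the inductive step handled by Claims \ref{clm:stoch_dom_max} and \ref{clm:stoch_dom_sum}. Your added remarks on the independence hypotheses (which the paper invokes silently) and the alternative pointwise coupling argument are sound refinements rather than a different proof.
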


\begin{proof}
The proof is by induction on the arrival index $j$, $j\in[1,...,m]$.
\begin{itemize}
\item Induction basis: $\hat{d}_1\succeq d_1$ follows since $d_1=a_1+X_1$, $\hat{d}_1=\hat{a}_1+X_1$, and $\hat{a}_1\succeq a_1$.
\item Induction assumption: $\forall i<j$ : $\hat{d}_i\succeq d_i$.
\item Induction step: we need to show that $\hat{d}_j\succeq d_j$.
\end{itemize}
If the $j$'s arrival occurred when the server was busy, then $d_j=d_{j-1}+X_j$. If the server was idle when the $j$'s arrival occurred, then $d_j=a_j+X_j$. Thus, we can write:
\begin{align}
d_j=\max(d_{j-1},a_j)+X_j, \\\text{and } \hat{d}_j=\max(\hat{d}_{j-1},\hat{a}_j)+X_j.
\end{align}
Since from induction assumption $\hat{d}_{j-1}\succeq d_{j-1}$, and $\hat{a}_j\succeq a_j$, using Claims \ref{clm:stoch_dom_max} and \ref{clm:stoch_dom_sum} we obtain $\hat{d}_j\succeq d_j$.
\end{proof}

\begin{proof}[Proof of Theorem \ref{thm:tree_of_queues}]

We denote the nodes of the queuing system $Q_n^{tree}$ as $Z_j^l$, where $l$ ($l\in[1,...,l_{\max}]$) is the level of the node in the tree, and $j$ is the node's index in level $l$. The root of the $Q_n^{tree}$ tree is the node $Z_1^1$. All servers in the $Q_n^{tree}$ network are ON all the time (work-conserving scheduling), i.e., servers work whenever they have customers to serve. There are no external arrivals to the system. Once a customer is serviced on level $l$, it enters the appropriate queue at the level $l-1$. When a customer is serviced by the root $Z_1^1$, it leaves the network.

Now, let us define the auxiliary queuing systems: $\hat{Q}_n^{tree}$ and $Q_{l_{\max}}^{line}$.

\begin{dfn}[Network $\hat{Q}_n^{tree}$]
\label{dfn:network_tn_hat}
$\hat{Q}_n^{tree}$ is the same network as $Q_n^{tree}$ with the following change in the servers' scheduling:

At any given moment, only one server at every level $l$ ($l\in[1,...,l_{\max}]$) is ON. Once a customer leaves level $l$,
% for one customer service period.
a server that will be scheduled (turned ON) at level $l$, is the server that has in its queue a customer that has earliest arrival time to a queue at level $l$ among all the current customers at level $l$. If there are customers that initially reside at level $l$, they will be serviced in the order of their IDs (we assume for analysis that every customer has a unique identification number).
\end{dfn}

\begin{dfn}[Network of queues $Q_{l_{\max}}^{line}$]
\label{dfn:line_of_queues}
$Q_{l_{\max}}^{line}$ is the following modification of the network $Q_n^{tree}$ that results in a network of $l_{\max}$ queues arranged in a line topology.

For all $l\in[1,..,l_{\max}]$, we merge all the nodes at the level $l$ to a single node (a single queue with a single server). We name this single node at level $l$ as the first node in $Q_n^{tree}$ at level $l$, i.e., $Z_1^l$.
The customers that initially reside at level $l$ will be placed in a single queue in the order of their IDs.
This modification results in $Q_{l_{\max}}^{line}$ -- a network of $l_{\max}$ queues arranged in a line topology:
$Z_1^{l_{\max}}\rightarrow Z_1^{l_{\max}-1}\rightarrow\cdots\rightarrow Z_1^1$.
\end{dfn}

\begin{dfn}[Network of queues $\grave{Q}_{l_{\max}}^{line}$]
\label{dfn:line_of_queues_one_customer_back}
$\grave{Q}_{l_{\max}}^{line}$ -- is the same system as $Q_{l_{\max}}^{line}$ with the following modification.
We take the last customer at some node $Z_1^m$ ($m\in{[1,..,l_{\max}-1}]$) and place it at the head of the queue of node $Z_1^{m+1}$. I.e., we move one customer, one queue backward in the line of queues.
\end{dfn}

\begin{dfn}[Network of queues $\hat{Q}_{l_{\max}}^{line}$]
\label{dfn:line_of_queues_all_customer_back}
$\hat{Q}_{l_{\max}}^{line}$ -- is the same system as $Q_{l_{\max}}^{line}$ with the following modification.
We move all the customers to queue $Z^{l_{\max}}_1$. I.e., all the customers have to traverse now through all the $l_{\max}$ queues in the line.
\end{dfn}

We summarize the queuing systems defined above in short Table \ref{tab:queuing_systems}.

\renewcommand{\arraystretch}{1.2}
\begin{table}[h]
	\centering
	\scalebox{1}{
		\begin{tabular}{|c||m{7cm}|}
		\hline
			${Q}_n^{tree}$ & Original system of $n$ queues arranged in a tree topology. Fig. \ref{fig:three_network_systems} (a).\\\hline
			$\hat{Q}_n^{tree}$ & System of $n$ queues arranged in a tree topology. Only one server is active at each level at a given time. Fig. \ref{fig:three_network_systems} (b).\\\hline
			${Q}_{l_{\max}}^{line}$ & System of $l_{\max}$ queues arranged in a line topology. Fig. \ref{fig:three_network_systems} (c).\\\hline
			$\grave{Q}_{l_{\max}}^{line}$ & System of $l_{\max}$ queues arranged in a line topology. One customer is moved one queue backward.\\\hline
			$\hat{Q}_{l_{\max}}^{line}$ & System of $l_{\max}$ queues arranged in a line topology. All customers are moved backward to the queue $Z^{l_{\max}}_1$.\\\hline
		\end{tabular}
		}
	\caption{Queuing systems used in the proof.}
	\label{tab:queuing_systems}
\end{table}

The proof of Theorem \ref{thm:tree_of_queues} consists of showing the following relations between the stopping times of the queuing systems:
%\begin{itemize}
%\item Show that the stopping time of the queuing system $Q_n^{tree}$, is at least as the stopping time of the queuing system $\hat{Q}_n^{tree}$, i.e., $t(Q_n^{tree})\preceq t(\hat{Q}_n^{tree})$.
%\item Show that the stopping time of the queuing system $\hat{Q}_n^{tree}$, is the same as the stopping time of the queuing system $Q_{l_{\max}}^{line}$, i.e., $t(\hat{Q}_n^{tree})\approx t({Q}_{l_{\max}}^{line})$.
%\item Show that the stopping time of the queuing system $Q_{l_{\max}}^{line}$ cannot become smaller if we move all the customers to the last queue in the line -- as in the system $\hat{Q}_{l_{\max}}^{line}$, i.e., $t(Q_{l_{\max}}^{line})\preceq t(\hat{Q}_{l_{\max}}^{line})$.
%\item Use Jackson's theorem for open networks to bound the time it takes to customers to cross the line of queues $\hat{Q}_{l_{\max}}^{line}$.
%\end{itemize}
%
%To summarize the above, we will show that:
\begin{align*}
t(Q_n^{tree})&\preceq t(\hat{Q}_n^{tree})
\\&\approx t({Q}_{l_{\max}}^{line})
\\&\preceq t(\grave{Q}_{l_{\max}}^{line})
\\&\preceq t(\hat{Q}_{l_{\max}}^{line})=O(n/\mu).
\end{align*}

Stopping time of a queuing system $t(Q)$ is the time at which the last customer leaves the system (via the node $Z_1^1$).
In order to compare the stopping times of queuing systems, we define the following ordered set (or sequence) of departure time from a server $Z$ in a queuing system $Q$:
$d(Z,Q)=(d_1(Z,Q),d_2(Z,Q),...,d_i(Z,Q),...)$, where $d_i(Z,Q)$ is the time of the departure number $i$ from the node (server) $Z$.

%[\textbf{Michael}: where the set of arrival times to a node $Z_1^l$ is $\bigcup_{j}\hat{d}_i(Z_j^{l+1})$.]

\begin{figure*}[ht]
\centering
\input{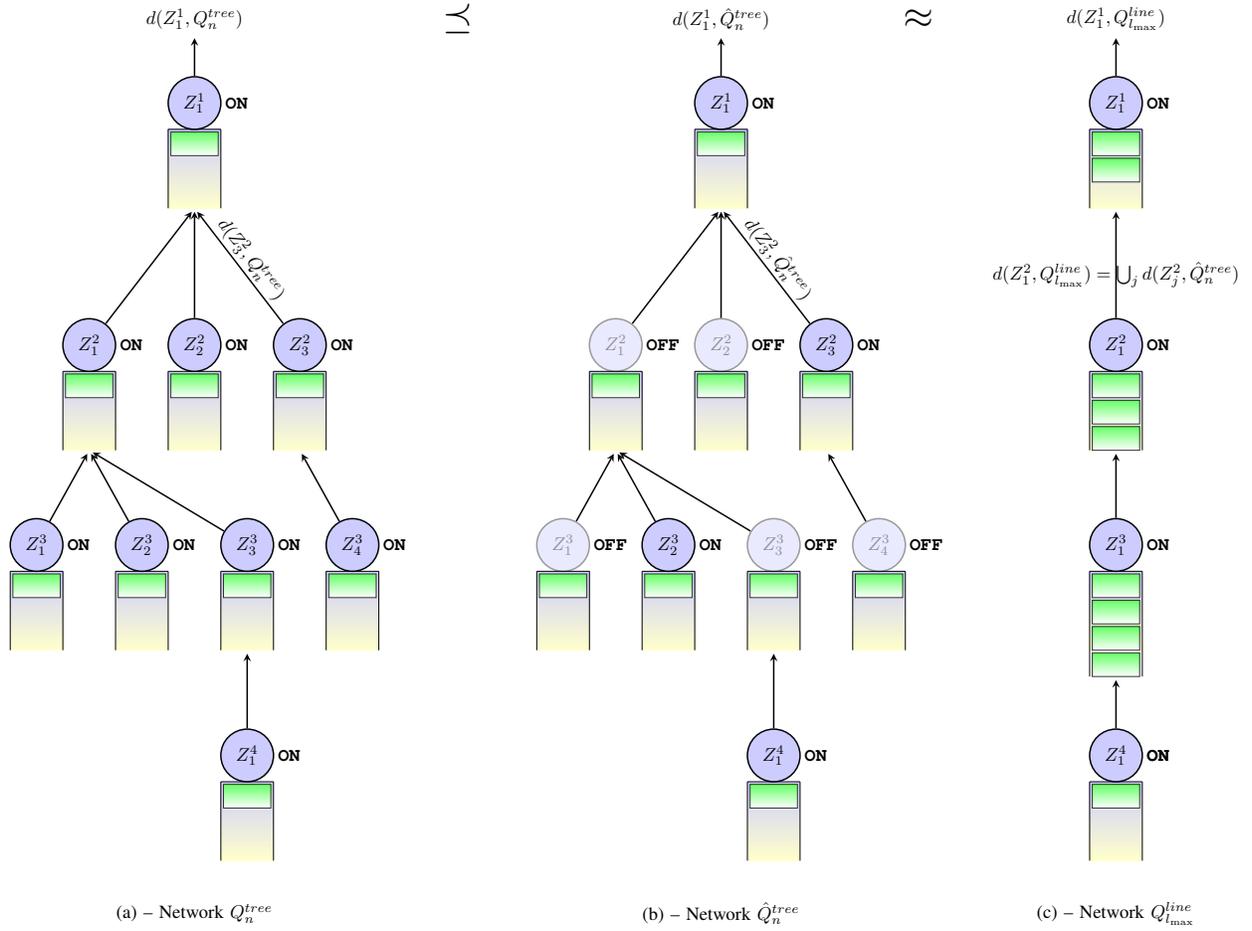}
\caption{(a) Network ${Q}_n^{tree}$, where all the servers work all the time.
(b) Network $\hat{Q}_n^{tree}$, where only one server at each level works at a given time.
(c) Network ${Q}_{l_{\max}}^{line}$.}
\label{fig:three_network_systems}
\end{figure*}

First, we want to show that the stopping time of $Q_n^{tree}$ is at most the stopping time of the system $\hat{Q}_n^{tree}$, i.e., $t(Q_n^{tree})\preceq t(\hat{Q}_n^{tree})$.

\begin{lemma}\label{lemma:idle_server}
%Let $d(Z_j^l)$ and $\hat{d}(Z_j^l)$ be a sequences of departure times from the node $Z_j^l$ in ${Q}_n^{tree}$ and in $\hat{Q}_n^{tree}$ respectively.
In $\hat{Q}_n^{tree}$, every departure from the system (via $Z_1^1$) will occur, stochastically, at the same time or later than in ${Q}_n^{tree}$:
\begin{align}
{d}_i(Z_1^1,\hat{Q}_n^{tree})\succeq d_i(Z_1^1,{Q}_n^{tree}) \text{ } \forall i\in[1,...,n].
\end{align}
Thus, in $\hat{Q}_n^{tree}$, the last customer will leave the system, stochastically, at the same time or later than in ${Q}_n^{tree}$ or: $t({Q}_n^{tree})\preceq t(\hat{Q}_n^{tree})$.
\end{lemma}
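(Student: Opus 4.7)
The plan is to prove the stochastic dominance $d_i(Z_1^1, {Q}_n^{tree}) \preceq {d}_i(Z_1^1, \hat{Q}_n^{tree})$ for every $i$ by induction on the tree levels from $l_{\max}$ down to $1$. The inductive claim I would try to establish is: for every level $l$ and every node $Z$ at level $l$, the componentwise departure sequence $d(Z, \hat{Q}_n^{tree})$ stochastically dominates $d(Z, Q_n^{tree})$. Specializing this at $l=1$ with $Z = Z_1^1$ yields the statement of the lemma, and applying it at $i=n$ gives the stopping-time conclusion. The coupling throughout uses the memoryless property: each ``service slot'' at each node is a fresh $\text{Exp}(\mu)$ draw, which is legitimate in both scheduling regimes.

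For the base case $l = l_{\max}$, each leaf starts with a single customer and receives no further arrivals. In $Q_n^{tree}$, the $k$-th departure from level $l_{\max}$ is the $k$-th order statistic $X_{(k)}$ of $n_{l_{\max}}$ i.i.d.\ $\text{Exp}(\mu)$ random variables, which has the well-known decomposition $X_{(k)} = \sum_{i=1}^k Y_i$ with $Y_i \sim \text{Exp}(\mu(n_{l_{\max}}-i+1))$ independent. Since a higher-rate exponential is stochastically smaller, $Y_i \preceq \text{Exp}(\mu)$, and Claim \ref{clm:stoch_dom_sum} gives $X_{(k)} \preceq \sum_{i=1}^k \text{Exp}(\mu)$. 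The right-hand side is exactly the distribution of the $k$-th departure from level $l_{\max}$ in $\hat{Q}_n^{tree}$, since leaves are served strictly sequentially there by the ID-ordering rule.

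For the inductive step, assume the claim at level $l+1$ and consider level $l$. Each queue $Z_j^l$ receives as arrivals the departures from its children at level $l+1$, which by the induction hypothesis are componentwise later in $\hat{Q}_n^{tree}$. To bridge $Q_n^{tree}$ and $\hat{Q}_n^{tree}$ I would introduce an auxiliary system $Q'$ which uses $\hat{Q}_n^{tree}$-style (one-server-per-level) scheduling at all levels strictly above $l$ but $Q_n^{tree}$-style (fully parallel) scheduling at level $l$. Then two comparisons suffice: (a) the departures from each $Z_j^l$ in $Q'$ dominate those in $Q_n^{tree}$ because the only difference is later arrivals, so Lemma \ref{lemma:later_arrivals} applies separately to each sub-queue; (b) the departures from each $Z_j^l$ in $\hat{Q}_n^{tree}$ dominate those in $Q'$ because both systems now have identical arrival processes to level $l$, but $\hat{Q}_n^{tree}$ caps the aggregate service rate of the level at $\mu$ instead of the $b_l(t)\mu$ possible with $b_l(t)$ parallel busy servers.

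The main obstacle is part (b): formally arguing that restricting to a single active server per level only delays departures. I would handle this by a direct coupling at level $l$ based on memorylessness: pair the level-$l$ service events in both systems using a single rate-$\mu$ Poisson clock per node; whenever a node $Z_j^l$ would serve a customer in $\hat{Q}_n^{tree}$ it is, by the earliest-arrival rule, a node that is also busy in $Q'$, so the same firing produces a departure in $Q'$ no later than in $\hat{Q}_n^{tree}$. Conversely, firings at nodes that are idle in $\hat{Q}_n^{tree}$ but busy in $Q'$ represent extra parallel service that only helps $Q'$. Turning this intuition into a term-by-term comparison of the departure sequences, and then routing each departed customer to its parent at level $l-1$ via one more application of Lemma \ref{lemma:later_arrivals}, completes the induction and thus proves $t({Q}_n^{tree})\preceq t(\hat{Q}_n^{tree})$.
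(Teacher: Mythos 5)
Your proposal follows essentially the same route as the paper: a bottom-up induction on the tree levels that combines Lemma \ref{lemma:later_arrivals} (later arrivals yield later departures) with the observation that deactivating servers can only delay departures, your auxiliary hybrid system $Q'$ merely making explicit the two effects that the paper's induction step treats in one breath. The one inaccuracy is in your coupling sketch for step (b): the node that is active in $\hat{Q}_n^{tree}$ need not be busy in $Q'$ (the parallel system may already have cleared that node's queue), so the coupling should instead track that the set of customers still present at level $l$ in $Q'$ is always contained in the corresponding set for $\hat{Q}_n^{tree}$ --- a rephrasing that does not change the conclusion and leaves your argument at the same level of rigor as the paper's.
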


\begin{proof}
The proof is by induction on the tree level $l$, $l\in[1,...,l_{\max}]$.
\begin{itemize}
\item Induction basis: $\forall i,j  \text{ : } {d}_i(Z_j^{l_{\max}},\hat{Q}_n^{tree})\succeq d_i(Z_j^{l_{\max}},{Q}_n^{tree})$. This is true since in $\hat{Q}_n^{tree}$, the nodes do not work all the time, and thus the departures will occur, stochastically, at the same time or later than in ${Q}_n^{tree}$. If there is a single node at level $l_{\max}$, in $\hat{Q}_n^{tree}$ it will be ON all the time as in ${Q}_n^{tree}$, and thus, the departures will occur, stochastically, at the same time in both systems.
\item Induction assumption: for all $l>m$ ($m\ge 1$), $\forall i,j  \text{ : } {d}_i(Z_j^{l},\hat{Q}_n^{tree})\succeq d_i(Z_j^{l},{Q}_n^{tree})$.
\item Induction step: we need to show that: $\forall i,j  \text{ : } {d}_i(Z_j^{m},\hat{Q}_n^{tree})\succeq d_i(Z_j^{m},{Q}_n^{tree})$.
\end{itemize}
By induction assumption, for $l=m+1$: $\forall i,j  \text{ : } {d}_i(Z_j^{m+1},\hat{Q}_n^{tree})\succeq d_i(Z_j^{m+1},{Q}_n^{tree})$.
Now let us take a look at the departures from a node $Z_j^{m}$. There are two cases: $Z_j^{m}$ is a leaf, and $Z_j^{m}$ is not a leaf.
If $Z_j^{m}$ is a leaf, we can use the same argument as in the induction basis: in $\hat{Q}_n^{tree}$, the node $Z_j^{m}$ does not work all the time, and thus the departures from it in $\hat{Q}_n^{tree}$ cannot occur earlier than in ${Q}_n^{tree}$.
If $Z_j^{m}$ is not a leaf, it has input/inputs of arrivals from the level $m+1$.
Since the arrivals from the level $m+1$ in $\hat{Q}_n^{tree}$ occur, stochastically, at the same time or later than in ${Q}_n^{tree}$ (by induction assumption), even if node $Z_j^{m}$ would work all the time (as in ${Q}_n^{tree}$), we would obtain from Lemma \ref{lemma:later_arrivals}: $\forall i,j \text{ : } {d}_i(Z_j^m,\hat{Q}_n^{tree})\succeq d_i(Z_j^m,{Q}_n^{tree})$. Moreover, in $\hat{Q}_n^{tree}$, node $Z_j^m$ does not work all the time (unless it is the only node at level $m$); thus the departure times in $\hat{Q}_n^{tree}$ can be even larger.
\end{proof}

\begin{lemma}\label{lemma:tree_as_line}
In ${Q}_{l_{\max}}^{line}$, every departure from the system (via $Z_1^1$) will occur, stochastically, at the same time as in $\hat{Q}_n^{tree}$.
Thus, in ${Q}_{l_{\max}}^{line}$, the last customer will leave the system, stochastically, at the same time as in $\hat{Q}_n^{tree}$.
\end{lemma}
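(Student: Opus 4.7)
The plan is to establish a coupling between $\hat{Q}_n^{tree}$ and ${Q}_{l_{\max}}^{line}$ under which the sequence of departures from the root is \emph{identically distributed}, so that in particular ${d}_i(Z_1^1,\hat{Q}_n^{tree})$ and $d_i(Z_1^1,{Q}_{l_{\max}}^{line})$ have the same law for every $i$. The central observation is that in $\hat{Q}_n^{tree}$, by Definition \ref{dfn:network_tn_hat}, at every instant exactly one server at each level $l$ is ON, so the aggregate service capacity at level $l$ is a single exponential server of rate $\mu$ — precisely the service capacity of $Z_1^l$ in ${Q}_{l_{\max}}^{line}$. The memoryless property of the exponential distribution is what makes the two systems match exactly (rather than merely stochastically): whenever the scheduled server at level $l$ of $\hat{Q}_n^{tree}$ finishes a job and a different server at that level is turned ON, its fresh service time is still $\text{Exp}(\mu)$, and so a single $\text{Exp}(\mu)$ clock per level suffices to drive all completions at that level.

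I would proceed by induction on the level $l$, starting from the leaves $l = l_{\max}$ and moving up toward the root. In the base case, only the initial customers reside at level $l_{\max}$: in $\hat{Q}_n^{tree}$ the scheduling rule serves them one at a time in order of their IDs, and in ${Q}_{l_{\max}}^{line}$ they sit in $Z_1^{l_{\max}}$ in the same ID order by Definition \ref{dfn:line_of_queues}. Hence the successive departure times from level $l_{\max}$ form, in both systems, a sum of i.i.d.\ $\text{Exp}(\mu)$ service times applied to the same ordered list of customers. For the inductive step, assume that the departure process from level $l+1$ is identical in law in the two systems. Then the arrival process into level $l$ is identical in law; moreover the order in which customers reach service at level $l$ is the same — in $\hat{Q}_n^{tree}$ the ``earliest arrival to the level'' rule produces exactly the FCFS order (with initial customers tied and broken by ID), which is precisely how $Z_1^l$ operates in the line. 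By the memoryless argument each service completion at level $l$ consumes an independent $\text{Exp}(\mu)$ clock, so the departure process out of level $l$ is identical in the two systems, closing the induction.

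The main subtlety I expect is the book-keeping for the service order at level $l$ of $\hat{Q}_n^{tree}$: the rule in Definition \ref{dfn:network_tn_hat} mixes initial customers (broken by ID) with customers that arrive from level $l+1$ (broken by arrival time), and the coupling requires that $Z_1^l$ in ${Q}_{l_{\max}}^{line}$ produce the same ordering. This is guaranteed by Definition \ref{dfn:line_of_queues}, which explicitly places initial customers in ID order at the head of $Z_1^l$; arrivals from $Z_1^{l+1}$ then enter the tail in their arrival order, matching the tree's scheduling. The only other point to verify carefully is that turning one server OFF and another ON at the same level in $\hat{Q}_n^{tree}$ is legitimately captured by a single exponential clock — this is exactly the memoryless property together with the fact that the scheduling decision depends only on the arrival order, not on any timing information from the clocks being replaced.
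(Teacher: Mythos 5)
Your proposal is correct and follows essentially the same route as the paper's proof: both arguments observe that at each level of $\hat{Q}_n^{tree}$ exactly one customer is in service at a time and customers are served in FCFS order of their arrival to the level (with initial customers ordered by ID), which is exactly the discipline of the single queue $Z_1^l$ in ${Q}_{l_{\max}}^{line}$, so the departure processes from each level coincide in law. Your version merely makes explicit two points the paper leaves implicit — the level-by-level induction and the memorylessness justification for treating the alternating ON servers as one $\text{Exp}(\mu)$ clock — which is a welcome but not substantively different elaboration.
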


\begin{proof}
Consider the two following facts regarding the network $\hat{Q}_n^{tree}$.
First, a customer entering level $l$ will be serviced after all the customers that arrived at level $l$ before it, are serviced.
Second, at any given moment, only one customer is being serviced at level $l$ (if there is at least one customer at the nodes $Z_j^l$). These facts are true due to the scheduling of the servers in $\hat{Q}_n^{tree}$ (Definition \ref{dfn:network_tn_hat}).

Clearly, the same facts are true for the network ${Q}_{l_{\max}}^{line}$. First, any customer entering level $l$ will be serviced after all the customers that arrived at level $l$ before it are serviced. Second, at any given moment, only one customer is being serviced at level $l$ (if there is at least one customer in node $Z_1^l$). These facts are true since in ${Q}_{l_{\max}}^{line}$, at every level, there is a single queue with a single server (Definition \ref{dfn:line_of_queues}).

%So, whenever a customer is in service in the network $T_n$, it will be also in service in the network $L_n$.
%So, from the customers' perspective the networks ${Q}_{l_{\max}}^{line}$, and the network $\hat{Q}_n^{tree}$ are the same, i.e.,
So, the departure times of every customer from every level $l$ ($l\in[1,...,l_{\max}]$) are, stochastically, the same in both systems.
The departures from level $l=1$ are the departures from the node $Z_1^1$, and thus the lemma holds.
\end{proof}

Now we are going to move one customer one queue backward, and will show that the resulting system will have stochastically larger (or the same) stopping time.

\begin{lemma}
\label{lemma:move_one_customer_back}
Consider a network ${Q}_{l_{\max}}^{line}$. Let $m$ be a level index: $m\in{[1,..,l_{\max}-1]}$. We take the last customer at node $Z_1^m$ and place it at the head of the queue of node $Z_1^{m+1}$, and call the resulting network -- $\grave{Q}_{l_{\max}}^{line}$ (Fig. \ref{fig:moving_customers_in_line} (b)).
%Let $d(Z_1^l)$ and $\grave{d}(Z_1^l)$ be a sequences of departures from the node $Z_1^l$ in ${Q}_{l_{\max}}^{line}$ and in $\grave{Q}_{l_{\max}}^{line}$ respectively.
Then:
\begin{align}
{d}_i(Z_1^1,{Q}_{l_{\max}}^{line})\preceq {d}_i(Z_1^1,\grave{Q}_{l_{\max}}^{line}) \text{ } \forall i\in[1,...,n].
\end{align}
Thus, in $\grave{Q}_{l_{\max}}^{line}$, the last customer will leave the system, stochastically, at the same time or later than in ${Q}_{l_{\max}}^{line}$, or: $t({Q}_{l_{\max}}^{line})\preceq t(\grave{Q}_{l_{\max}}^{line})$.
\end{lemma}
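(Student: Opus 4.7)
The plan is to build an explicit coupling between the two systems and propagate a stochastic domination of arrival sequences down the line, queue by queue. Using exponentiality, I would couple all service times at every queue in $\grave{Q}_{l_{\max}}^{line}$ and $Q_{l_{\max}}^{line}$ to be identical for the ``original'' customers, and introduce one extra fresh $S \sim \text{Exp}(\mu)$ that serves only the moved customer at its single additional visit to $Z_1^{m+1}$. Since all service times at a given queue are i.i.d.\ exponential, this is a valid coupling. Under it, I would prove by induction on $l$ from $l_{\max}$ down to $1$ that the sorted arrival-time sequence at $Z_1^l$ in $\grave{Q}_{l_{\max}}^{line}$ dominates coordinate-wise the corresponding sequence in $Q_{l_{\max}}^{line}$, and then lift this to departures via Lemma~\ref{lemma:later_arrivals}.

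For $l > m+1$ the two systems evolve identically and nothing has to be shown. For $l = m+1$ the only difference is the moved customer at the head of $Z_1^{m+1}$. Running the Lindley recursion $d_i = \max(d_{i-1}, \text{arrival}_i) + T_i$ in both systems with the coupled service times $T_i^{m+1}$ for originals and $S$ for the extra one, a one-line induction in the customer index gives $\grave{d}_i^{m+1} \geq d_i^{m+1}$ for every $i$.

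The key step is at $l = m$. In $Q_{l_{\max}}^{line}$ the arrivals at $Z_1^m$ are $c_m$ customers at time $0$ followed by $d_1^{m+1}, d_2^{m+1}, \ldots$; in $\grave{Q}_{l_{\max}}^{line}$ they are $c_m-1$ customers at time $0$, the moved customer at time $S$, and the remaining $\grave{d}_1^{m+1}, \grave{d}_2^{m+1}, \ldots \geq d_1^{m+1}, d_2^{m+1}, \ldots$. Sorting the $\grave{}$-sequence and matching it coordinate-wise with the $Q$-sequence, the first $c_m - 1$ entries coincide (both $0$), the $c_m$-th entry passes from $0$ to a positive value $\min(S, \grave{d}_1^{m+1})$, and every later position inherits the domination from level $m+1$. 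Applying Lemma~\ref{lemma:later_arrivals} at $Z_1^m$ upgrades this domination of arrivals into a domination of departures.

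Finally I would propagate downward: because departures from $Z_1^m$ in $\grave{}$ stochastically dominate those in $Q$, the arrivals at $Z_1^{m-1}$ also dominate, and a repeated application of Lemma~\ref{lemma:later_arrivals} through $Z_1^{m-1}, Z_1^{m-2}, \ldots, Z_1^1$ delivers $\grave{d}_i(Z_1^1, \grave{Q}_{l_{\max}}^{line}) \succeq d_i(Z_1^1, Q_{l_{\max}}^{line})$ for every $i$, and hence $t({Q}_{l_{\max}}^{line}) \preceq t(\grave{Q}_{l_{\max}}^{line})$. The main obstacle is the combinatorial bookkeeping at $Z_1^m$: the moved customer may reach $Z_1^m$ at essentially any time $S$ and interleave arbitrarily among the other arrivals. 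I expect the cleanest route is to work directly with sorted arrival sequences and exploit the coordinate-wise bound $\grave{d}_i^{m+1} \geq d_i^{m+1}$ from level $m+1$, rather than splitting into cases on where $S$ falls.
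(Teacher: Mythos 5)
Your proposal is correct and follows essentially the same route as the paper: the only change is the extra customer at the head of $Z_1^{m+1}$, which delays every departure from that queue (via the recursion $d_j=\max(d_{j-1},a_j)+X_j$), hence every arrival at $Z_1^m$ --- including the moved customer itself, whose arrival time goes from $0$ to something $\ge 0$ --- and the domination is then pushed down to $Z_1^1$ by repeated application of Lemma~\ref{lemma:later_arrivals}. Your explicit pathwise coupling and the sorted-arrival-sequence bookkeeping at $Z_1^m$ are a more careful rendering of the step the paper states informally, not a different argument.
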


\begin{proof}
We call the customer that was moved -- customer $c$.
Let us take a look at the times of arrivals to node $Z_1^{m}$ in ${Q}_{l_{\max}}^{line}$ and in $\grave{Q}_{l_{\max}}^{line}$.
Since customer $c$ is already located in the queue of $Z_1^m$ in ${Q}_{l_{\max}}^{line}$, its arrival time can be considered as $0$.
%\begin{itemize}
%\item The first arrival to $Z_1^{m}$ in ${Q}_{l_{\max}}^{line}$ can be considered as the arrival of $c$ at $a_1(Z_1^{m})=0$ (since $c$ is already in $Z_1^{m}$).
%\item The first arrival to $Z_1^{m}$ in ${Q}_{l_{\max}}^{line}$ will occur at $\grave{a}_1(Z_1^{m})=X_1$, where the r.v. $X_1$ is the service time of the customer $c$.
%\item The $i$'s arrival to $Z_1^{m}$ in ${Q}_{l_{\max}}^{line}$ will occur at $a_i(Z_1^{m})=X_2$, where the r.v. $X_1$ is the service time of the customer $c$.
%\end{itemize}
In $\grave{Q}_{l_{\max}}^{line}$, the arrival time of $c$ is at least $0$ (it should be serviced at $Z_1^{m+1}$ before arriving at $Z_1^m$). Each of the rest of the customers that should arrive at $Z_1^m$ will arrive in $\grave{Q}_{l_{\max}}^{line}$, stochastically, at the same time or later than in ${Q}_{l_{\max}}^{line}$, since in $\grave{Q}_{l_{\max}}^{line}$ the server $Z_1^{m+1}$ should first service the customer $c$, and only then start servicing the rest customers. Thus, ${d}_i(Z_1^{m+1},\grave{Q}_{l_{\max}}^{line})\succeq d_i(Z_1^{m+1},{Q}_{l_{\max}}^{line})$. Using Lemma \ref{lemma:later_arrivals} we obtain that: ${d}_i(Z_1^{m},\grave{Q}_{l_{\max}}^{line})\succeq d_i(Z_1^{m},{Q}_{l_{\max}}^{line})$. Iteratively applying Lemma \ref{lemma:later_arrivals} to the nodes $Z_1^l$, $l\in[m-1,...,1]$, we obtain the result: ${d}_i(Z_1^{1},\grave{Q}_{l_{\max}}^{line})\succeq d_i(Z_1^{1},{Q}_{l_{\max}}^{line})$.
\end{proof}

\begin{figure*}[ht]
\centering
\input{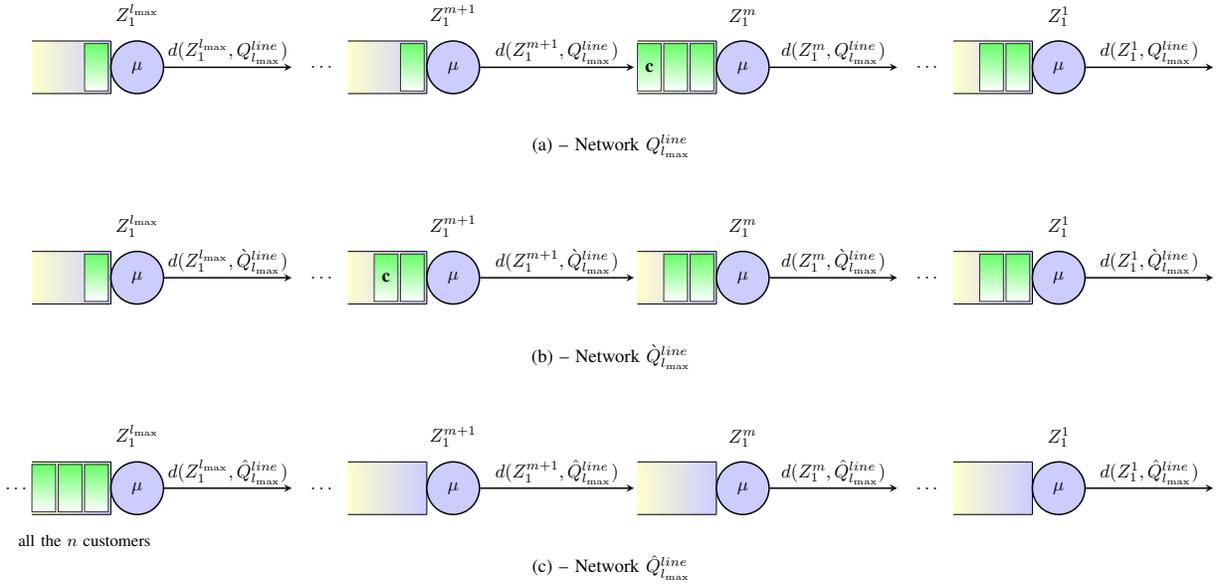}
\caption{(a) Network ${Q}_{l_{\max}}^{line}$.
(b) Network $\grave{Q}_n^{line}$, where one customer is moved one queue backward.
(c) Network $\hat{Q}_{l_{\max}}^{line}$, where all the customers are at the last queue.}
\label{fig:moving_customers_in_line}
\end{figure*}

\begin{corollary}
\label{corollary:take_all_customers_back}
Consider a network $\hat{Q}_{l_{\max}}^{line}$ (Definition \ref{dfn:line_of_queues_all_customer_back}) that is identical to the network ${Q}_{l_{\max}}^{line}$ with the following change. In $\hat{Q}_{l_{\max}}^{line}$, all $n$ customers are located at the node $Z_1^{l_{\max}}$ (Fig. \ref{fig:moving_customers_in_line} (c)).
%Let $d_i(Z_1^l)$ and $\hat{d}_i(Z_1^l)$ be a sequences of departures from the node $Z_1^l$ in ${Q}_{l_{\max}}^{line}$ and in $\hat{Q}_{l_{\max}}^{line}$ respectively.
Then:
\begin{align}
{d}_i(Z_1^1,{Q}_{l_{\max}}^{line})\preceq {d}_i(Z_1^1,\hat{Q}_{l_{\max}}^{line}) \text{ } \forall i\in[1,...,n].
\end{align}
Thus, in $\hat{Q}_{l_{\max}}^{line}$, the last customer will leave the system, stochastically, at the same time or later than in ${Q}_{l_{\max}}^{line}$, or: $t({Q}_{l_{\max}}^{line})\preceq t(\hat{Q}_{l_{\max}}^{line})$.
\end{corollary}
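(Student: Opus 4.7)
The plan is to derive Corollary \ref{corollary:take_all_customers_back} by iteratively applying Lemma \ref{lemma:move_one_customer_back}, which tells us that moving a single customer one queue backward (from $Z_1^m$ to the head of $Z_1^{m+1}$) produces a network whose departures from $Z_1^1$ are stochastically no earlier than in the original. Since stochastic dominance is transitive, a finite sequence of such single-customer backward moves transforms ${Q}_{l_{\max}}^{line}$ into $\hat{Q}_{l_{\max}}^{line}$ while only increasing (or keeping equal) every departure time from the root.

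The sequence of moves I would use is the following. Index the customers by their initial location. First, take the customer currently sitting at the front of $Z_1^1$ and push it backward one queue at a time: $Z_1^1 \to Z_1^2 \to Z_1^3 \to \cdots \to Z_1^{l_{\max}}$. Each individual step is exactly the setup of Lemma \ref{lemma:move_one_customer_back} (one customer moved from $Z_1^m$ to the head of $Z_1^{m+1}$, for $m=1,2,\ldots,l_{\max}-1$), so each step stochastically delays every departure from $Z_1^1$. After the first customer has been relocated to $Z_1^{l_{\max}}$, repeat for the next customer, and so on, until every one of the $n$ customers resides in $Z_1^{l_{\max}}$. Note that the queue at $Z_1^{l_{\max}}$ must be filled in the correct order so that the resulting configuration matches Definition \ref{dfn:line_of_queues_all_customer_back}; since all servers are exponential and hence memoryless, the order in which customers sitting at the back of $Z_1^{l_{\max}}$ are served does not affect the distribution of the aggregate departure process from that node, so any consistent ordering yields the same stopping-time distribution.

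Formally, let $Q^{(0)} = {Q}_{l_{\max}}^{line}$ and let $Q^{(k)}$ be the network obtained after the $k$-th single-customer backward move in the schedule above. There is some finite $K$ for which $Q^{(K)} = \hat{Q}_{l_{\max}}^{line}$. By Lemma \ref{lemma:move_one_customer_back} applied to the transition $Q^{(k-1)} \to Q^{(k)}$, we have
\begin{align*}
{d}_i(Z_1^1, Q^{(k-1)}) \preceq {d}_i(Z_1^1, Q^{(k)}) \quad \forall i\in[1,\ldots,n].
\end{align*}
Chaining these $K$ dominance relations via transitivity of $\preceq$ yields
\begin{align*}
{d}_i(Z_1^1, {Q}_{l_{\max}}^{line}) \preceq {d}_i(Z_1^1, \hat{Q}_{l_{\max}}^{line}) \quad \forall i\in[1,\ldots,n],
\end{align*}
and in particular the stopping time (the last departure, $i=n$) satisfies $t({Q}_{l_{\max}}^{line}) \preceq t(\hat{Q}_{l_{\max}}^{line})$.

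The main obstacle I anticipate is the bookkeeping needed to verify that each intermediate configuration $Q^{(k)}$ is a legal input to Lemma \ref{lemma:move_one_customer_back}: the lemma as stated moves a customer from some $Z_1^m$ to the \emph{head} of $Z_1^{m+1}$, and one needs to check that successive applications remain consistent with this hypothesis (e.g., that moving the next customer backward still corresponds to relocating ``the last customer at $Z_1^m$'' to the head of the upstream queue). This is essentially a careful choice of which customer to move at each step, combined with the memoryless property of the exponential servers that renders the internal ordering of the accumulated customers at $Z_1^{l_{\max}}$ irrelevant to the output process — so the obstacle is bookkeeping rather than a genuine probabilistic difficulty.
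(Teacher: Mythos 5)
Your proposal is correct and is essentially the paper's own argument: the paper likewise obtains the corollary by iteratively applying Lemma \ref{lemma:move_one_customer_back} (one customer, one queue backward at a time) and chaining the resulting stochastic dominances until all $n$ customers sit at $Z_1^{l_{\max}}$. The only difference is scheduling: the paper always moves the \emph{tail} customer of some queue, which keeps every intermediate step literally within the lemma's hypothesis and sidesteps the head-versus-tail bookkeeping you flag (and which you correctly resolve via exchangeability of customers).
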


\begin{proof}
Given the network ${Q}_{l_{\max}}^{line}$, we take one customer from the tail of some queue (except the queue of node $Z_1^{l_{\max}}$) and place it at the head of the queue of the preceding node in the ${Q}_{l_{\max}}^{line}$. According to Lemma \ref{lemma:move_one_customer_back}, we get a network in which every customer leaves via $Z_1^1$, stochastically, not earlier than in ${Q}_{l_{\max}}^{line}$. Iteratively moving customers (one customer and one queue at a time) backwards we finally get the network $\hat{Q}_{l_{\max}}^{line}$ in which all $n$ customers are located at node $Z_1^{l_{\max}}$. Since at each step, according to Lemma \ref{lemma:move_one_customer_back}, the departure times from $Z_1^1$ could only get, stochastically, larger, the lemma holds.
\end{proof}

\begin{corollary}
\label{corollary:tree_is_slower_than_line}
The time it will take the last customer to leave the network of $n$ queues arranged in a \emph{tree} topology is, stochastically, the same or smaller than in the network of $n$ queues arranged in a \emph{line} topology where all $n$ customers are located at the farthest queue, i.e., $t({Q}_n^{tree})\preceq t(\hat{Q}_{l_{\max}}^{line})$.
\end{corollary}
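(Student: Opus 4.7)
The plan is to obtain this corollary directly by chaining together the three preceding results, so essentially no new argument is needed---the work has already been done in Lemmas \ref{lemma:idle_server}, \ref{lemma:tree_as_line}, and Corollary \ref{corollary:take_all_customers_back}. First I would invoke Lemma \ref{lemma:idle_server} to move from the original tree network to the auxiliary one with the restricted per-level scheduling, giving $t({Q}_n^{tree})\preceq t(\hat{Q}_n^{tree})$. Then I would apply Lemma \ref{lemma:tree_as_line} to pass from the restricted tree to the merged line network $Q_{l_{\max}}^{line}$, where the two systems have the same distribution of departure times from $Z_1^1$, yielding $t(\hat{Q}_n^{tree}) \approx t({Q}_{l_{\max}}^{line})$. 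Finally, Corollary \ref{corollary:take_all_customers_back} gives $t({Q}_{l_{\max}}^{line})\preceq t(\hat{Q}_{l_{\max}}^{line})$.

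To glue these together I need transitivity of the stochastic order $\preceq$ (Definition \ref{dfn:stoch_dom}), which is immediate from the definition: if $\Pr(X\le t)\ge \Pr(Y\le t)$ and $\Pr(Y\le t)\ge \Pr(Z\le t)$ for every $t$, then $\Pr(X\le t)\ge \Pr(Z\le t)$, i.e., $X\preceq Z$. Composing the three steps gives the chain
\begin{align*}
t({Q}_n^{tree})\preceq t(\hat{Q}_n^{tree}) \approx t({Q}_{l_{\max}}^{line}) \preceq t(\hat{Q}_{l_{\max}}^{line}),
\end{align*}
from which $t({Q}_n^{tree})\preceq t(\hat{Q}_{l_{\max}}^{line})$ follows.

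Since every nontrivial step has already been established in the earlier lemmas, there is really no obstacle here; the only thing to be careful about is that the ``$\approx$'' of Lemma \ref{lemma:tree_as_line} is distributional equality of the departure sequence from the root, which clearly is consistent with stochastic dominance in both directions and therefore composes with $\preceq$ as required. Thus the corollary is an immediate consequence of transitivity and the proof amounts to writing out this chain.
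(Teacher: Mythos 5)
Your proof matches the paper's: the corollary is obtained exactly as you describe, by chaining Lemma \ref{lemma:idle_server}, Lemma \ref{lemma:tree_as_line}, and Corollary \ref{corollary:take_all_customers_back} via transitivity of $\preceq$. The paper simply states it as a direct consequence of those three results; your explicit note on transitivity and on the distributional equality in the middle step is a harmless elaboration of the same argument.
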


\begin{proof}
This corollary is a direct consequence of Lemmas \ref{lemma:idle_server}, \ref{lemma:tree_as_line}, and the Corollary \ref{corollary:take_all_customers_back}.
\end{proof}

Now we are ready for the last step of the proof. We will find the stopping time of a system of queues arranged in a line topology and with all the customers located at the last queue.

\begin{lemma}
\label{lemma:line_stopping_time_with_k}
The time it will take for the last customer to leave system $\hat{Q}_{l_{\max}}^{line}$ ($l_{\max}$ queues arranged in a line topology) is $O(n/\mu)$ with high probability. Formally, for any $\alpha>1$:
\begin{align}
\Pr \left(t(\hat{Q}_{l_{\max}}^{line})< \alpha 4n/\mu\right) > 1-2(2e^{-\alpha/2})^n.
\end{align}
 
\end{lemma}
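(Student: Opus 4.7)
The plan is to mimic the strategy used in the proof of Theorem \ref{thm:ring_is_linear}, adapted to the line $\hat{Q}_{l_{\max}}^{line}$. First I would further inflate the stopping time by converting the closed initial configuration into an open Jackson network: remove all $n$ real customers from $Z_1^{l_{\max}}$ and let them re-enter that same queue as a Poisson arrival process of rate $\lambda = \mu/2$, and then populate every queue with dummy customers drawn from the stationary product-form distribution. As argued in the Ring proof, both modifications only slow things down, since removing and re-injecting postpones each real customer's first service attempt, and the dummies only lengthen the queueing delays experienced by the real customers.

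With these modifications, Jackson's theorem applies to the resulting open tandem network: the utilisation at every queue is $\rho = \lambda/\mu = 1/2 < 1$, so an equilibrium distribution exists and the queue lengths are independent in the stationary regime. I would split the total time into $T^a + T^b$, where $T^a$ is the arrival time of the $n$th real customer to $Z_1^{l_{\max}}$ and $T^b$ is the sojourn time of that customer through the $l_{\max}$ queues on its way to $Z_1^1$. By construction, $T^a$ is a sum of $n$ i.i.d.\ $\mathrm{Exp}(\mu/2)$ inter-arrival times, and by Lemma \ref{lemma:waiting_distr} each of the $l_{\max}$ per-queue traversal times in equilibrium is $\mathrm{Exp}(\mu - \lambda) = \mathrm{Exp}(\mu/2)$, independent across queues by Jackson's theorem. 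Since $l_{\max} \le n$, the random variable $T^b$ is stochastically dominated by a sum of $n$ i.i.d.\ $\mathrm{Exp}(\mu/2)$ variables, so both $T^a$ and (an upper bound on) $T^b$ have the same distributional form with mean $2n/\mu$.

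The high-probability bound then falls out of Lemma \ref{lemma:sum_of_exp_bounded1} applied with $\mu$ replaced by $\mu/2$ and mean $2n/\mu$: for any $\alpha > 1$,
\begin{align*}
\Pr\!\left(T^a < \alpha \cdot 2n/\mu\right) &> 1 - (2e^{-\alpha/2})^n, \\
\Pr\!\left(T^b < \alpha \cdot 2n/\mu\right) &> 1 - (2e^{-\alpha/2})^n.
\end{align*}
A union bound over these two events, combined with $t(\hat{Q}_{l_{\max}}^{line}) \le T^a + T^b$, yields
\begin{align*}
\Pr\!\left(t(\hat{Q}_{l_{\max}}^{line}) < \alpha \cdot 4n/\mu\right) > 1 - 2(2e^{-\alpha/2})^n,
\end{align*}
which is exactly the claimed bound.

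The only subtle step, and the one I expect to need the most care in writing up, is the justification that the two modifications (externalising the arrivals and padding with equilibrium-distributed dummies) genuinely produce a stochastically larger stopping time than the original closed system; this is the same ``make the system slower'' manoeuvre already used in Theorem \ref{thm:ring_is_linear}, and a clean reference to that argument should suffice, avoiding a fresh coupling proof here.
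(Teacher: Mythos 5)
Your proposal is correct and follows essentially the same route as the paper's own proof: externalise the $n$ real customers as a Poisson($\mu/2$) arrival stream into $Z_1^{l_{\max}}$, pad the queues with stationary dummy customers, invoke Jackson's theorem, decompose the stopping time into the arrival time of the $n$th customer plus its traversal of the $l_{\max}\le n$ queues (each an independent $\mathrm{Exp}(\mu-\lambda)$ sojourn by Lemma \ref{lemma:waiting_distr}), and finish with Lemma \ref{lemma:sum_of_exp_bounded1} and a union bound. The paper likewise justifies the slowdown steps by appeal to the argument in Theorem \ref{thm:ring_is_linear} rather than a fresh coupling, so no gap remains.
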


\begin{proof}
Initially, all the customers (from now we will call them \emph{real} customers) are located in the last ($Z_1^{l_{\max}}$) queue. We now take all the \emph{real} customers out of this queue and will make them enter the system (via $Z_1^{l_{\max}}$) from outside. We define the \emph{real} customers' arrivals as a Poisson process with rate $\lambda= \frac{\mu}{2}$. So, $\rho=\frac{\lambda}{\mu}=\frac{1}{2}<1$ for all the queues in the system. Clearly, such an assumption only increases the stopping time of the system (stopping time is the time until the last customer leaves the system).
According to Jackson's theorem, whose proof can be found in \cite{Chen2001Fundamentals}, there exists an equilibrium state. So, we need to ensure that the lengths of all queues at time $t=0$ are according to the equilibrium state probability distribution. We add \emph{dummy} customers to all the queues according to the stationary distribution. By adding additional \emph{dummy} customers
%(we call them \emph{dummy} since their arrivals are not counted as a rank increment)
to the system, we make the \emph{real} customers wait longer in the queues, thus increasing the stopping time.

We will compute the stopping time $t(\hat{Q}_{l_{\max}}^{line})$ in two phases:
%First, we will find the time it takes the $k$'th (last) \emph{real} customer to arrive at the rightmost node, i.e., node $Z_1^{l_{\max}}$. By that time, the rank of node $Z_1^1$ will become $k$ and it will finish the algebraic gossip protocol (i.e., it received $k$ \emph{helpful messages}).
Let us denote this time as $t_1+t_2$, where $t_1$ is the time needed for the $n$'th customer to arrive at the first queue, and $t_2$ is the time needed for the $n$'th customer to pass through all the $l_{\max}$ queues in the system.

From Jackson's Theorem, it follows that the number of customers in each queue is independent, which implies that the random variables that represent the waiting times in each queue are independent.
%To continue with the proof we need the following lemmas; the first is a classical result from queuing theory, the proof of the second lemma is omitted.
%\begin{lemma}[\cite{1378238}, section 4.3]
%\label{lemma:waiting_distr}
%Time needed to cross one $M/M/1$ queue in the equilibrium state has an exponential distribution with parameter
%$\mu -\lambda$.
%\end{lemma}
%\begin{lemma}\label{lemma:sum_of_exp_bounded1}
%Let $Y$ be the sum of $n$ independent and identically distributed exponential random variables.
%Then, for $\alpha>1$:
%\begin{align}
%\Pr \left(Y < \alpha\text{E}\left[Y\right]\right) > 1-(2e^{-\alpha/2})^n.
%\end{align}
%\end{lemma}

The random variable $t_1$ is the sum of $n$ independent random variables distributed exponentially with parameter $\mu/2$. From Lemma \ref{lemma:waiting_distr} we obtain that $t_2$ is the sum of $l_{\max}$ independent random variables distributed exponentially with parameter $\mu-\lambda=\mu/2$ (Lemma \ref{lemma:waiting_distr}). Since $l_{max}\le n$, we can assume the worst case (for the upper bound of stopping time) $l_{max}=n$. Thus, we can view $t_2$ as the sum of $n$ independent random variables distributed exponentially with parameter $\mu/2$.
$\text{E}\left[t_1\right]=\sum_{i=1}^n 2/\mu=2n/\mu$, so, using Lemma \ref{lemma:sum_of_exp_bounded1}:
\begin{align}
%\\ \Pr \left(t_1 < (4k+8\ln n)/\mu\right) &> 1-(2e^{-(2+4\tfrac{\ln n}{k})/2})^k
%\\&=1-(\tfrac{2}{e})^k e^{-2\ln n}
%\\&\ge 1-e^{-2\ln n}
%\\&\ge 1-\tfrac{1}{n^2}.
\Pr \left(t_1 < \alpha\text{E}\left[t_1\right]\right) &> 1-(2e^{-\alpha/2})^n,
\\\Pr \left(t_1 < \alpha 2n/\mu\right) &> 1-(2e^{-\alpha/2})^n.
\end{align}
In a similar way we obtain:
\begin{align}
Pr \left(t_2 < \alpha 2n/\mu\right) &> 1-(2e^{-\alpha/2})^n.
\end{align}
$t(\hat{Q}_{l_{\max}}^{line})=t_1+t_2$; thus, using union bound:
\begin{align}
&\Pr \left(t_1+t_2 < \alpha 4n/\mu\right)>1-2(2e^{-\alpha/2})^n.
\\\notag &\text{and thus:}
\\ & t(\hat{Q}_{l_{\max}}^{line})=O(n/\mu) \text{ (for a constant $\alpha$)}
\\\notag&\text{ w.p. of at least } 1-2(2e^{-\alpha/2})^n.
\end{align}
\end{proof}

From Claim \ref{corollary:tree_is_slower_than_line} we obtain that $t({Q}_n^{tree})\preceq t(\hat{Q}_{l_{\max}}^{line})$ and thus: $t({Q}_n^{tree})< \alpha 4n/\mu$ w.p. of at least $1-2(2e^{-\alpha/2})^n$ for any $\alpha>1$, so the proof of Theorem \ref{thm:tree_of_queues} is completed.
\end{proof}

%\end{appendices}

\end{document}